\documentclass[12pt]{article}
\usepackage{amsmath,amssymb,bm}
\usepackage{epsfig}
\usepackage[mathscr]{euscript}
\usepackage[affil-it]{authblk}
\usepackage{color}
\usepackage{amsthm}
\usepackage{titlesec}

%



%

\hoffset -15mm
\setlength\paperheight {297mm}   
\setlength\paperwidth  {210mm}   %
\renewcommand\topfraction{.95}   
\setcounter{bottomnumber}{1}
\setcounter{totalnumber}{4}

\setcounter{dbltopnumber}{3}
\textwidth  .77\paperwidth
\textheight .74\paperheight
\addtolength\textheight{0.5\topskip}
\voffset -0.4in
\topmargin   .01\paperheight    
\headheight  .01\paperheight    
\headsep     .02\paperheight    
\footskip    .05\paperheight    
\parskip=2mm
\parindent0em

%
\renewcommand{\theequation}{\thesection.\arabic{equation}}
\renewcommand{\thefootnote}{\fnsymbol{footnote}}
%

\numberwithin{equation}{section}

\setlength{\parskip}{2mm}
\newlength{\extraspace}
\setlength{\extraspace}{.5mm}
\newlength{\extraspaces}
\setlength{\extraspaces}{2.5mm}

\newcommand{\be}{\begin{equation}
	\addtolength{\abovedisplayskip}{\extraspaces}
	\addtolength{\belowdisplayskip}{\extraspaces}
	\addtolength{\abovedisplayshortskip}{\extraspace}
	\addtolength{\belowdisplayshortskip}{\extraspace}}
\newcommand{\ee}{\end{equation}}

\newcommand{\ba}{\begin{eqnarray}
	\addtolength{\abovedisplayskip}{\extraspaces}
	\addtolength{\belowdisplayskip}{\extraspaces}
	\addtolength{\abovedisplayshortskip}{\extraspace}
	\addtolength{\belowdisplayshortskip}{\extraspace}}
\newcommand{\ea}{\end{eqnarray}}

\newcommand{\bas}{\begin{eqnarray*}
		\addtolength{\abovedisplayskip}{\extraspaces}
		\addtolength{\belowdisplayskip}{\extraspaces}
		\addtolength{\abovedisplayshortskip}{\extraspace}
		\addtolength{\belowdisplayshortskip}{\extraspace}}
	\newcommand{\eas}{\end{eqnarray*}}

\newcounter{subequation}[equation]
\makeatletter

\expandafter\let\expandafter
\reset@font\csname reset@font\endcsname

\def\subeqnarray{\arraycolsep1pt
	\def\@eqnnum\stepcounter##1{\stepcounter{subequation}%
		{\reset@font\rm(\theequation\alph{subequation})}}
	\jot5mm     \eqnarray}

\def\subarray{\arraycolsep1pt
	\def\@eqnnum\stepcounter##1{\stepcounter{subequation}%
		{\reset@font\rm(\alph{subequation})}}
	\jot5mm     \eqnarray}

\makeatother





\titleformat{\section}
{\normalfont\large\bfseries}{\thesection.}{0.5em}{}	

\titlespacing*{\section}
{0pt}{0pt}{4ex}

\titleformat{\subsection}
{\normalfont\bfseries}{\thesubsection}{0.5em}{}

\titlespacing*{\subsection}
{0pt}{6ex}{1.3ex}


\theoremstyle{definition}{
}

\theoremstyle{theorem}{
\newtheorem{theorem}{Theorem}[section]}

\theoremstyle{remark}{}
\theoremstyle{theorem}{
\newtheorem{proposition}{Proposition}[section]}

\theoremstyle{corollary}{
\newtheorem{corollary}[theorem]{Corollary}}

\theoremstyle{lemma}{
\newtheorem{lemma}[theorem]{Lemma}}




%

%



%

%



\begin{document}
	
\begin{titlepage}
	
\renewcommand{\thefootnote}{\fnsymbol{footnote}}
\makebox[1cm]{}
\vspace{1cm}
	
\begin{center}
	
\mbox{{\Large \bf  Bonus Properties of States of Low Energy}}\\[3mm]
\mbox{{\Large \bf }}
\vspace{2.8cm}
		
{\sc Rudrajit Banerjee}\footnote{email: {\tt rub18@pitt.edu}} 
{\sc and Max Niedermaier\footnote{email: {\tt mnie@pitt.edu}}}
\\[8mm]
{\small\sl Department of Physics and Astronomy}\\
{\small\sl University of Pittsburgh, 100 Allen Hall}\\
{\small\sl Pittsburgh, PA 15260, USA}
\vspace{18mm}
		
{\bf Abstract} \\[1mm]
\begin{quote}
States of Low Energy (SLE) are exact Hadamard states defined on 
arbitrary Friedmann-Lema\^{i}tre spacetimes. They are constructed from 
a fiducial state by minimizing the Hamiltonian's 
expectation value after averaging with a temporal window function. 
We show the SLE to be expressible solely in terms of the 
(state independent) commutator function. They also admit a
convergent series expansion in powers of the spatial momentum,
both for massive and for massless theories. In the massless case
the leading infrared behavior is found to be Minkowski-like for 
{\it all} scale factors. This provides a new cure for the    
infrared divergences in Friedmann-Lema\^{i}tre spacetimes 
with accelerated expansion. In consequence, massless SLE are viable 
candidates for pre-inflationary vacua and in a soluble model are 
shown to entail a qualitatively correct primordial power spectrum. 
\end{quote} 
\end{center}
\vfill
	
\setcounter{footnote}{0}
\end{titlepage}


\setlength\paperheight {305mm}   %
\renewcommand\topfraction{.99}   
\thispagestyle{empty}
\makebox[1cm]{}
\vspace{-23mm}
\begin{samepage}
	\tableofcontents
\end{samepage}
%
%
%
\newpage


\section{Introduction}\label{sec1}

For perturbatively defined quantum field theories on globally 
hyperbolic spacetimes there is a general consensus that the free 
state on which perturbation theory is based should be a Hadamard 
state. By-and-large the Hadamard property is necessary and sufficient 
for the existence of Wick powers of arbitrary order and hence for 
the perturbative series to be termwise well-defined at any order,
see \cite{Moretti,Hadamardnec} for recent accounts. On the other hand, 
Hadamard states are surprisingly difficult to construct concretely
\cite{HadamardNullbook,JunkerS,SJHadamard} even for background 
spacetimes with some degree of symmetry (other than maximal). 
The well-known adiabatic iteration \cite{ParkerTomsbook} has 
certain characteristics necessary for the Hadamard property 
built in, but is not convergent and cannot be fruitfully 
extended to small spatial momenta. The iteration can, however, serve 
as a conduit to establish the existence of states locally 
indistinguishable from Hadamard states \cite{JunkerS}.

An important class of backgrounds are generic Friedmann-Lema\^{i}tre 
cosmologies, where a construction of exact Hadamard states has become 
available only relatively recently \cite{Olbermann}. These 
States of Low Energy (SLE) arise by minimizing the Hamiltonian's 
expectation value {\it after} 
averaging with a temporal window function $f$. The temporal averaging is 
crucial and avoids the pathologies \cite{FRWHamdiag0} 
of the  earlier instantaneous diagonalization procedure. The 
construction of a SLE takes some fiducial solution $S$ of the 
homogeneous wave equation a starting point, considers 
arbitrary Bogoliubov transformations thereof, and then minimizes 
the temporal average of the energy   
with respect to them. Olbermann's theorem \cite{Olbermann} states 
that (for a massive free quantum field theory an a Friedmann-Lema\^{i}tre 
background) 
the minimizing solution $T[S]$ gives rise to an exact Hadamard state.
For given $S$ the minimizer $T[S]$ is unique up to a phase.    

Here we show that the SLE have a number of bonus properties 
that make them mathematically even more appealing and which 
also render them good candidates for vacuum-like states in a 
pre-inflationary period. Specifically, we show that for a given temporal averaging function $f$: 
\begin{itemize}
\item[(a)] The SLE two-point function $W[S]$ based on a fiducial solution $S$ is a Bogoliubov invariant, $W[a S+bS^\ast]=W[S]$, with $a,b\in \mathbb{C}$ , $|a|^2-|b|^2=1$. Hence $W[S]$ is independent of the choice of fiducial solution $S$.

\item[(b)] The minimization over Bogoliubov parameters relative to a given $S$ 
can be replaced by a minimization over initial data, without 
reference to any fiducial solution. The resulting expression 
for the SLE solution $T[\Delta]$ is fully determined by the 
(Bogoliubov invariant and state independent) commutator function 
$\Delta$, making manifest the uniqueness of the SLE. The minimization 
over initial data has a natural interpretation in the Schr\"{o}dinger 
picture.   
\item[(c)] The SLE solution admits a {\it convergent} series expansion  
in powers of the {(modulus of the)} spatial momentum, both for massive and for massless
theories. 
\item[(d)] In the massless case the leading infrared  
behavior is Minkowski-like for {\it all} cosmological scale 
factors. This provides a new cure for the long standing 
infrared divergences in Friedmann-Lema\^{i}tre backgrounds with 
accelerated expansion \cite{FRWinfra1}. 
\item[(e)] The modulus square of an SLE solution admits an 
asymptotic expansion in inverse odd powers of the  {(modulus of the)}  spatial momentum, which {is {\it independent of the window function $f$}. The coefficients of the expansion are {\it local}, recursively computable, and  generalize 
the heat kernel coefficients.} The asymptotics of the phase is 
governed by single integrals of the same coefficients. This short cuts 
the detour via the adiabatic expansion.
\end{itemize} 

Since linearized cosmological perturbations are described by 
massless free fields, the property (d) renders SLE a legitimate 
choice for a vacuum-like state in the early universe. Specifically, 
we argue that within the standard paradigm (classical 
Friedmann-Lema\^{i}tre backgrounds with selfinteracting scalar field) 
inflation must have been preceded by a period of non-accelerated 
expansion, for which the type with kinetic energy domination is 
mathematically preferred. The occurrence 
of the Bunch-Davies vacuum at the onset of inflation then requires extreme fine tuning.
In contrast, postulating 
a SLE for the primordial vacuum in the pre-inflationary phase 
is shown to automatically produce a qualitatively realistic 
power spectrum at the end of inflation.   

The paper is organized as follows. After introducing the SLE 
in the Heisenberg and the Schr\"{o}dinger pictures we establish 
properties (a) and (b) in Sections \ref{sec2.2} and \ref{sec2.3}, respectively. 
The existence of a convergent small momentum expansion is 
shown in Section \ref{sec3.1}, with the massless case detailed in 
Section \ref{sec3.2}. For large momentum, the existence of the WKB type 
expansion governed by generalized heat kernel coefficients 
is shown in Section \ref{sec4}. Finally, we study the viability of     
massless SLE as pre-inflationary vacua in Section \ref{sec5}.

\newpage 
\section{SLE in the Heisenberg and Schr\"{o}dinger pictures} \label{sec2}

A State of Low Energy (SLE) was originally defined in 
the Heisenberg picture by minimizing with respect to Bogoliubov 
parameters relating the corresponding solution of the wave 
equation to a reference solution $S$. As such, the SLE construction depends on 
the reference solution. Here we show that the SLE two-point function (which specifies the state completely) is independent of  $S$.
Next, the energy functional in the Schr\"{o}dinger picture  is naturally 
regarded as a function of the wave function's  initial data. 
By minimizing over initial data an alternative explicit expression 
for the SLE is obtained, which depends only on the 
(Bogoliubov invariant and state independent) commutator function.

\subsection{Homogeneous pure quasifree states in Heisenberg and 
Schr\"{o}dinger pictures}  \label{sec2.1}
Throughout, the background geometry will be a $1\!+\!d$ dimensional, 
spatially flat Friedmann-Lema\^{i}tre (FL) cosmology with line element
\begin{eqnarray}
	\label{FL} 
ds^2 = -\bar{N}(t)^2 dt^2 + a(t)^2 \delta_{ij} dx^i dx^j\,,
\end{eqnarray}
where $\bar{N}: \mathbb{R}_+ \rightarrow \mathbb{R}_+$ is the lapse function, 
$a: \mathbb{R}_+ \rightarrow \mathbb{R}_+$ is the cosmological scale factor, and 
$x^i, i=1,\ldots, d$ are adapted spatial coordinates. 
The form of the line element (\ref{FL}) is 
preserved under ${\rm Diff}[t_i,t_f] \times {\rm ISO}(d)$ 
transformations, where ${\rm Diff}[t_i,t_f]$ are endpoint 
preserving reparameterizations of some time interval $[t_i,t_f]$, 
$0 < t_i < t_f < \infty$, and the Euclidean group ${\rm ISO}(d)$ 
acts via global spatial diffeomorphisms connected to the identity. 
On this background we consider a scalar field 
$\chi:\mathbb{R}_+\times \mathbb{R}^d \rightarrow \mathbb{R}$, which is minimally coupled 
and initially selfinteracting with potential $U(\chi)$. 
Under the temporal reparameterizations 
$a(t)$ and $\chi(t,x)$ transform as scalars, while
$\bar{N}(t)$ and $\bar{n}(t) := \bar{N}(t)/a(t)^d$ are temporal densities, 
$\bar{n}'(t') = \bar{n}(t)/|\partial t'/\partial t|$, etc.. 
This is such that $\int_{t_i}^{t_f} \! dt \bar{N}(t) a(t)^p = 
\int_{t_i}^{t_f} \! dt \,\bar{n}(t) a(t)^{p+d}$ is  invariant 
for any $p$. Next, we expand the minimally coupled scalar field 
action on $[t_i, t_f]\times \mathbb{R}^d$ around a spatially homogeneous 
background scalar $\varphi(t)$ to quadratic order in the fluctuations
$\phi(t,x) := \chi(t,x) - \varphi(t)$. This gives a leading term 
$\bar{S}^{\varphi}$ (multiplied by a spatial volume term) whose field 
equation is one of the evolution equations for a FL cosmology. 
For $\varphi(t)$ solving it (with prescribed $a(t)$)
the term linear in the $\phi$ reduces to a boundary term
and may be omitted. The quadratic piece reads
\begin{eqnarray}
	\label{FLact}
S^{\phi} = \frac{1}{2}\int_{t_i}^{t_f} \! dt \int_{\Sigma} \! dx \,
\Big\{ \frac{1}{\bar{n}(t)} (\partial_t \phi)^2 - 
\bar{n}(t) a(t)^{2d} U''(\varphi) \phi^2 
- \bar{n}(t) a(t)^{2 d -2} \partial_i \phi 
\delta^{ij} \partial_j \phi\Big\}\,.
\end{eqnarray}
So far, $\varphi$ is for prescribed $a(t)$ a solution 
of $\- \partial_t(\bar{n}^{-1} \partial_t \varphi) + \epsilon_g \bar{n} a^{2d} 
U'(\varphi) =0$, but $a(t)$ itself is unconstrained. 
As far as the homogeneous background is concerned one 
could now augment the missing gravitational dynamics by the 
other FL field equations. This would turn $a(t), \varphi(t)$ into 
a solution of the Einstein equations and classical backreaction 
effects would be taken into account in the homogeneous sector. 
The standard ``Quantum Field Theory (QFT) on curved background'' 
viewpoint, on the other hand,  
treats the geometry as external, in which case 
(\ref{FLact}) adheres to the minimal coupling principle only 
if $U''(\varphi) = m_0^2$ is identified with a constant mass squared. 
In order to be able to switch back and forth between both 
settings we shall view $U''(\varphi) = m(t)^2$ formally as a time 
dependent mass and carry it along, specifying its origin
only when needed. In the field equations $\delta S^{\phi}/\delta \phi =0$ 
a spatial Fourier transform is natural, 
$\phi(t,x) = \int\! dp (2 \pi)^{-d} e^{ip x} \phi(t,p)$. Then 
$-\partial_i \delta ^{ij} \partial_j$ acts like $p^2 := 
p_i \delta^{ij} p_i$, which converts the field equation into 
an ordinary differential equation for each $p$ mode, viz
$[(\bar{n}^{-1} \partial_t)^2 + a(t)^{2d} m(t)^2 + a(t)^{2 d -2} p^2] 
\phi(t,p) =0$. 
\medskip

{\bf Homogeneous pure quasifree states.} 
On a FL background geometry there are, in general, infinitely 
many physically viable vacuum-like states for a QFT. A vacuum-like 
state is in particular a ``homogeneous pure quasifree'' state.
A ``state'' is normally defined algebraically as a positive 
linear functional over the Weyl algebra \cite{Moretti}. For the present 
purposes a ``state'' can be identified with the set of 
multi-point functions it gives rise to. Then ``quasifree'' 
means that all odd $n$-point functions 
in the state vanish while the even $n$-point functions can be 
expressed in terms of the two-point function $W(t,x;t',x')$ 
via Wick's theorem. Being a ``state'' entails certain properties 
of the two-point function that allow one to realize it
via the Gelfand-Naimark-Segal (GNS) construction in the form $(\Omega, u(t,x)
u(t',x') \Omega)$, for field operators $u(t,x)$ on vectors $\Omega$ 
 in the reconstructed state space.  ``Pure'' means 
that $\Omega$ cannot be written as a convex combination of other 
states. Finally, for a spatially flat FL background, ``homogeneous'' 
just means ``translation invariant'', i.e. $W(t,x;t',x')$ 
depends only on $x\!-\!x'$. 

The GNS reconstructed field operators $u(t,x)$ turn out to coincide with 
the Heisenberg field operators $\phi(t,x)$ (which are denoted 
by the same symbol as the classical field, as the latter will no longer 
occur.) The GNS vector $\Omega$ turns out to correspond to a Fock 
vacuum $|0_T\rangle$, annihilated by annihilation operators defined by 
a mode expansion of the Heisenberg field operator 
\begin{eqnarray}
	\label{phiexp} 
&& \phi(t,x) = \int\! \frac{dp}{(2\pi)^d} 
\big[ T_p(t) {\bf a}_T(p) e^{i px} + T_p(t)^* {\bf a}_T^*(p) e^{-i px} \big] \,,
\nonumber \\[1.5mm]
&& \big[ {\bf a}_T(p), {\bf a}_T^*(p')] = (2\pi)^d \delta(p-p')\,,
\quad {\bf a}_T(p) |0_T\rangle =0\,, 
\end{eqnarray}
where $T_p(t)$ is a complex solution of the above 
classical wave equation, and in the massless case $p=0$ needs to 
be excluded in the definition of $|0_T\rangle$. In order for 
the equal time commutation relations $[\phi(t,p), 
(\bar{n}^{-1} \partial_t \phi)(t,p')]  = i (2\pi)^d \delta(p + p')$ to 
hold, this solution must obey the Wronskian normalization 
condition $(\bar{n}^{-1} \partial_t T_p)(t) T_p(t)^* - 
(\bar{n}^{-1} \partial_t T_p)(t)^* T_p(t)= -i$. Then 
\begin{eqnarray}
	\label{FLcorr1} 
W(t,x;t',x') = \langle 0_T| \phi(t,x) \phi(t',x') |0_T\rangle
= \int\! \frac{dp}{(2 \pi)^d} 
\, T_p(t) \,T_p(t')^\ast \, e^{i p (x-x')}\,.
\end{eqnarray}
One sees that modulo phase choices a ``homogeneous pure quasifree'' 
state is characterized by a choice of Wronskian normalized solution 
$T_p(t)$ of the wave equation or, equivalently, by a choice of Fock vacuum 
$|0_T\rangle$ via (\ref{phiexp}).  
\medskip

{\bf Conventions.}
We briefly comment on our choice of conventions. In 
(\ref{phiexp}) often the ${\bf a}_T^*(p)$ is paired
with $T_p(t)$ not with $T_p(t)^*$. Then the sign 
in the Wronskian normalization condition has to 
be flipped correspondingly. 
More importantly, we seek to preserve temporal reparameterization 
invariance by carrying the lapse-like $\bar{n}(t) = \bar{N}(t)/a(t)^d$ 
along. Since in the wave equation $\bar{n}$ only occurs in the 
combination $\bar{n}^{-1} \partial_t$, it is convenient to introduce 
a new time function
\begin{eqnarray}
	\label{taudef} 
\tau := \int^t_{t_i} \! dt' \bar{n}(t') \,, \quad 
\partial_{\tau} = \bar{n}(t)^{-1} \partial_t\,,
\end{eqnarray}
for some $t_i$. Note that $\tau(t) = \tau'(t')$ is a scalar under 
reparameterizations $t' = \chi^0(t)$ of the coordinate time $t$, 
and that $d\tau = dt \bar{n}(t)$, $\bar{n}(t)^{-1} 
\delta(t,t') = \delta(\tau,\tau')$ are likewise invariant. 
Here $t' = \chi^0(t)$ with $\chi^0(t_i) = t_i < t_f 
= \chi^0(t_f)$ must be strictly increasing to qualify as a 
diffeomorphism. We write $a(\tau)$ for the cosmological scale factor 
viewed as a function of $\tau$ rather than $t$, and similarly for 
$m(\tau)$ as well as $T_p(\tau)$. The defining relations for $T_p(\tau)$ 
then read
\begin{eqnarray}
	\label{FLode1} 
&& \big[ \partial_{\tau}^2 + \omega_p(\tau)^2 ] T_p(\tau) =0\,,\quad 
\omega_p(\tau)^2 := a(\tau)^{2d} m(\tau)^2 + p^2 a(\tau)^{2 d -2}\,, 
\nonumber \\[1.5mm]
&& \partial_{\tau} T_p \,T_p^* - \partial_{\tau} T_p^* \,T_p = -i \,.
\end{eqnarray}
This setting has the advantage that the results in different 
time variables can be obtained by specialization:
\begin{eqnarray}
	\label{tgauges} 
\mbox{Cosmological time}&:& \bar{n}(t) = a(t)^{-d} \;\mbox{gauge, i.e.}
\;\bar{N}(t) =1\,,
\nonumber \\[1.5mm]
\mbox{Conformal time}&:& \bar{n}(t) = a(t)^{1-d} \;\mbox{gauge, i.e.}
\;\bar{N}(t) = a(t)\,, 
\nonumber \\[1.5mm]
\mbox{Proper time}&:& \bar{n}(t) =1\;\mbox{gauge, i.e.}
\;\bar{N}(t) = a(t)^d\,.
\end{eqnarray}
The first two gauges are standard; commonly one writes $\eta$ for $t$
in conformal time gauge.  The last gauge is the FL counterpart 
of the proper time gauge $\partial_t n(t,x) =0$ often adopted for the 
evolution of generic foliated spacetimes. 

Generally, $(\bar{n}^{-1} \partial_t)^2 = \bar{n}^{-2}(\partial_t^2 
- \bar{n}^{-1} \partial_t \bar{n} \partial_t)$ and the first order 
term can be removed by the redefinition $T_p(t) 
= \bar{n}(t)^{1/2} \chi_p(t)$. This gives 
\begin{eqnarray}
	\label{FLode2} 
&& \big[ \partial_t^2 + \bar{n}(t)^2 \omega_p(t)^2 + \bar{s}(t)\big] \chi_p(t)=0\,,
\nonumber \\[1.5mm]
&& \bar{s}(t) := \frac{1}{2} \frac{\partial_t^2 \bar{n}}{ \bar{n}} - 
\frac{3}{4} \Big( \frac{\partial_t \bar{n}}{\bar{n}} \Big)^2\,,
\nonumber \\[1.5mm]
&& \partial_t \chi_p \chi_p^* - (\partial_t \chi_p)^* \chi_p =-i\,.
\end{eqnarray} 
In conformal time, $\bar{n}(t) = a(t)^{1-d}$ the coefficient of 
$p^2$ is unity and after renaming $t$ into $\eta$ one has 
\begin{eqnarray}
	\label{FLode3} 
&& \Big[ \partial_{\eta}^2 + p^2 + \frac{m(\eta)^2}{a(\eta)^2} 
+ \bar{s}(\eta)\Big] \chi_p(\eta)=0\,,
\nonumber \\[1.5mm]
&& \bar{s}(\eta) := - \frac{d\!-\!1}{2} \frac{\partial_{\eta}^2 a}{a} - 
\frac{(d\!-\!3)(d\!-\!1)}{4} \Big( \frac{\partial_{\eta} a}{a} \Big)^2\,,
\nonumber \\[1.5mm]
&& \partial_{\eta} \chi_p \chi_p^* - (\partial_{\eta} \chi_p)^* \chi_p =-i\,.
\end{eqnarray}
We shall occasionally discretize the flat spatial sections of (\ref{FL}), 
which are isometric to $\mathbb{R}^d$, in order to regularize momentum integrals.
A hypercubic lattice 
$\Lambda =  \{ x = a_s(n_1, \ldots, n_d),\;n_j =0, \ldots , L\!-\!1\}$
suffices, with dual lattice $\hat{\Lambda} = 
\{ p = \frac{2\pi}{a_sL} (n_1, \ldots ,n_d) \,,\;n_j =0, \ldots, L\!-\!1 \}$, 
where $a_s>0$ is the spatial lattice spacing and $L \in \mathbb{N}$ is large. 
A discretized Fourier transform $\hat{f}: \hat{\Lambda} \rightarrow \mathbb{C}$ 
is defined for real valued functions $f: \Lambda \rightarrow \mathbb{R}$ with 
periodic boundary conditions $f(x+ a_s L \hat{\imath}) = f(x)$, 
$i =1,\ldots,d$. The direct and inverse transforms read
\begin{equation}
	\label{lattconv1} 
\hat{f}(p) = a_s^d \sum_{x \in \Lambda} e^{-i p x} f(x)\,,
\quad 
f(x) = \frac{1}{(a_sL)^d} \sum_{ p \in \hat{\Lambda}} 
e^{ i p \cdot x} \hat{f}(p) \,.
\end{equation}
The continuum limit is taken by first sending $L \rightarrow \infty$, 
which converts $(a_sL)^{-d} \sum_{ p \in \hat{\Lambda} }$ into an 
integral $(2\pi)^{-d} \int\! d^dp$ over the Brillouin zone 
$p \in [-\pi/a_s, \pi/a_s]^d$, and then taking $a_s \rightarrow 0$. 
As usual, the lattice Laplacian $\Delta_s$ acts by 
multiplication in Fourier space 
\begin{equation} 
\label{lattconv2} 
- \Delta_s \,e^{i p \cdot x} = \hat{p}^2 e^{i p \cdot x}\,,
\quad \hat{p}^2 := \sum_{j=1}^d \hat{p}_j^2 = 
\frac{4}{a_s^2} \sum_{j =1}^d \sin^2 \Big( \frac{p_j a_s}{2} \Big)\,.
\end{equation} 
Unless confusing we shall set $a_s\!=\!1$ and omit the `hat' on 
the Fourier transformed functions.  
\medskip

{\bf Heisenberg picture.} 
Time evolution in the Heisenberg picture is generated by 
the canonical Hamiltonian derived from (\ref{FLact}) with 
the field operators (\ref{phiexp}) inserted. After Fourier 
decomposition this leads to 
\begin{eqnarray}
	\label{HHam1} 
&& \mathbb{H}(\tau) = \int\! \frac{dp}{(2\pi)^d} 
\,\mathbb{H}_p(\tau) \,, \makebox[1cm]{ } 
\omega_p(\tau)^2 := m(\tau)^2 a(\tau)^{2 d} + p^2 a(\tau)^{2 d -2}\,, 
\nonumber \\[1.5mm]
&& \mathbb{H}_p(\tau) = \frac{1}{2} |\pi(\tau,p)|^2 + \frac{1}{2} 
\omega_p(\tau)^2 |\phi(\tau, p)|^2 
\\[2mm] 
&& \quad = \frac{1}{2} \big(|\partial_{\tau} T_p|^2 + \omega_p(\tau)^2 |T_p|^2\big) 
\big( {\bf a}_T(-p) {\bf a}_T^*(-p) + {\bf a}_T^*(p) {\bf a}_T(p) \big)
\nonumber \\[1.5mm]
&& \quad + \frac{1}{2} \big( (\partial_{\tau} T_p)^2 + \omega_p(\tau)^2 
T_p^2 \big) {\bf a}_T(-p) {\bf a}_T(p) 
+ \frac{1}{2} \big( (\partial_{\tau} T_p^*)^2 + \omega_p(\tau)^2 
(T_p^*)^2 \big) {\bf a}_T^*(p) {\bf a}_T^*(-p)\,.
\nonumber
\end{eqnarray}
In particular 
\begin{eqnarray}
	\label{HHamevol} 
\partial_{\tau} \phi(\tau, p) &\! =\! & i[ \mathbb{H}(\tau), \phi(\tau, p)] = \pi(\tau,p)\,,
\nonumber \\[1.5mm]
\partial_{\tau} \pi(\tau, p) &\! =\! & i[ \mathbb{H}(\tau), \pi(\tau, p)] = - 
\omega_p(\tau)^2 \phi(\tau,p)\,,
\end{eqnarray}
are the Heisenberg picture evolution equations. For later use 
we prepare their solution in terms of the (real, anti-symmetric) 
commutator function $\Delta_p(\tau',\tau)$ defined by  
\begin{eqnarray}
	\label{Deltadef1} 
&& \big[\partial_{\tau}^2 + \omega_p(\tau)^2\big] \Delta_p(\tau,\tau_0) = 0 = 
\big[\partial_{\tau_0}^2 + \omega_p(\tau_0)^2 \big]\Delta_p(\tau,\tau_0)\,,
\nonumber \\[1.5mm]
&& \Delta_p(\tau, \tau_0) = - \Delta_p(\tau_0, \tau)\,, \quad 
\partial_{\tau} \Delta_p(\tau, \tau_0) \big|_{\tau = \tau_0} = 1\,.
\end{eqnarray}
The terminology of course refers to the relations 
\begin{eqnarray}
\label{Deltadef2}  
i [ \phi(\tau,p), \phi(\tau_0,p_0)] &\! =\! & (2\pi)^d \delta(p+p_0) 
\Delta_p(\tau, \tau_0)\,, 
\nonumber \\[1.5mm]  
\Delta_p(\tau,\tau_0) &:=& i \big(T_p(\tau) T_p(\tau_0)^* 
- T_p(\tau)^* T_p(\tau_0) \big) \,, 
\end{eqnarray}
so that $\partial_{\tau} \Delta_p(\tau,\tau_0)|_{\tau=\tau_0} =1$ codes the equal 
time commutation relations. Note that any other Wronskian normalized 
complex solution defines the same commutator function, see Lemma \ref{lm2.3}.
The solution of the evolution equations (\ref{HHamevol}) then 
reads 
\begin{eqnarray}
	\label{Deltadef3}  
\phi(\tau,p) &\! =\! & \Delta_p(\tau,\tau_0) \pi(\tau_0,p) - \partial_{\tau_0} 
\Delta_p(\tau,\tau_0) \phi(\tau_0,p) \,,
\nonumber \\[1.5mm] 
\pi(\tau,p) &\! =\! & \partial_{\tau} \Delta_p(\tau,\tau_0) \pi(\tau_0,p) - 
\partial_{\tau} \partial_{\tau_0} \Delta_p(\tau,\tau_0) \phi(\tau_0,p) \,.
\end{eqnarray} 

The central object later on will be the Hamilton operator 
(\ref{HHam1}) averaged with a smooth positive window 
function $f(\tau)^2$ of compact support in $(\tau_i,\tau_f)$. 
We write   
\begin{eqnarray}
	\label{HHam2} 
\int\!\! d\tau f(\tau)^2 \, \mathbb{H}_p(\tau) &\! =\! & 
{\cal E}_p[T] \big( {\bf a}_T(-p) {\bf a}_T^*(-p) + {\bf a}_T^*(p) {\bf a}_T(p) \big)
\nonumber \\[1.5mm]
&+& {\cal D}_p[T] \,{\bf a}_T(-p) {\bf a}_T(p) 
+ {\cal D}_p[T]^* \,{\bf a}_T^*(p) {\bf a}_T^*(-p)\,,
\end{eqnarray}
with 
\begin{eqnarray}
	\label{HHam3} 
{\cal E}_p[T] \!&\!:=\!& \!\frac{1}{2} \int\! d\tau\, f(\tau)^2 \, 
\Big\{ |\partial_{\tau} T_p|^2 + \omega_p(\tau)^2 |T_p|^2 \Big\} > |{\cal D}_p[T]|\,,
\nonumber \\[1.5mm]
{\cal D}_p[T] \!&\!:=\!&\! \frac{1}{2} \int\! d\tau f(\tau)^2 \,\Big\{ 
(\partial_{\tau} T_p)^2 + \omega_p(\tau)^2 T_p^2 \Big\} \,.
\end{eqnarray} 
The above formulation preserves temporal reparameterization 
invariance through the use of $\tau$ from (\ref{taudef}). 
As a consequence, the solutions of the 
wave equation (\ref{FLode1}) can be interpreted as functions of 
the coordinate time $t$ with a functional dependence on $\bar{n}$. We shall 
occasionally do so and then (by slight abuse of notation) 
keep the function symbols, writing $T_p(\tau) = T_p(t)$, etc.. 
When fixing a gauge 
as in (\ref{tgauges}) one will however normally absorb 
additional powers of $a(t)$ into a redefined averaging function 
and frequency. Specifically, 
\begin{equation}
	\label{sleconv0} 
{\cal E}_p[T] = \!\frac{1}{2} \int\! dt\, f(t)^2 \bar{n}(t)^{-1} \, 
\Big\{ |\partial_t T_p|^2 + (\bar{n}(t)\omega_p(t))^2 |T_p|^2 \Big\}\,, 
\end{equation}
motivates 
\begin{eqnarray}
	\label{sleconv1}
f^{\rm cosm}(t)^2 &:=& f(t)^2 a(t)^d\,, \makebox[1cm]{ } \;\,\omega_p^{\rm cosm}(t) := a(t)^{-d} 
\omega_p(t) \,, 
\nonumber \\[1.5mm]
f^{\rm conf}(t)^2 &:=& f(t)^2 a(t)^{d-1}\,, \quad \;\;\, 
\omega_p^{\rm conf}(t) := a(t)^{1-d} \omega_p(t) \,, 
\nonumber \\[1.5mm]
f^{\rm prop}(t)^2 &:=& f(t)^2\,, \makebox[2cm]{ } \omega_p^{\rm prop}(t) := \omega_p(t) \,.
\end{eqnarray}
In cosmological time gauge this matches the conventions in 
\cite{Olbermann}.

The functional $\mathcal{E}_p[T]$ can be related to a point-split 
subtracted version of the 00-component of the energy momentum tensor 
\cite{Olbermann,Hackbook} and as such can be interpreted
as the energy {\it density} of a given $p$ mode. The same interpretation 
arises when the spatial sections are discretized. 
In the conventions of \eqref{lattconv1}, the main change is that the 
commutation relations in \eqref{phiexp} are replaced by 
$[{\bf a}_T(p), {\bf a}_T^*(p')] = L^d \delta_{p,p'}$. This gives  
$\mathcal{E}_p[T]$ (without subtractions) the interpretation as 
the energy density of the Hamiltonian's temporal average. 
Indeed, from \eqref{HHam2} one has 
\begin{eqnarray}
\label{HamDens}
\langle 0_T| \int\! d\tau f(\tau)^2 \mathbb{H}_p(\tau) | 0_T\rangle 
= L^d {\cal E}_p[T]\,.
\end{eqnarray}
\medskip

{\bf Schr\"{o}dinger picture.} Recall that the Heisenberg picture 
and the Schr\"{o}dinger picture are related by a unitary 
transformation implemented by the propagation operator
$U(\tau,\tau_0)$. The Schr\"{o}dinger picture is designed such 
that expectation values are the same as in the Heisenberg picture 
but the dynamical evolution is attributed to the states. Whence 
\begin{equation}
	\label{Schroed2} 
|\psi;\tau\rangle_{\mathsf{s}} := U(\tau, \tau_0)^{-1} |\psi\rangle\,, 
\quad 
A_{\mathsf{s}}(\tau) := U(\tau,\tau_0)^{-1}  A(\tau) U(\tau, \tau_0)\,.
\end{equation} 
Here $A(\tau)$ carries both the dynamical and potentially an 
explicit time dependence while $A_{\mathsf{s}}(\tau)$ carries only the 
residual explicit time dependence. The states $|\psi\rangle$ are 
normalizable and time independent while the Schr\"{o}dinger 
picture states evolve according to 
\begin{equation}
	\label{Schroed3} 
i \partial_{\tau} |\psi;\tau\rangle_\mathsf{s} = \mathbb{H}_\mathsf{s}(\tau) |\psi;\tau\rangle_\mathsf{s}\,,
\quad 
\mathbb{H}_{\mathsf{s}}(\tau) := U(\tau,\tau_0)^{-1} \mathbb{H}(\tau) U(\tau,\tau_0)\,.
\end{equation}
This is such that $\langle\psi| A(\tau) | \psi \rangle = 
{}_\mathsf{s}\!\langle\psi;\tau| A_\mathsf{s}(\tau) |\psi;\tau \rangle_\mathsf{s}$. 
As the propagation operator's generator one can alternatively 
take $\mathbb{H}(\tau)$ or $\mathbb{H}_{\mathsf{s}}(\tau)$; in terms of the path ordered 
exponentials one formally has 
\begin{eqnarray}
	\label{Schroed1} 
U(\tau, \tau_0) = \exp_+ \Big\{i\! \int_{\tau_0}^{\tau} \! ds \, \mathbb{H}(s) \Big\}
= \exp_- \Big\{i \!\int_{\tau_0}^{\tau} \! ds \, \mathbb{H}_\mathsf{s}(s) \Big\}\,,
\end{eqnarray}
where $\exp_+$ orders the operators from left to right 
in decreasing order of the argument and vice versa for 
$\exp_-$. Similar relations exist for the inverse. Note that 
only the $\exp_+$ versions will satisfy the usual composition law. 
Results on convergence properties will not be needed.

For the basic operators of our scalar QFT 
the Schr\"{o}dinger picture operators can be identified with 
the initial values of the Heisenberg picture operators. We 
transition to a lattice description (in order for the Schr\"{o}dinger picture to be rigorously defined) with $a_s =1$ and write 
\begin{equation}
	\label{Schroed4} 
\phi_\mathsf{s}(p) = \phi(\tau_0,p) =: u(p) \,, \quad 
\pi_\mathsf{s}(p) = \pi(\tau_0, p) =: -i L^d \frac{\delta}{\delta u(-p)}\,,\quad 
p \in \hat{\Lambda} \,.
\end{equation}
For the Hamiltonian this gives 
\begin{eqnarray}
	\label{Schroed5} 
\mathbb{H}_\mathsf{s}(\tau) = \frac{1}{2L^d} \sum_{p \in \hat{\Lambda}} 
\Big\{\!- L^{2d} \frac{\delta^2}{\delta u(p) \delta u(-p)} + 
\omega_p(\tau)^2 u(p) u(-p)\Big\}\,.
\end{eqnarray}
The matrix elements of the time averaged Heisenberg picture 
Hamiltonian become the time averages of the Schr\"{o}dinger 
picture matrix elements 
\begin{eqnarray}
	\label{Schroed6} 
\langle \psi| \int\! d\tau f(\tau)^2 \,\mathbb{H}(\tau) | \psi \rangle 
&\! =\! & \int\! d\tau \,f(\tau)^2 
{}_\mathsf{s}\langle \psi;\tau| \mathbb{H}_\mathsf{s}(\tau) | \psi;\tau \rangle_\mathsf{s}
\nonumber \\[1.5mm]
&\! =\! & \int\! d\tau \,f(\tau)^2 
{}_\mathsf{s}\langle \psi;\tau| i \partial_{\tau} | \psi;\tau \rangle_\mathsf{s}\,.
\end{eqnarray}

We state without derivation the counterpart of the Fock 
vacuum $|0_T\rangle$ in the Schr\"{o}dinger picture, see 
\cite{Kernelcurved1,Kernelcurved3,Kernelcurved3b,Kernelcurved4} 
for related accounts.  

\begin{proposition} \label{pr2.1}  The Schr\"{o}dinger picture state  
$|\Omega_T;\tau\rangle_{\mathsf{s}} := U(\tau,\tau_0)^{-1}|0_T\rangle$ 
evaluates on a finite lattice $\Lambda$ to 
\begin{eqnarray}
	\label{Schroed7} 
\Omega_T[u] &\! =\! & {\cal N}(\tau) \exp\Big\{ \frac{i}{2L^d} \sum_{p\in \hat{\Lambda}} 
\Xi_p(\tau) u(p) u(-p) \Big\} \,.
\nonumber \\[1.5mm]
\Xi_p(\tau) &\! =\! & \frac{\partial_{\tau} T_p(\tau)^*}{T_p(\tau)^*} = 
\frac{i + \partial_{\tau} |T_p(\tau)|^2}{2|T_p(\tau)|^2}\,,\quad 
\end{eqnarray}
with ${\cal N}(\tau) = \Omega_T[0]$. Separating modulus and phase,
$\Omega_T[u] = |\Omega_T[u]| e^{i A_T[u]}$, one has 
\begin{eqnarray}
	\label{Schroed8} 
|\Omega_T[u]| &\! =\! & |\Omega_0(\tau)| \prod_{p\neq 0} |\Omega_p(\tau)|\,, \quad 
A_T[u] = A_0(\tau) + \sum_{p \neq 0} A_p(\tau)
\nonumber \\[1.5mm] 
|\Omega_0(\tau)|&\! =\! &  \frac{1}{(2\pi L^d)^{1/4}}\frac{1}{\sqrt{T_0(\tau)}}\exp\Big\{ -\frac{u_0^2}{4L^d |T_0(\tau)|^2} \Big\}\,,
\nonumber \\[1.5mm]
|\Omega_p(\tau)|&\! =\! &  \frac{1}{(\pi L^d)^{1/4}}\frac{1}{\sqrt{T_p(\tau)}}\exp\Big\{ -\frac{u_p^2}{4L^d |T_p(\tau)|^2} \Big\}\,, 
\nonumber \\[1.5mm]
A_0(\tau) &\! =\! & \frac{1}{2}\arg T_0(\tau)+\frac{1}{2L^d}\partial_\tau \ln |T_0(\tau)|u_0^2\,,
\nonumber \\[1.5mm] 
A_p(\tau) &\! =\! & \frac{1}{2}\arg T_p(\tau)+\frac{1}{2L^d}\partial_\tau \ln |T_p(\tau)|\,|u_p|^2\,,
\end{eqnarray}
with normalization
\begin{equation}
	\label{Schroed9} 
\int\! \prod_p du(p) |\Omega_T[u]|^2 := 
\int\! du_0 |\Omega_0(\tau)|^2 
\int\! \prod_{p_d>0} du(p) |\Omega_p(\tau)|^4 =1\,.
\end{equation}
\end{proposition}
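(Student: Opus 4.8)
The plan is to characterize $|0_T\rangle$ by the annihilation condition ${\bf a}_T(p)|0_T\rangle=0$ and to transport this condition into the Schr\"{o}dinger picture, where it becomes a first-order functional differential equation for the wave functional $\Omega_T[u]$. First I would invert the mode expansion (\ref{phiexp}): pairing $\phi(\tau,p)$ and $\pi(\tau,p)=\partial_\tau\phi(\tau,p)$ against $T_p^*$ and $\partial_\tau T_p^*$ and eliminating ${\bf a}_T^*$ by means of the Wronskian normalization in (\ref{FLode1}), one obtains the combination
\begin{equation*}
{\bf a}_T(p)=i\,T_p(\tau)^*\,\pi(\tau,p)-i\,\partial_\tau T_p(\tau)^*\,\phi(\tau,p),
\end{equation*}
which is $\tau$-independent precisely because $T_p$ solves the wave equation. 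Conjugating by $U(\tau,\tau_0)$ and using that the basic Schr\"{o}dinger operators are the conserved initial data $\phi_{\mathsf s}(\tau,p)=u(p)$, $\pi_{\mathsf s}(\tau,p)=-iL^d\,\delta/\delta u(-p)$ of (\ref{Schroed4}), the condition $U(\tau,\tau_0)^{-1}{\bf a}_T(p)U(\tau,\tau_0)\,\Omega_T=0$ turns into
\begin{equation*}
\frac{\delta\Omega_T}{\delta u(-p)}=\frac{i}{L^d}\,\frac{\partial_\tau T_p(\tau)^*}{T_p(\tau)^*}\,u(p)\,\Omega_T .
\end{equation*}

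This linear equation integrates immediately to the Gaussian ansatz of (\ref{Schroed7}), with quadratic exponent $\tfrac{i}{2L^d}\sum_p\Xi_p(\tau)u(p)u(-p)$ and $\Xi_p=\partial_\tau T_p^*/T_p^*$; one checks $\Xi_p=\Xi_{-p}$ (as $\omega_p$ depends only on $p^2$) so that the functional derivative of the ansatz reproduces the equation above. To reach the second form of $\Xi_p$ in (\ref{Schroed7}) I would write $\partial_\tau T_p^*\,T_p=\tfrac12\partial_\tau|T_p|^2+\tfrac12(\partial_\tau T_p^*\,T_p-\partial_\tau T_p\,T_p^*)$ and insert the Wronskian identity, giving $\Xi_p=(i+\partial_\tau|T_p|^2)/(2|T_p|^2)$. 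Its imaginary part $1/(2|T_p|^2)$ sets the Gaussian width while its real part $\partial_\tau\ln|T_p|$ furnishes the $u$-dependent phase, already matching the quadratic terms in $A_0,A_p$ of (\ref{Schroed8}).

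The annihilation condition fixes the state only up to the $u$-independent prefactor $\mathcal{N}(\tau)$, and pinning down both its modulus and its phase is the one genuinely non-routine step, which I expect to be the main obstacle. I would determine it by demanding that the Gaussian solve the Schr\"{o}dinger equation (\ref{Schroed3}) with the operator (\ref{Schroed5}). Acting with $\mathbb{H}_{\mathsf s}(\tau)$ on the Gaussian, the terms quadratic in $u$ reproduce the Riccati equation $\partial_\tau\Xi_p+\Xi_p^2+\omega_p^2=0$, which holds automatically since $\partial_\tau^2 T_p^*=-\omega_p^2 T_p^*$ (a useful consistency check). The $u$-independent terms then give $\partial_\tau\ln\mathcal{N}=-\tfrac12\sum_p\Xi_p=-\tfrac12\partial_\tau\sum_p\ln T_p^*$, hence $\mathcal{N}(\tau)=C\prod_p(T_p^*)^{-1/2}$ with a constant $C$. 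Writing $(T_p^*)^{-1/2}=|T_p|^{-1/2}e^{\frac{i}{2}\arg T_p}$ cleanly separates the modulus $\prod_p|T_p|^{-1/2}$ from the constant phases $\tfrac12\arg T_p$ appearing in $A_0,A_p$; here one must reconcile the literal $1/\sqrt{T_p}$ displayed in (\ref{Schroed8}) with the genuine modulus $1/\sqrt{|T_p|}$, the phase residing in $A_p$.

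Finally I would fix $|C|$ from the normalization (\ref{Schroed9}). Since $u$ is real on $\Lambda$ one has $u(-p)=u(p)^*$, so the real mode $u_0$ and the complex modes $u_p$ with $p_d>0$ are independent; because $|\Omega_p|=|\Omega_{-p}|$ the product over $p\neq0$ collapses to $\prod_{p_d>0}|\Omega_p|^4$, which explains the exponent structure in (\ref{Schroed9}). Each factor is then an elementary Gaussian integral, $\int du_0\,|\Omega_0|^2=1$ and $\int d^2u_p\,|\Omega_p|^4=1$, with the stated constants $(2\pi L^d)^{-1/4}$ and $(\pi L^d)^{-1/4}$; this fixes $|C|$ and establishes (\ref{Schroed9}), the residual constant phase of $C$ being an irrelevant overall phase of the state. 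In the massless case the $p=0$ factor is simply omitted.
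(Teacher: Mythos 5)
Your proposal is correct, but there is nothing in the paper to compare it against: the authors explicitly state Proposition \ref{pr2.1} ``without derivation,'' deferring to the literature on Schr\"{o}dinger-picture wave functionals \cite{Kernelcurved1,Kernelcurved3,Kernelcurved4}. Your argument supplies the missing derivation along the standard route, and each step checks out. The inversion ${\bf a}_T(p)=i\,T_p^*\pi(\tau,p)-i\,\partial_\tau T_p^*\,\phi(\tau,p)$ follows from the mode expansion (\ref{phiexp}) and the Wronskian in (\ref{FLode1}); conjugating by $U(\tau,\tau_0)$ and using (\ref{Schroed4}) correctly turns the annihilation condition into the first-order functional equation whose Gaussian solution has $\Xi_p=\partial_\tau T_p^*/T_p^*$, and your Wronskian manipulation yields the second form $(i+\partial_\tau|T_p|^2)/(2|T_p|^2)$ exactly as in (\ref{Schroed7}). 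You also correctly identify the genuinely nontrivial point, namely that the annihilation condition leaves ${\cal N}(\tau)$ undetermined, and your resolution via the functional Schr\"{o}dinger equation (\ref{Schroed3}), (\ref{Schroed5}) is sound: the quadratic terms close on the Riccati equation $\partial_\tau\Xi_p+\Xi_p^2+\omega_p^2=0$ (automatic from the wave equation), while the $u$-independent terms give ${\cal N}(\tau)=C\prod_p(T_p^*)^{-1/2}$, whose polar decomposition reproduces both the $|T_p|^{-1/2}$ prefactors and the $\tfrac12\arg T_p$ terms of (\ref{Schroed8}) --- including your correct observation that the paper's literal $1/\sqrt{T_p}$ inside a modulus should be read as $1/\sqrt{|T_p|}$. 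The Gaussian integrals fixing $|C|$ reproduce the constants $(2\pi L^d)^{-1/4}$ and $(\pi L^d)^{-1/4}$ and the exponent structure of (\ref{Schroed9}), with the pair-counting $u(-p)=u(p)^*$, $|\Omega_p|=|\Omega_{-p}|$ handled as the paper intends. The only items glossed over are at the same level of informality as the paper itself: the residual constant phase of $C$ (a phase convention for $|0_T\rangle$, implicitly set to make ${\cal N}$ exactly $\prod_p(T_p^*)^{-1/2}$), the possible self-conjugate lattice momenta besides $p=0$ when $L$ is even, and the tacit choice $T_p=T_{-p}$ ensuring $\Xi_p=\Xi_{-p}$.
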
 

With this in place we can return to (\ref{Schroed6}) and evaluate 
\begin{eqnarray}
	\label{Schroed10} 
{}_\mathsf{s}\langle \Omega_T;\tau| i \partial_{\tau} | \Omega_T;\tau \rangle_\mathsf{s} = 
\int\! \prod_{p} du(p) \Big\{ \frac{i}{2} \partial_{\tau} |\Omega_T[u]|^2   
- \partial_{\tau} A_T[u] |\Omega_T[u]|^2 \Big\}\,.
\end{eqnarray}
The imaginary part vanishes because $\Omega_T[u]$ is $L^2$ normalized. 
 The real part essentially is a Gaussian with a $|u|^2$ insertion. 
We interpret $|\Omega[u]|$ as in (\ref{Schroed8}) and find 
\begin{eqnarray}
\label{Schroed11} 
{}_\mathsf{s}\langle \Omega_T;\tau| i \partial_{\tau} | \Omega_T;\tau \rangle_\mathsf{s}
&\! =\! & -\frac{1}{2}\sum_p \Big\{|T_p(\tau)|^2 \partial_{\tau}^2 \ln |T_p(\tau)|
+\partial_{\tau} \arg T_p(\tau)\Big\}\,.
\end{eqnarray}
Next we use 
\begin{equation}
	\label{Schroed12}
\partial_{\tau} \arg T_p(\tau) = \frac{1}{2i} \partial_{\tau} 
\ln \frac{T_p(\tau)}{T_p(\tau)^*}  = - \frac{1}{2 |T_p(\tau)|^2}\,,
\quad \partial_{\tau}^2 \xi_p + \omega_p(\tau)^2 \xi_p = 
\frac{1}{4 \xi_p^3}\,,
\end{equation}
with $\xi_p(\tau) := |T_p(\tau)|$. The differential equation for $\xi_p$ 
is the Ermakov-Pinney equation. Together 
\begin{eqnarray}
	\label{Schroed13} 
{}_\mathsf{s}\langle \Omega_T;\tau| i \partial_{\tau} | \Omega_T;\tau \rangle_\mathsf{s}  &\! =\! &  
\frac{1}{2}\sum_p \Big\{ (\partial_{\tau} \xi_p)^2 + \omega_p(\tau)^2 \xi_p^2 + 
\frac{1}{4 \xi_p^2} \Big\}
\nonumber \\[1.5mm] 
&\! =\! & \frac{1}{2}\sum_p \Big\{|\partial_\tau T_p(\tau)|^2
+\omega_p(\tau)^2 |T_p(\tau)|^2 \Big\}\,.
\end{eqnarray}
Upon temporal averaging the right hand side equals 
$\sum_p {\cal E}_p[T]$, with ${\cal E}_p[T]$ from (\ref{HHam3}). Hence 
\begin{equation}
	\label{Schroed14} 
\int\! d\tau \, f(\tau)^2\,
{}_\mathsf{s}\langle \Omega_T;\tau| i \partial_{\tau} | \Omega_T;\tau \rangle_\mathsf{s} = 
 \sum_p {\cal E}_p[T]\,.
\end{equation}
As expected, the right hand side equals the  $L^{-d}\sum_{p}$ summation over 
$p$-fibres of \eqref{HamDens} in the  Heisenberg picture.  
The Schr\"{o}dinger picture, however, lends itself to a 
different minimization procedure described in Section \ref{sec2.3}.

\subsection{SLE in Heisenberg picture and independence of fiducial states}\label{sec2.2}

So far $T_p$ has been an arbitrary solution of (\ref{FLode1}).   
We now regard ${\cal E}_p[T]$ from (\ref{HHam3}) as a functional 
of $T_p$ and aim at minimizing it for fixed $p$. This is 
a finite dimensional minimization problem because the 
solutions of (\ref{HHam3}) are in one-to-one correspondence 
to their Wronskian normalized complex initial data. We shall 
pursue this route towards minimization in Section \ref{sec2.3}.

{\bf SLE via fiducial solutions.} 
Alternatively, one can fix a fiducial solution $S_p(\tau)$ 
of (\ref{FLode1}) and write any solution in the form 
\begin{equation}
	\label{sle1}  
T_p(\tau)=\lambda_p S_p(\tau)+\mu_p S_p(\tau)^\ast\,,\quad 
|\lambda_p|^2 - |\mu_p|^2 =1\,.
\end{equation}
With $S_p$ and $p$ held fixed the minimization is then over the 
parameters $\lambda_p,\mu_p \in \mathbb{C}$.  Since $e^{- i {\rm Arg \mu_p}} T_p(\tau)$
is a solution of (\ref{FLode1}) if $T_p(\tau)$ is we may 
assume wlog that $\mu_p$ is real. Since $|\lambda_p| = 
\sqrt{1 + \mu_p^2}$, only $\mu_p$ and the phase of $\lambda_p$ 
are real parameters over which the minimum of ${\cal E}_p[T_p]$ is sought. 
Inserting (\ref{sle1})  with the simplified parameterization 
into (\ref{HHam3}) one has 
\begin{eqnarray}
	\label{sle2} 
{\cal E}_p[T] &\! =\! & (1 + 2 \mu_p^2) {\cal E}_p[S] + 
\mu_p \sqrt{1 + \mu_p^2} \big( e^{i \arg \lambda_p}
{\cal D}_p[S] + e^{- i \arg \lambda_p} {\cal D}_p[S]^* \big)\,,
\nonumber \\[1.5mm]
{\cal D}_p[T] &\! =\! & (1 + \mu_p^2)e^{2 i \arg \lambda_p} {\cal D}_p[S] + \mu_p^2 {\cal D}_p[S]^* 
+ 2 \mu_p \sqrt{1 + \mu_p^2} e^{i \arg \lambda_p} {\cal E}_p[S]\,.
\end{eqnarray}
Clearly, the minimizing phase is such that 
$e^{i \arg \lambda_p} e^{i \arg {\cal D}_p[S]} =-1$. The minimization in 
$\mu_p$ then is straightforward and results in \cite{Olbermann} 
\begin{align}
\label{sle3} 
\mu_p=\sqrt{\frac{c_1}{2\sqrt{c_1^2-|c_2|^2}}-\frac{1}{2}}\,,
\quad \lambda_p=-\,e^{-i\,{\rm Arg} \,c_2}
\sqrt{\frac{c_1}{2\sqrt{c_1^2-|c_2|^2}}+\frac{1}{2}}\,,
\end{align}
where whenever the fixed fiducial solution is clear from the 
context one sets 
\begin{eqnarray}
	\label{sle4}
c_1 &:=& {\cal E}_p[S]= \frac{1}{2} \int\! d\tau f(\tau)^2 
\big[ |\partial_{\tau} S_p|^2 + \omega_p^2 |S_p|^2 \big] > |c_2|\,,
\nonumber \\[1.5mm]
c_2 &:=& {\cal D}_p[S]= \frac{1}{2} \int\! d\tau f(\tau)^2 
\big[ (\partial_{\tau} S_p)^2 + \omega_p^2 S_p^2 \big]\,.
\end{eqnarray}
Since only a phase choice has been made in 
arriving at (\ref{sle4}) it is clear that the minimizing 
linear combination is unique up to a phase, {\it for a fixed 
fiducial solution $S$.} It is called the {\it State of Low Energy} 
(SLE) solution of (\ref{FLode1}) with fiducial solution $S$. 
We write 
\begin{equation}
	\label{sle5} 
{T_{S,p}(\tau) := \lambda_p[S] S_p[\tau] + \mu_p[S] S_p(\tau)^*\,,}
\end{equation}
with {$\lambda_p[S], \mu_p[S]$} the functionals from (\ref{sle3}), (\ref{sle4}).  
Olbermann's theorem \cite{Olbermann} states that the 
homogeneous pure quasifree state associated with $T_S(\tau)$ 
via (\ref{FLcorr1}) is an {\it exact} Hadamard state. This
is an important result which improves earlier ones based on 
the adiabatic expansion in several ways, as noted in the 
introduction. Its practical usefulness is somewhat hampered by the 
fact that one still needs to know an exact solution $S$ of the 
wave equation to begin with and that the resulting Hadamard 
state off-hand depends on the choice of $S$. The second caveat 
is addressed in {Theorem \ref{th2.1} below. In preparation, we note the following proposition, where we omit the subscript $p$ for simplicity.} 

\begin{proposition}\label{pr2.2} Consider the following functionals: 
${\cal I} : C[\tau_i,\tau_f] \rightarrow \mathbb{R}_+\cup \{0\}$, and ${\cal J}, {\cal K}: 
C[\tau_i,\tau_f] \rightarrow C[\tau_i,\tau_f]$
\begin{eqnarray}
	\label{sleinv1}
{\cal I}[S] &:=& {\cal E}[S]^2 - |{\cal D}[S]|^2\,,
\nonumber \\[1.5mm]
{\cal J}[S](\tau) &:=& 2 {\cal E}[S] |S(\tau)|^2 - {\cal D}[S]^* S(\tau)^2 
- {\cal D}[S] {S(\tau)^*}^2\,,
\nonumber \\[1.5mm]
{\cal K}[S](\tau) &:=& 2 {\cal E}[S] |\partial_{\tau}S(\tau)|^2 - 
{\cal D}[S]^* [\partial_{\tau} S(\tau)]^2 
- {\cal D}[S] [\partial_{\tau}S(\tau)^*]^2\,.
\end{eqnarray} 
For $a,b \in \mathbb{C}$ they obey
\begin{eqnarray}
	\label{sleinv2}
{\cal I}[a S + b S^*]  = (|a|^2 - |b|^2)^2 \,{\cal I}[S]\,,
\nonumber \\[1.5mm] 
{\cal J}[a S + b S^*](\tau)  = (|a|^2 - |b|^2)^2 \,{\cal J}[S](\tau)\,, 
\nonumber \\[1.5mm]
{\cal K}[a S + b S^*](\tau)  = (|a|^2 - |b|^2)^2 \,{\cal K}[S](\tau)\,. 
\end{eqnarray}
\end{proposition}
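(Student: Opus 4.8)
The plan is to reduce all three identities to a single linear-algebraic fact governing how the integrated data $({\cal E}[S],{\cal D}[S],{\cal D}[S]^*)$ and the pointwise data transform under $S\mapsto aS+bS^*$. The guiding observation is that \emph{all} the relevant triples transform by the \emph{same} $3\times3$ matrix, so the three claims become one.

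First I would compute the transformation of the integrated quantities. Writing $T=aS+bS^*$, so that $T^*=b^*S+a^*S^*$, and using that $f$ and $\omega_p$ are real, a direct expansion of $|\partial_{\tau}T|^2+\omega_p^2|T|^2$ and of $(\partial_{\tau}T)^2+\omega_p^2 T^2$ gives
\begin{eqnarray*}
{\cal E}[T] &=& (|a|^2+|b|^2){\cal E}[S]+ab^*{\cal D}[S]+a^*b\,{\cal D}[S]^*\,,\\
{\cal D}[T] &=& 2ab\,{\cal E}[S]+a^2{\cal D}[S]+b^2{\cal D}[S]^*\,,
\end{eqnarray*}
together with the complex conjugate of the second line. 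Collecting $v:=({\cal E}[S],{\cal D}[S],{\cal D}[S]^*)^{\mathsf T}$ this reads $v\mapsto Mv$ with
\begin{equation*}
M=\begin{pmatrix} |a|^2+|b|^2 & ab^* & a^*b\\ 2ab & a^2 & b^2\\ 2a^*b^* & (b^*)^2 & (a^*)^2\end{pmatrix}\,.
\end{equation*}

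Next I would observe that the \emph{pointwise} triples $w(\tau):=(|S|^2,S^2,(S^*)^2)^{\mathsf T}$ and $w'(\tau):=(|\partial_{\tau}S|^2,(\partial_{\tau}S)^2,(\partial_{\tau}S^*)^2)^{\mathsf T}$ transform by the very same matrix $M$: expanding $|T|^2$, $T^2$ and their $\partial_{\tau}$-analogues reproduces exactly the rows of $M$, the crucial point being that $\partial_{\tau}$ commutes with the constant-coefficient map $S\mapsto aS+bS^*$. I would then recast the functionals as bilinear forms in these vectors. With the symmetric matrix
\begin{equation*}
G=\begin{pmatrix} 1 & 0 & 0\\ 0 & 0 & -\tfrac12\\ 0 & -\tfrac12 & 0\end{pmatrix}\,,
\end{equation*}
one checks directly that ${\cal I}[S]=v^{\mathsf T}Gv$, that ${\cal J}[S](\tau)=v^{\mathsf T}(2G)\,w(\tau)$, and that ${\cal K}[S](\tau)=v^{\mathsf T}(2G)\,w'(\tau)$. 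Consequently the whole proposition collapses to the single identity
\begin{equation*}
M^{\mathsf T}G\,M=(|a|^2-|b|^2)^2\,G\,,
\end{equation*}
since then ${\cal I}[T]=(Mv)^{\mathsf T}G(Mv)$, ${\cal J}[T]=(Mv)^{\mathsf T}(2G)(Mw)$ and ${\cal K}[T]=(Mv)^{\mathsf T}(2G)(Mw')$ each acquire exactly the factor $(|a|^2-|b|^2)^2$.

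The only real work, and the step where sign and complex-conjugation bookkeeping must be watched, is establishing $M^{\mathsf T}G\,M=(|a|^2-|b|^2)^2\,G$. I would verify it either by a direct $3\times3$ multiplication, or, more conceptually, by recognizing that $M=\mathrm{Sym}^2 U$ is the second symmetric power of the $2\times2$ Bogoliubov matrix $U=\left(\begin{smallmatrix} a & b\\ b^* & a^*\end{smallmatrix}\right)$ acting on $(S,S^*)^{\mathsf T}$, that $G$ is (up to normalization) the symmetric form induced on $\mathrm{Sym}^2\mathbb{C}^2$ by the symplectic form $\epsilon$ on $\mathbb{C}^2$, and that the elementary relation $U^{\mathsf T}\epsilon\,U=(\det U)\,\epsilon$ with $\det U=|a|^2-|b|^2$ squares, on the symmetric square, to precisely the claimed scaling. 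Either route is a short computation; there is no genuine analytic obstacle, the entire content being the functorial $\mathrm{Sym}^2$ behavior of the Bogoliubov group element.
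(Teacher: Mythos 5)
Your proof is correct, but it follows a genuinely different route from the paper's. The paper deliberately sidesteps any direct expansion: it realizes ${\cal J}[S](\tau_0)$ and ${\cal K}[S](\tau_0)$ as the quantities $J_p(\tau_0)$, $K_p(\tau_0)$ of (\ref{ssle3}) with the commutator function replaced by the commutator functional $\Delta[S](\tau,\tau_0)=i\big(S(\tau)S(\tau_0)^*-S(\tau)^*S(\tau_0)\big)$, obtains ${\cal I}$ from the identity $4{\cal K}[S]{\cal J}[S]-(\partial_{\tau_0}{\cal J}[S])^2=4\,{\cal I}[S]$, and then reads off all three covariances from Lemma \ref{lm2.3}, i.e.\ from $\Delta[aS+bS^*]=(|a|^2-|b|^2)\Delta[S]$ together with the quadratic dependence on $\Delta$. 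That derivation buys structural insight, tying the invariants to the state-independent commutator function that organizes all of Section \ref{sec2.3}. Your derivation buys economy and uniformity: I checked your transformation laws for $({\cal E},{\cal D},{\cal D}^*)$ and for the pointwise triples, the bilinear-form identifications, and the key relation $M^{\mathsf T}G\,M=(|a|^2-|b|^2)^2G$, and all are correct; a single $3\times 3$ identity then settles ${\cal I}$, ${\cal J}$, ${\cal K}$ simultaneously, for arbitrary $a,b$, and without the Wronskian bookkeeping that the paper's (quartic-in-$\Delta$) route to ${\cal I}$ implicitly requires when $|a|^2-|b|^2\neq 1$.

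One caveat concerns only your conceptual aside. In the monomial basis $(|S|^2,S^2,(S^*)^2)$ in which your $M$ acts, the form naively induced by $\epsilon$ on $\mathrm{Sym}^2\mathbb{C}^2$, call it $B$, is \emph{not} proportional to $G$: it satisfies $M\,B\,M^{\mathsf T}=(|a|^2-|b|^2)^2B$ rather than $M^{\mathsf T}B\,M=(|a|^2-|b|^2)^2B$, because $M$ is the coordinate (not basis-image) matrix of the symmetric square; what is proportional to $G$ is $B^{-1}$, the induced form on the dual (coefficient) space. Since the invariant symmetric form on this irreducible representation is unique up to scale, the $\mathrm{Sym}^2$ argument does go through once this transposition is sorted out, but as stated it would mislead; your primary route, the direct $3\times 3$ multiplication, is the one to rely on.
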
 

This may be proven by lengthy direct computations; we shall present a 
more elegant derivation based on properties of the commutator function 
in Section \ref{sec2.3}.

\medskip
\begin{theorem} \label{th2.1} \makebox[4cm]{} 
\vspace{-3mm} 
\begin{itemize} 
\item[(a)] {The SLE two-point function based on a fiducial solution $S$ } 
\begin{eqnarray}
{	W[S](\tau,x;\tau',x'):=\int \frac{d^dp}{(2\pi)^d}e^{ip(x-x')}T_{S,p}(\tau)T_{S,p}(\tau')^\ast \,,}
\end{eqnarray}
 {is a Bogoliubov invariant, i.e. $W[a S+bS^\ast]=W[S]$, with $a,b\in \mathbb{C}$, $|a|^2-|b|^2=1$. Hence $W[S]$ is independent of the choice of the fiducial solution $S$.}
\item[(b)] The modulus of an SLE solution can be written as
a ratio of Bogoliubov invariants from Proposition \ref{pr2.2}.  
\begin{equation}
	\label{sleinv3} 
|T_{S,p}(\tau)|^2 = \frac{{\cal J}_p[S](\tau)}{2\sqrt{{\cal I}_p[S]}}\,.
\end{equation}  
This also implies (a).    
\end{itemize} 
\end{theorem}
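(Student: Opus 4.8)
The plan is to prove part (b) first by directly evaluating the minimizer, and then to read off part (a) from it, as the statement indicates.

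For (b), I would insert the minimizing Bogoliubov data \eqref{sle3} into the modulus of the SLE solution \eqref{sle5}. With $\mu_p$ chosen real and $|\lambda_p|^2 = 1 + \mu_p^2$, expansion gives
\[
|T_{S,p}(\tau)|^2 = (1 + 2\mu_p^2)\,|S_p(\tau)|^2 + \lambda_p \mu_p\, S_p(\tau)^2 + \lambda_p^* \mu_p\, (S_p(\tau)^*)^2\,.
\]
Abbreviating $c_1 = {\cal E}_p[S]$, $c_2 = {\cal D}_p[S]$ as in \eqref{sle4}, the substitution \eqref{sle3} collapses the coefficients to
\[
1 + 2\mu_p^2 = \frac{c_1}{\sqrt{c_1^2 - |c_2|^2}}\,, \qquad \lambda_p \mu_p = -\,\frac{c_2^*}{2\sqrt{c_1^2 - |c_2|^2}}\,,
\]
the second because $\lambda_p\mu_p$ has modulus $|c_2|/(2\sqrt{c_1^2-|c_2|^2})$ while the minimizing phase $e^{-i\arg c_2}$ of $\lambda_p$ combines with $|c_2|$ into $c_2^*$. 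Substituting back and comparing with the definitions \eqref{sleinv1} identifies the numerator as ${\cal J}_p[S](\tau)$ and $c_1^2 - |c_2|^2$ as ${\cal I}_p[S]$, which is \eqref{sleinv3}. This is routine algebra; the only care needed is in tracking the minimizing phase.

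To deduce (a), I would combine \eqref{sleinv3} with Proposition \ref{pr2.2}: under $S \mapsto aS + bS^*$ with $|a|^2 - |b|^2 = 1$ both ${\cal J}_p$ and ${\cal I}_p$ acquire the common factor $(|a|^2-|b|^2)^2 = 1$, so $|T_{S,p}(\tau)|^2$ is exactly Bogoliubov invariant. The decisive step is that the full bilinear $T_{S,p}(\tau)\,T_{S,p}(\tau')^*$ entering $W[S]$ is itself determined by the modulus alone. Writing $T_{S,p} = \xi_p\, e^{i\theta_p}$ with $\xi_p = |T_{S,p}|$, the Wronskian normalization \eqref{FLode1} fixes the phase velocity \eqref{Schroed12}, $\partial_\tau \theta_p = -1/(2\xi_p^2)$, whence
\[
T_{S,p}(\tau)\, T_{S,p}(\tau')^* = \xi_p(\tau)\,\xi_p(\tau')\,\exp\Big(-\frac{i}{2}\int_{\tau'}^{\tau} \frac{ds}{\xi_p(s)^2}\Big)\,.
\]
This depends on $S$ only through $\xi_p^2 = |T_{S,p}|^2$, and any overall constant phase of the SLE solution drops out, consistent with its uniqueness up to a phase. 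Hence $T_{S,p}(\tau)\,T_{S,p}(\tau')^*$ is Bogoliubov invariant for each $p$, and so is the momentum integral defining $W[S]$, which gives (a).

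The main obstacle is conceptual rather than computational: recognizing that the phase of the SLE mode function is not independent data but is slaved to its modulus through the Wronskian condition, so that invariance of $|T_{S,p}|^2$ already fixes the entire two-point function. As a cross-check I would instead split $T_{S,p}(\tau)\,T_{S,p}(\tau')^*$ into its $\tau \leftrightarrow \tau'$ antisymmetric part $-\frac{i}{2}\Delta_p(\tau,\tau')$, state independent by \eqref{Deltadef2} and hence trivially invariant, and its symmetric part ${\rm Re}\,[T_{S,p}(\tau)T_{S,p}(\tau')^*] = \xi_p(\tau)\xi_p(\tau')\cos(\theta_p(\tau)-\theta_p(\tau'))$, which is again a functional of the modulus by the same phase relation; the two routes agree.
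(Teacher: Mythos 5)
Your proposal is correct and follows, in essence, the paper's own route through part (b): the paper states that \eqref{sleinv3} ``follows by direct computation'' and then observes that (b) implies (a) because a Wronskian normalized solution is determined by its modulus up to a time independent phase; you supply exactly these two ingredients, namely the collapse of the minimizing coefficients to $1+2\mu_p^2 = c_1/\sqrt{c_1^2-|c_2|^2}$, $\lambda_p\mu_p = -c_2^*/(2\sqrt{c_1^2-|c_2|^2})$, and the phase-slaving relation $\partial_\tau \arg T_p = -1/(2|T_p|^2)$ from \eqref{Schroed12} (equivalently \eqref{Spviamod}). The only divergence is that the paper additionally gives an independent variational proof of (a) --- showing any minimizer of ${\cal E}$ is a zero of ${\cal D}$, and that two minimizers must then be related by a Bogoliubov transformation with $b=0$ --- which you bypass; this is legitimate, since your (b)-first argument is self-contained and the invariance of ${\cal J}_p$ and ${\cal I}_p$ under unit Bogoliubov maps (Proposition \ref{pr2.2}) carries the full weight.
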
 

\begin{proof}\

\noindent
{For readability's sake, we omit the subscript $p$ in the following.}

(a) We first show that a minimum $T$ of ${\cal E}$ is 
a zero of ${\cal D}$. Assume to the contrary that $T$ minimizes ${\cal E}$
but ${\cal D}[T] \neq 0$. Consider $\mu T + \lambda T^*$, with 
$\mu >0, \lambda = e^{i \arg \lambda} \sqrt{1 + \mu^2}$ and compute 
${\cal E}[\mu T + \lambda T^*]$ as in (\ref{sle2})  
\begin{equation}
	\label{sleunique1} 
{\cal E}[\mu T + \lambda T^*] = (1 + 2 \mu^2) {\cal E}[T] + 
2 \mu \Re\big( \lambda {\cal D}[T]\big)\,.
\end{equation}
Then there exists a $\mu \neq 0$ such that 
${\cal E}[\mu T + \lambda T^*] < {\cal E}[T]$, contradicting the 
assumption that $T$ minimizes ${\cal E}$. Subject to the minimizing 
phase choice $e^{- i\arg \lambda} e^{i \arg {\cal D}[S]} =-1$ one can also 
see from (\ref{sle2}) that $(\partial {\cal E}[T]/\partial \mu)$ is proportional 
to ${\cal D}[T]$. 

Let now $T_{S_1}, T_{S_2}$ be two minimizers of ${\cal E}$ associated 
with fiducial solutions $S_1,S_2$. Then there exist 
some $a,\,b\in\mathbb{C}$ with $|a|^2-|b|^2=1$ such that 
$T_{S_2}=aT_{S_1}+bT_{S_1}^\ast$. Further, $e^{-i\,{\rm {a}rg}\,b}T_{S_2}$ 
is of the form used in (\ref{sleunique1}) so that 
\begin{equation}
	\label{sleunique2} 
{\cal E}[e^{-i\,{\rm {a}rg}\,b}T_{S_2}]={\cal E}[T_{S_2}] =(2b^2+1)\,{\cal E}[T_{S_1}]+2b\, 
\mathfrak{R}[a \,{\cal D}(T_{S_1})]\,.
\end{equation}
By the previous step, ${\cal D}(T_{S_1}) =0$ as 
$T_{S_1}$ is a minimizer of ${\cal E}$. Therefore (\ref{sleunique2}) 
reduces to ${\cal E}[T_{S_2}] =(2b^2+1)\,{\cal E}[T_{S_2}]$. Since 
${\cal E}[T_{S_2}]={\cal E}[T_{S_1}]$ we must have $b=0$. Hence
$e^{-i\,{\rm {a}rg}\,b}T_{S_2}=T_{S_1}$, {which implies (a).} 

(b) The expression (\ref{sleinv3}) follows by direct computation.  
Hence (\ref{sleinv2}) implies (a) via $|T_{S_1}(\tau)| 
= |T_{S_2}(\tau)|$, as any two fiducial solutions $S_1,S_2$ 
must be related by $S_2 = a S_1 + b S_2^*$, $|a|^2 - |b|^2 =1$. { This also implies (a) since a Wronskian normalized solution of \eqref{FLode1} is uniquely determined by its modulus, up to a time independent (but potentially $p$ dependent) phase.}
\end{proof}

\noindent
{\bf Remarks.} 

(i) Uniqueness up a phase of the SLE modes 
has been asserted in Theorem 3.1 of \cite{Olbermann} and 
justified (in the line preceding it) by noting that only 
a phase choice is being made in the process of obtaining the solution 
formulas (\ref{sle3}). In itself, however, this only yields  
uniqueness {\it relative} to a choice of fiducial solution,
as indicated in (\ref{sle5}). We are not aware of a presention 
of SLE \cite{Olbermann,Hackbook,SLE1,SLE2} alluding to results of 
the above type. Lemma 4.5 of \cite{Olbermann} shows the independence 
of a SLE solution from the order of the adiabatic vacuum used 
as a fiducial solution. This, however, only concerns the large 
momentum behavior, while Theorem \ref{th2.1} ascertains the independence 
(up to a phase) from {\it any} fiducial solution at {\it all} momenta.

(ii) Writing momentarily ${\cal E}_S(\mu,\arg \lambda)$ for the 
right hand side of ${\cal E}[T]$ in (\ref{sle2}) one can of course 
trade a Bogoliubov transformation in $S$ for one in 
the parameters. This gives ${\cal E}_{S_1}(\mu_1,\arg \lambda_1) = 
{\cal E}_{S_2}(\mu_2,\arg \lambda_2)$ for any two fiducial solutions. 
For this to imply the existence of a unique  
minimum the gradients of ${\cal E}_{S_1}$ and ${\cal E}_{S_2}$ must be 
related by a $2 \times 2$ matrix  which remains nonsingular 
on a zero of one (and then both) gradient(s). Further, the 
Hessian must be positive definite on a zero of the gradient. 
The above proof validates these properties, but they are 
not consequences merely of the fact that (\ref{sle3}) is 
unique up to a choice of phase.   

(iii) By rewriting (\ref{sle2}) in matrix form one finds the 
minimizing parameters (\ref{sle3}) to diagonalize the original 
$c_1 = {\cal E}[S], c_2 = {\cal D}[S]$ matrix
\begin{equation}
	\label{sle6}
\begin{pmatrix} {\cal E}[T_S] & {\cal D}[T_S] \\[1mm]
{\cal D}[T_S]^* & {\cal E}[T_S] \end{pmatrix} 
=
\begin{pmatrix} \lambda & \mu \\
\mu & \lambda^* \end{pmatrix} 
\begin{pmatrix} c_1 & c_2 \\
c_2^* & c_1 \end{pmatrix} 
\begin{pmatrix} \lambda^* & \mu \\
\mu & \lambda \end{pmatrix} = 
\begin{pmatrix} \sqrt{c_1^2 - |c_2|^2} & 0 \\
0 & \sqrt{c_1^2 - |c_2|^2} \end{pmatrix} \,.
\end{equation}
The off-diagonal entries confirm the ``Minimizer of ${\cal E}$ is a  
zero of ${\cal D}$'' assertion in part (a) of the proof of Theorem \ref{th2.1};
the diagonal entries display the value of the minimizing 
energy ${\cal E}[T_S]$. 
In fact, the relation (\ref{sle6}) could be taken as an 
alternative definition of the coefficients $\lambda,\mu$ with solution 
(\ref{sle3}).   
\medskip

{\bf Minimization in Fock space.} We temporarily return to the lattice 
formulation.  The minimization of ${\cal E}_p[T]$ already assumed that the 
time averaged Hamiltonian $\int\! d\tau f(\tau)^2 \mathbb{H}_p(\tau)$ 
is evaluated in the coordinated Fock vacuum $|0_T\rangle$, see 
(\ref{HamDens}). The operator (\ref{HHam2}) itself 
has well-defined expectation values on a dense subspace ${\cal F}_0$ of 
the Fock space on which it is also selfadjoint and positive 
semidefinite. Hence 
\begin{equation}
	\label{fock1}
\inf_{\psi \in {\cal F}_0} 
\frac{\langle \psi | 
\int\! d\tau \, f(\tau)^2 \, 
\mathbb{H}_p(\tau) |\psi \rangle}{\langle \psi| \psi \rangle} = 
E_p^{\rm inf} \,,
\end{equation}
is well defined with some $E_p^{\rm inf} \geq 0$. 
By the min-max theorem for (possibly unbounded) selfadjoint 
operators \cite{ReedSimon4}, the quantity $E_p^{\rm inf}$ 
also coincides with the infimum of the spectrum of 
$\int\! d\tau \, f(\tau)^2 \, \mathbb{H}_p(\tau)$. In order to 
determine the infimum of the spectrum one can try to  
diagonalize the operator. Using (\ref{HHam2}), (\ref{sle6}),  
one has 
\begin{eqnarray}
	\label{fock2}
\int\!\! d\tau f(\tau)^2 \, \mathbb{H}_p(\tau) &\! =\! & 
\big( {\bf a}_S(-p), {\bf a}_S^*(p) \big) 
\begin{pmatrix} {\cal E}_p[S] & {\cal D}_p[S] \\
{\cal D}_p[S]^* & {\cal E}_p[S] \end{pmatrix} 
\begin{pmatrix} {\bf a}^*_S(-p) \\
{\bf a}_S(p) \end{pmatrix} 
\nonumber \\[1.5mm]
&\! =\! & {\cal E}_p[T_S] \big( 
{\bf a}_{T_S}(-p) {\bf a}^*_{T_S}(-p) +  
{\bf a}^*_{T_S}(p) {\bf a}_{T_S}(p) \big) \,.
\end{eqnarray}
From (\ref{fock2}) it is clear that the infimum of the spectrum 
is a minimum and is assumed if $|\psi \rangle = |0_{T_S}\rangle$ is the 
Fock vacuum associated with the SLE solution. Hence 
\begin{equation}
	\label{fock4}  
E_p^{\rm inf} = {\cal E}_p[T_S] = \sqrt{{\cal E}_p[S]^2 - |{\cal D}_p[S]|^2} \,.
\end{equation}
Since the operator in (\ref{fock1}) can be written in an arbitrary 
Bogoliubov frame one would expect that the infimum 
is a Bogoliubov invariant. By Proposition \ref{pr2.2} this indeed the case.

{\bf Instantaneous limit.} In general, the Fock vacuum  
${\bf a}_T(p)|0_T\rangle=0$ is not an eigenstate of $\mathbb{H}_p(\tau)$. 
At any fixed time $\tau_0$ one has however: 
\begin{eqnarray}
	\label{Hinstant1}
&& \quad\;\;\; |\partial_{\tau} T_p(\tau_0)|^2 + \omega_p(\tau_0)^2 |T_p(\tau_0)|^2 
\stackrel{\displaystyle{!}}{=} {\rm min} \,,
\nonumber \\[1.5mm]
&& \mbox{iff} \quad 
T_p(\tau_0) = \frac{e^{i\nu_0}}{\sqrt{2 \omega_p(\tau_0)}}\,,
\quad 
(\partial_{\tau} T_p)(\tau_0) = -i e^{i \nu_0}  
\sqrt{\frac{\omega_p(\tau_0)}{2}}\,.
\nonumber \\[1.5mm]
&& \mbox{iff} \quad  
[\partial_{\tau} T_p(\tau_0)]^2 + \omega_p(\tau_0)^2 T_p(\tau_0)^2 =0 \,,
\end{eqnarray}
for some $\nu_0 \in [0,2\pi)$. Note that in 
Minkowski space the minimization reproduces 
$T_p(t) = e^{-i t \omega_p}/\sqrt{2 \omega_p}$, $\omega_p = \sqrt{p^2 + m_0^2}$. 
Generally, the value of 
the minimum in the first line is $\omega_p(\tau_0)$.
With the choice (\ref{Hinstant1}) of minimizing 
mode `functions' the Hamilton operator {\it at} $\tau_0$ 
simplifies to 
\begin{eqnarray}
	\label{Hinstant2} 
\mathbb{H}(\tau_0) = \frac{1}{2} \int\!
\frac{dp}{(2 \pi)^d} \omega_p(\tau_0)
\big( {\bf a}_{\tau_0}(p) {\bf a}_{\tau_0}^*(p) + 
{\bf a}_{\tau_0}^*(p) {\bf a}_{\tau_0}(p) \big)\,.
\end{eqnarray}
On a finite lattice this also turns the Fock 
vacuum ${\bf a}_{\tau_0}(p) |0_{\tau_0}\rangle =0$ into the ground state 
of $\mathbb{H}(\tau_0)$. This ``instantaneous diagonalization'' 
has originally been pursued in an attempt to 
introduce a particle concept at each instant. The 
``instantaneous Fock vacuum'' $|0_{\tau_0}\rangle$ does however 
{\it not} give rise to a physically viable state, as  
for $\tau \neq \tau_0$ the norm-squared of the normal-ordered
Hamiltonian, 
$\langle 0_{\tau_0}| :\!\mathbb{H}(\tau)\!:\, :\!\mathbb{H}(\tau)\!: |0_{\tau_0}\rangle$,
in general diverges \cite{FRWHamdiag0,Hadamardnec}.
The temporal averaging resolves this problem in a simple 
and satisfactory manner.

Consistency requires that in the instantaneous 
limit $f(\tau)^2 \rightarrow \delta(\tau\!-\!\tau_0)$ the SLE 
solution (\ref{sle5}) reduces to the one in (\ref{Hinstant1}). 
One can check that this indeed the case
\begin{eqnarray}
	\label{slelimit1} 
T_{S,p}(\tau_0) = \lambda_p[S] S_p(\tau_0) + \mu_p[S] S_p^*(\tau_0) 
&\longrightarrow & 
\frac{1}{\sqrt{2 \omega_p(\tau_0)}}\,,
\nonumber \\[1.5mm]
\partial_{\tau} T_{S,p}(\tau_0)= \lambda_p[S] (\partial_{\tau}S_p)(\tau_0) + 
\mu_p[S]  (\partial_{\tau}S_p)^*(\tau_0) & \longrightarrow & 
-i  \sqrt{\frac{\omega_p(\tau_0)}{2}}\,.
\end{eqnarray}


\subsection{SLE in Schr\"{o}dinger picture and minimization 
over initial data}\label{sec2.3}

As seen in (\ref{fock1}), (\ref{fock4}) a SLE can be 
obtained by a minimization over the state space in the Heisenberg 
picture. The relevant matrix element can be transcribed into 
the Schr\"{o}dinger picture via (\ref{Schroed6}). 
Since the state vectors now evolve, the natural minimization 
is over their initial vectors $|\psi;\tau_0\rangle_S$, which 
can be identified with the Heisenberg picture states. 
The minimization in the Schr\"{o}dinger picture therefore 
assumes the form 
\begin{equation}
	\label{ssle0} 
\inf_{|\psi;\tau_0\rangle_{\mathsf{s}} \in {\cal F}_0}
\int\! d\tau \, f(\tau)^2\,
{}_{\mathsf{s}}\langle \psi;\tau| i \partial_{\tau} | \psi;\tau \rangle_{\mathsf{s}} \,.
\end{equation}
The Fock vacua correspond to time dependent Gaussians 
(\ref{Schroed7}), (\ref{Schroed8}) satisfying 
the functional Schr\"{o}dinger equation. The identity 
(\ref{Schroed14}) shows that the functional ${\cal E}_p$ on the 
space of solutions of the wave equation to be minimized is 
the same as in the Heisenberg picture. However, the relevant 
parameters are now the {\it initial data}.

In order to reformulate the minimization problem as one 
with respect to the initial data we proceed as follows.  
The solution formula (\ref{Deltadef3}) can be applied to 
the mode functions themselves giving
\begin{equation}
	\label{ssle1} 
T_p(\tau) = \Delta_p(\tau,\tau_0) \partial_{\tau_0} T_p(\tau_0) 
- \partial_{\tau_0} \Delta_p(\tau, \tau_0) T_p(\tau_0)\,.
\end{equation}
Inserting (\ref{ssle1}) and its time derivative 
into the definitions of ${\cal E}_p$ and ${\cal D}_p$ gives
\begin{eqnarray}
	\label{ssle2} 
{\cal E}_p &\! =\! & J_p(\tau_0) |w_p|^2 + K_p(\tau_0) |z_p|^2 
- \partial_{\tau_0} J_p(\tau_0) \Re(w_pz_p)\,, 
\nonumber \\[1.5mm]
{\cal D}_p &\! =\! & J_p(\tau_0) w_p^2 + K_p(\tau_0) z_p^2 
- \partial_{\tau_0} J_p(\tau_0) w_pz_p \,,
\end{eqnarray}
with $z_p:= T_p(\tau_0)$, $w_p := \partial_{\tau_0} T_p(\tau_0)$, 
subject to $w_p z_p^* - w_p^* z_p = -i$. The coefficients 
\begin{eqnarray}
	\label{ssle3}
J_p(\tau_0) &\! =\! & \frac{1}{2} \int\! d\tau \, f(\tau)^2 
\big[ \big( \partial_{\tau} \Delta_p(\tau,\tau_0) \big)^2 + 
\omega_p(\tau)^2 \Delta_p(\tau,\tau_0)^2 \big] \,,
\nonumber \\[1.5mm]
K_p(\tau_0) &\! =\! & \frac{1}{2} \int\! d\tau \, f(\tau)^2 
\big[ \big( \partial_{\tau} \partial_{\tau_0} \Delta_p(\tau,\tau_0) \big)^2 + 
\omega_p(\tau)^2 \big(\partial_{\tau_0} \Delta_p(\tau,\tau_0)\big)^2 \big]\,, 
\end{eqnarray}
are manifestly positive and are invariant under Bogoliubov 
transformations because the commutator function is. They are 
also independent of the initial data because $\Delta_p(\tau, \tau_0)$ 
is uniquely characterized by (\ref{Deltadef1}). No reference to 
any fiducial solution is made, instead ${\cal E}_p, {\cal D}_p$ in 
(\ref{ssle2}) are functions of the constrained complex 
initial data $z_p,w_p$.

Neither the sign nor the modulus of of $\partial_{\tau_0} J_p(\tau_0)$ 
is immediate. For the subsequent analysis we anticipate the inequality 
\begin{equation}
	\label{ssle4} 
4 K_p(\tau_0) J_p(\tau_0) - (\partial_{\tau_0} J_p(\tau_0))^2 >0\,.
\end{equation}
Further we momentarily simplify the notation by writing 
$K,J,\dot{J}$ for $K_p(\tau_0), J_p(\tau_0)$, $\partial_{\tau_0} J_p(\tau_0)$,
respectively. In addition we omit the subscripts $p$ from 
$z_p,w_p, {\cal E}_p, {\cal D}_p$. Since $T_p(\tau)$ in (\ref{ssle1}) 
can be multiplied by a $\tau$-independent phase we may assume 
$z$ to be real and positive. The solution of the Wronskian condition 
then gives 
\begin{equation}
	\label{ssle5} 
w = w_R - \frac{i}{2 z}\,, \quad w_R, \,z >0\,.
\end{equation}
Inserting (\ref{ssle5}) into the above ${\cal E}$ one is lead to 
minimize
\begin{equation}
	\label{ssle6} 
{\cal E} = J\Big( w_R^2 + \frac{1}{4 z^2} \Big) + K z^2 - 
\dot{J} zw_R\,,
\end{equation}
which gives
\begin{equation}
	\label{ssle7}
(z^{\rm min})^2 = \frac{J}{\sqrt{ 4 KJ - \dot{J}^2}}\,,\quad 
w^{\rm min}_R = \frac{z^{\rm min}}{2} \frac{\dot{J}}{J}\,.
\end{equation}
On general grounds the minimizer should be a zero of ${\cal D}$. 
Since 
\begin{equation}
	\label{ssle8} 
\frac{w^{\rm min}}{z^{\rm min}} 
= \frac{\dot{J}}{2 J} - i \frac{\sqrt{ 4 KJ - \dot{J}^2}}{2 J},
\end{equation}
this is indeed the case. Reinserting (\ref{ssle7}) into ${\cal E}$ 
gives 
\begin{equation}
	\label{ssle9} 
{\cal E}^{\rm min} = \frac{1}{2} \sqrt{4 KJ - \dot{J}^2}\,.
\end{equation}
Since ${\cal E}$ in the original form (\ref{HHam3}) is manifestly 
non-negative this shows the selfconsistency of (\ref{ssle4}). 
The solution is unique up to a constant phase left undetermined 
by choosing $z>0$. Upon insertion of (\ref{ssle3}) in 
(\ref{ssle7}), (\ref{ssle8}) the minimizing initial data become 
functionals of $\Delta$, for which we write $z_p[\Delta](\tau_0) 
= z^{\rm min}$, $w_p[\Delta](\tau_0) = w^{\rm min}$. In summary
\pagebreak[3]
\begin{theorem} \label{sslethm} \makebox[4cm]{} 
\vspace{-3mm} 

\begin{itemize}
\item[(a)] 
A SLE can be characterized as a solution 
$|\psi;\tau\rangle_{\mathsf{s}}$ of the time dependent Schr\"{o}dinger 
equation (\ref{Schroed3}), (\ref{Schroed5}) with initial data 
$|\psi;\tau_0\rangle_{\mathsf{s}}$ that minimize (for fixed window function $f$) 
the quantity $\int\! d\tau f(\tau)^2 
{}_{\mathsf{s}}\langle \psi;\tau| i \partial_{\tau} | \psi;\tau\rangle_{\mathsf{s}}$.  
The minimizing wave function is a Gaussian $\Omega_T[u]$ 
of the form (\ref{Schroed7}) with $T = T^{\rm SLE}$, which is up to 
a  {time independent, potentially $p$ dependent,} phase uniquely determined by the commutator function. 
\item[(b)]
Specifically 
\begin{eqnarray}
	\label{ssle10}  
T^{\rm SLE}_p(\tau) &\! =\! & \Delta_p(\tau,\tau_0) w_p[\Delta](\tau_0) - 
\partial_{\tau_0} \Delta_p(\tau, \tau_0) z_p[\Delta](\tau_0)\,,
\nonumber \\[1.5mm]
z_p[\Delta](\tau_0) &\! =\! & \sqrt{\frac{J_p(\tau_0)}{ 2 {\cal E}_p^{\rm SLE}}} = 
T_p^{\rm SLE}(\tau_0)\,,
\nonumber \\[1.5mm]
w_p[\Delta](\tau_0) &\! =\! & \partial_{\tau_0} T_p^{\rm SLE}(\tau_0) - i 
\sqrt{ \frac{{\cal E}_p^{\rm SLE}}{ 2 J_p(\tau_0)}} = (\partial_{\tau} T_p^{\rm SLE})(\tau_0)\,,
\end{eqnarray}
{where  $J_p(\tau_0)$ is as in  \eqref{ssle3}, and
${\cal E}_p^{\rm SLE}$ is the minimal energy given by }
\begin{eqnarray}
	\label{ssle11}
\big({\cal E}_p^{\rm SLE}\big)^2 &\! =\! & \frac{1}{8} \int d\tau d \tau' 
f(\tau)^2 f(\tau')^2 \Big\{ 
\big(\partial_{\tau} \partial_{\tau'} \Delta_p(\tau,\tau') \big)^2 + 
2 \omega_p(\tau')^2 (\partial_{\tau} \Delta_p(\tau',\tau) \big)^2 
\nonumber \\[1.5mm]
&+& \omega_p(\tau)^2 \omega_p(\tau')^2 \Delta_p(\tau,\tau')^2 \Big\}\,.
\end{eqnarray}
For the modulus and the phase this gives  
\begin{equation}
	\label{ssle12}
\big| T^{\rm SLE}_p(\tau)\big|^2 = \frac{J_p(\tau)}{2 {\cal E}_p^{\rm SLE}}\,, 
\quad 
\tan\! \big(\! \arg T_p^{\rm SLE}(\tau) \big) = 
-\frac{{\cal E}_p^{\rm SLE} \Delta_p(\tau,\tau_0)}{J_p(\tau,\tau_0)}\,,
\end{equation}
with
\begin{equation}
	\label{ssle11a}
J_p(\tau,\tau_0) := \frac{1}{2} \int\! d\tau_1 \, f(\tau_1)^2 
\big[\partial_{\tau_1} \Delta_p(\tau_1,\tau) \partial_{\tau_1} \Delta_p(\tau_1,\tau_0) + 
\omega_p(\tau_1)^2 \Delta_p(\tau_1,\tau) \Delta_p(\tau_1,\tau_0)\big] \,.
\end{equation}
{We note that $J_p(\tau_0)$ coincides with $J_p(\tau_0,\tau_0)$.}
\end{itemize}
\end{theorem}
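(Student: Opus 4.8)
The plan is to read Theorem \ref{sslethm} as the organized endpoint of the computation already carried out in \eqref{ssle1}--\eqref{ssle9}, and to supply the three ingredients not yet made explicit there: the variational characterization of part (a), the closed double-integral form \eqref{ssle11} of the minimal energy, and the modulus/phase formulas \eqref{ssle12}. For part (a) I would first invoke \eqref{Schroed14}, which shows that the Schr\"{o}dinger-picture objective $\int d\tau f(\tau)^2 {}_{\mathsf{s}}\langle\psi;\tau|i\partial_{\tau}|\psi;\tau\rangle_{\mathsf{s}}$, evaluated on the Gaussians \eqref{Schroed7}, equals $\sum_p {\cal E}_p[T]$. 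The min-max argument \eqref{fock1}--\eqref{fock4} already guarantees that the infimum over the whole one-particle domain ${\cal F}_0$ is attained at a Fock vacuum, i.e. at a Gaussian $\Omega_T[u]$; since $\sum_p {\cal E}_p[T]$ decouples over modes and the Wronskian constraint is imposed per mode, the problem reduces to the fixed-$p$ minimization. Substitution \eqref{ssle1}, which expresses $T_p$ through its initial data $(z_p,w_p)=(T_p(\tau_0),\partial_{\tau_0}T_p(\tau_0))$ via the commutator function, recasts this as the scalar minimization \eqref{ssle6}. Uniqueness up to a constant phase then follows from strict positivity of the minimum (re-established below) together with the fact, recorded in Theorem \ref{th2.1}, that a Wronskian-normalized solution is fixed by its modulus up to a time-independent, possibly $p$-dependent phase.

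The formulas \eqref{ssle10} are then immediate: the first line is \eqref{ssle1} with the minimizers \eqref{ssle7}, \eqref{ssle8} inserted, and using ${\cal E}_p^{\rm SLE}=\tfrac12\sqrt{4K_pJ_p-\dot{J}_p^2}$ from \eqref{ssle9} one rewrites them as $z_p=\sqrt{J_p/(2{\cal E}_p^{\rm SLE})}$ and $w_p=w_R^{\rm min}-i\sqrt{{\cal E}_p^{\rm SLE}/(2J_p)}$, matching \eqref{ssle10}. The substantive step is \eqref{ssle11}. Starting from $({\cal E}_p^{\rm SLE})^2=K_pJ_p-\tfrac14\dot{J}_p^2$ and inserting the single-integral definitions \eqref{ssle3}, I would write the right-hand side as a double integral over $(\tau,\tau')$, symmetrize the integrand under $\tau\leftrightarrow\tau'$, and organize the four resulting groups into perfect squares. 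The decisive input is a set of bilinear identities for the commutator function, obtained from the representation $\Delta_p(\tau,\tau_0)=i(T_p(\tau)T_p(\tau_0)^\ast-T_p(\tau)^\ast T_p(\tau_0))$ of \eqref{Deltadef2} together with the Wronskian normalization: the $\partial_{\tau_0}$-paired combinations at the two points collapse respectively to $\partial_\tau\partial_{\tau'}\Delta_p(\tau,\tau')$, $\partial_\tau\Delta_p(\tau,\tau')$, $\partial_{\tau'}\Delta_p(\tau,\tau')$ and $\Delta_p(\tau,\tau')$, each \emph{independent of $\tau_0$}. This $\tau_0$-independence is exactly what is needed: it confirms that $K_pJ_p-\tfrac14\dot{J}_p^2$ does not depend on the reference time, and after relabeling one of the two cross terms via the antisymmetry $\Delta_p(\tau',\tau)=-\Delta_p(\tau,\tau')$ and the symmetry of the measure it delivers \eqref{ssle11}. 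As a by-product the integrand is a manifest sum of squares, which re-establishes the strict positivity \eqref{ssle4} that had only been anticipated earlier.

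Finally, for \eqref{ssle12} the modulus formula is cleanest through reference-time independence: because $\tau_0$ was arbitrary and ${\cal E}_p^{\rm SLE}$ is now known to be $\tau_0$-independent, repeating the minimization with reference time $\tau$ gives $|T_p^{\rm SLE}(\tau)|^2=J_p(\tau)/(2{\cal E}_p^{\rm SLE})$ for all $\tau$. For the phase I would separate real and imaginary parts of \eqref{ssle10} with $z_p$ chosen real and positive; the imaginary part equals $-\Delta_p(\tau,\tau_0)/(2z_p)$, and dividing by the real part and substituting \eqref{ssle7} reduces $\tan(\arg T_p^{\rm SLE}(\tau))$ to the claimed ratio once the auxiliary identity $J_p(\tau,\tau_0)=\tfrac12\Delta_p(\tau,\tau_0)\,\partial_{\tau_0}J_p(\tau_0)-J_p(\tau_0)\,\partial_{\tau_0}\Delta_p(\tau,\tau_0)$ is verified, again through the $T$-representation; the remark $J_p(\tau_0)=J_p(\tau_0,\tau_0)$ is immediate from \eqref{ssle11a}. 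I expect the main obstacle to be the bookkeeping of the previous paragraph, namely reducing the symmetrized double integral to the four perfect squares and matching them term by term to \eqref{ssle11}, since that is where all the reciprocity structure of the commutator function is consumed; everything else amounts to substitution into results already in hand.
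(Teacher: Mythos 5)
Your proposal is correct and follows essentially the same route as the paper's proof: part (a) is assembled from \eqref{Schroed6}, \eqref{Schroed14}, \eqref{fock1}, \eqref{fock4}; \eqref{ssle10} is read off by inserting the minimizers \eqref{ssle7}--\eqref{ssle9} into \eqref{ssle1}; and \eqref{ssle11} is obtained by the same ``reduction of order'' mechanism, since your bilinear collapse identities (derived from $\Delta_p(\tau,\tau_0)=i(T_p(\tau)T_p(\tau_0)^\ast-T_p(\tau)^\ast T_p(\tau_0))$ plus the Wronskian) are precisely the paper's identity \eqref{deltaID} and its $\partial_\tau$, $\partial_{\tau'}$, $\partial_\tau\partial_{\tau'}$ derivatives, with the manifest non-negativity of \eqref{ssle11} re-establishing \eqref{ssle4} exactly as in the paper. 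Your auxiliary identity for the phase is likewise \eqref{ssle13a} after using antisymmetry of $\Delta_p$, and your sign bookkeeping checks out. The one genuine deviation is the modulus formula: the paper verifies the pointwise identity \eqref{ssle13}, $K_p(\tau_0)\Delta^2+J_p(\tau_0)(\partial_{\tau_0}\Delta)^2-\Delta\,\partial_{\tau_0}\Delta\,\partial_{\tau_0}J_p(\tau_0)=J_p(\tau)$, by another reduction-of-order computation, whereas you argue conceptually: since ${\cal E}_p^{\rm SLE}$ is the minimum of a functional that never references $\tau_0$ (and is exhibited as $\tau_0$-free by \eqref{ssle11}), and the minimizer is unique up to a constant phase, re-running the minimization with reference time $\tau$ immediately yields $|T_p^{\rm SLE}(\tau)|^2=J_p(\tau)/(2{\cal E}_p^{\rm SLE})$. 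This is valid given the uniqueness already in hand, and it buys you one less quartic computation; the paper's identity \eqref{ssle13}, by contrast, is a self-contained algebraic check that does not presuppose uniqueness and doubles as an independent consistency test of the framework.
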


\begin{proof}\

(a) This follows from (\ref{Schroed6}), (\ref{fock1}), 
(\ref{fock4}) and (\ref{Schroed14}).

(b) Eq.~(\ref{ssle10}) is the explicit form of (\ref{ssle1}) with 
minimizing parameters (\ref{ssle7}), (\ref{ssle9}). In the 
explicit expressions (\ref{ssle12}) with (\ref{ssle11}) and (\ref{ssle11a}) 
a reduction of order occurs: where naively terms fourth or third order 
in $\Delta$ and its derivatives appear, repeated use of 
\begin{eqnarray}
	\label{deltaID} 
\partial_{\tau_0} \Delta_p(\tau,\tau_0) 
\Delta_p(\tau',\tau_0) - \Delta_p(\tau,\tau_0) 
\partial_{\tau_0} \Delta_p(\tau',\tau_0) = \Delta_p(\tau,\tau') \,,
\end{eqnarray}
(as well as its $\partial_{\tau}$, $\partial_{\tau'}$ and $\partial_{\tau} \partial_{\tau'}$ 
derivatives) leads to results merely quadratic in $\Delta$ and 
its derivatives. In detail, by inserting the definitions into 
$({\cal E}_p^{\rm SLE})^2 = K_p(\tau_0) J_p(\tau_0) - (\partial_{\tau_0} J_p(\tau_0))^2/4$,
one obtains an expression which is initially quartic in $\Delta$. 
Repeated application of (\ref{deltaID}) then leads to (\ref{ssle11}). 
Since the right hand side of (\ref{ssle11}) is manifestly 
non-negative also the anticipated inequality (\ref{ssle4}) follows 
(without presupposing the minimization procedure). 
The result for the modulus-square follows from 
\begin{eqnarray}
	\label{ssle13}
K_p(\tau_0) \Delta(\tau,\tau_0)^2 + J_p(\tau_0) 
(\partial_{\tau_0} \Delta(\tau,\tau_0))^2 
- \Delta(\tau,\tau_0) \partial_{\tau_0} \Delta(\tau,\tau_0) \partial_{\tau_0} J_p(\tau_0) 
= J_p(\tau)\,
\end{eqnarray}
and can be verified along similar lines. Finally, the ratio 
$\Im T_p^{\rm SLE}/\Re T_p^{\rm SLE}$ can be read off from (\ref{ssle10}) 
and gives the $\tan$ of the phase. Initially the ratio 
has as denominator the left hand side of 
\begin{eqnarray}
	\label{ssle13a} 
2 J_p(\tau_0) \partial_{\tau_0} \Delta_p(\tau_0,\tau) \!-\! \partial_{\tau_0} J_p(\tau_0) 
\Delta_p(\tau_0,\tau) = 2 J_p(\tau,\tau_0)\,. 
\end{eqnarray}
The reduction of order occurs as before. 
\end{proof} 

\noindent
{\bf Remarks}

(i) Modulo the dependence on the averaging 
function the expression (\ref{ssle10}) realizes the goal of 
constructing a Hadamard state solely from the state independent 
commutator function in a way different from \cite{SJstates,SJHadamard}.  

(ii) The parts (a) and (b) are logically independent and (b) can be obtained 
solely from minimizing ${\cal E}_p$ in (\ref{ssle2}). A minimization over 
initial data in the Heisenberg picture is however less compelling 
because for selfinteracting QFTs the fields (as operator valued 
distributions) do in general not admit a well-defined restriction 
to a sharp constant time hypersurface. On the other hand, the 
Schr\"{o}dinger picture in QFT is frequently by default defined 
on a spatial lattice, see Proposition \ref{pr2.1} here. The Gaussian 
(\ref{Schroed7}) is then uniquely  determined by the parameters $z_p = 
T_p(\tau_0)$, $w_p = (\partial_{\tau} T_p)(\tau_0)$ in its initial value
$\Omega_T[u]|_{\tau = \tau_0}$. Conceptually, therefore (b) is naturally 
placed in the context of (a). 

(iii) The relation $|T_p^{\rm SLE}(\tau)| \propto \sqrt{J_p(\tau)}$  
also implies that $J(\tau)$ solves the Ermakov-Pinney equation 
with very specific $f$-dependent initial conditions implicitly 
set by those of $\Delta_p$.

(iv) In terms of the data in (\ref{ssle12}) the SLE two-point  
function can be expressed as 
\begin{eqnarray}
	\label{sletwop0} 
&\!\!\!\!\!\!\!\!\!\!& T_p^{\rm SLE}(\tau) T_p^{\rm SLE}(\tau')^* 
\nonumber \\[1.5mm]
&\!\!\!\!\!\!\!\!\!\!& = 
\frac{\sqrt{ J_p(\tau) J_p(\tau')}}{2 {\cal E}_p^{\rm SLE}} 
\left(
\frac{J_p(\tau,\tau_0) - i {\cal E}_p^{\rm SLE} \Delta_p(\tau,\tau_0)}%
{J_p(\tau,\tau_0) + i {\cal E}_p^{\rm SLE} \Delta_p(\tau,\tau_0)} 
\frac{J_p(\tau',\tau_0) + i {\cal E}_p^{\rm SLE} \Delta_p(\tau',\tau_0)}%
{J_p(\tau',\tau_0) - i {\cal E}_p^{\rm SLE} \Delta_p(\tau',\tau_0)} \right)^{1/2}
\!\!.
\end{eqnarray}

(v) In principle, the equivalence of (\ref{ssle10}) to the original 
expression (\ref{sle5}) 
is a consequence of the respective, independently established, 
uniqueness and the identity (\ref{Schroed14}). It is nevertheless 
instructive to verify the equivalence of (\ref{ssle10}) and 
(\ref{sle5}) directly. The main ingredient is the 
postponed proof of Proposition \ref{pr2.2} to which we now turn.

We begin with a simple basic fact 

\begin{lemma}\label{lm2.3} Let $\Delta: C([\tau_i,\tau_f] )\rightarrow C([\tau_i,\tau_f]^2)$ 
be the following commutator functional $\Delta[S](\tau,\tau_0) 
= i (S(\tau) S(\tau_0)^* - S(\tau)^* S(\tau_0) )$. Then 
$\Delta[S]$ is real valued, antisymmetric in $\tau,\tau_0$, and obeys 
$\Delta[a S + b S^*](\tau,\tau_0) = (|a|^2 - |b|^2 ) 
\Delta[S](\tau,\tau_0)$, $a,b \in \mathbb{C}$. On a solution $S$  
of the differential equation (\ref{FLode1}) $\Delta[S]$ becomes 
the commutator function, which is characterized by (\ref{Deltadef1})
and is independent of the choice of Wronskian normalized 
fiducial solution.  
\end{lemma}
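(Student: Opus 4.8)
The plan is to verify the three stated algebraic properties of the functional $\Delta[S]$ by direct inspection of its definition, and then, on an actual solution of (\ref{FLode1}), to match $\Delta[S]$ against the characterizing initial-value problem (\ref{Deltadef1}). All four assertions reduce to elementary manipulations, so the proof is essentially bookkeeping; I describe the order below.

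First I would dispatch reality and antisymmetry. Complex-conjugating $\Delta[S](\tau,\tau_0) = i\big(S(\tau)S(\tau_0)^* - S(\tau)^*S(\tau_0)\big)$ negates the bracket, which the factor $i$ undoes, so $\Delta[S]$ is real. Swapping $\tau\leftrightarrow\tau_0$ likewise negates the bracket, giving antisymmetry and, in particular, $\Delta[S](\tau_0,\tau_0)=0$. For the Bogoliubov scaling I would substitute $T=aS+bS^*$, $T^*=a^*S^*+b^*S$ and expand $T(\tau)T(\tau_0)^* - T(\tau)^*T(\tau_0)$: the terms bilinear in $S(\tau)S(\tau_0)$ and in $S(\tau)^*S(\tau_0)^*$ cancel (coefficients $ab^*-ab^*$ and $a^*b-a^*b$), while $S(\tau)S(\tau_0)^*$ and $S(\tau)^*S(\tau_0)$ survive with coefficients $\pm(|a|^2-|b|^2)$. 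This reconstitutes the bracket of $\Delta[S]$ scaled by $(|a|^2-|b|^2)$, proving $\Delta[aS+bS^*]=(|a|^2-|b|^2)\,\Delta[S]$.

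Next I would identify $\Delta[S]$ with the commutator function. Because $\omega_p(\tau)$ is real, both $S$ and $S^*$ solve the real linear equation $[\partial_\tau^2+\omega_p^2]S=0$; hence for fixed $\tau_0$ the combination $\tau\mapsto\Delta[S](\tau,\tau_0)$ solves the same equation, and by antisymmetry so does the dependence on $\tau_0$. The first line of (\ref{Deltadef1}) is thus met. For the normalization, $\Delta[S](\tau_0,\tau_0)=0$ is already in hand, and $\partial_\tau\Delta[S](\tau,\tau_0)\big|_{\tau=\tau_0}=i\big(\partial_\tau S\,S^*-\partial_\tau S^*\,S\big)\big|_{\tau_0}$ equals $1$ exactly by the Wronskian condition $\partial_\tau S\,S^*-\partial_\tau S^*\,S=-i$ of (\ref{FLode1}). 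Finally, independence of the fiducial choice follows because any other Wronskian-normalized complex solution is $T=aS+bS^*$ (as $S,S^*$ span the solution space), where the very same Wronskian computation forces $|a|^2-|b|^2=1$; the scaling property then yields $\Delta[T]=\Delta[S]$.

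The only point requiring any care is the shared role of the Wronskian condition: it simultaneously supplies the derivative normalization $\partial_\tau\Delta|_{\tau=\tau_0}=1$ and fixes $|a|^2-|b|^2=1$ for two normalized solutions. Both reductions rest on the single bilinear identity $\partial_\tau S\,S^*-\partial_\tau S^*\,S=-i$, so there is no genuine obstacle; the substance of the lemma is organizational, packaging facts that are invoked repeatedly in Sections \ref{sec2.2} and \ref{sec2.3}.
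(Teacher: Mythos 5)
Your proposal is correct and complete: the paper states Lemma \ref{lm2.3} as a ``simple basic fact'' without supplying a proof, and your direct verification (reality and antisymmetry by conjugation/swap, the Bogoliubov scaling by expanding the bilinear and cancelling the $ab^*$, $a^*b$ cross terms, then matching against the characterization (\ref{Deltadef1}) via the Wronskian) is exactly the bookkeeping the paper implicitly relies on. In particular you correctly handle the one point needing care --- that the Wronskian condition both fixes $\partial_\tau\Delta[S](\tau,\tau_0)\big|_{\tau=\tau_0}=1$ and forces $|a|^2-|b|^2=1$ for any two normalized solutions, whence independence of the fiducial choice follows from the scaling property.
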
 

\begin{proof}[Proof of Proposition \ref{pr2.2}]\

We can regard $J_p(\tau_0), K_p(\tau_0)$ as functionals over 
the differentiable functions $C^1([\tau_i,\tau_f])$,  by 
replacing the commutator function by the commutator functional 
$\Delta_p(\tau,\tau_0) \mapsto \Delta_p[S](\tau,\tau_0)
= i (S(\tau) S(\tau_0)^* - S(\tau)^* S(\tau_0) )$. Inserting this  
into (\ref{ssle3}) and comparing with the 
definitions (\ref{sle4}) one finds 
\begin{eqnarray}
	\label{ssle14} 
J_p(\tau_0) &\! =\! & 2|S_p(\tau_0)|^2 c_1 - [S_p(\tau_0)^*]^2 c_2 - 
S_p(\tau_0)^2 c_2^* = {\cal J}[S](\tau_0)\,,
\nonumber \\[1.5mm]
K_p(\tau_0) &\! =\! & 2 |\partial_{\tau_0}S_p(\tau_0)|^2 c_1 - 
[\partial_{\tau_0} S_p(\tau_0)^*]^2 c_2 - 
[\partial_{\tau_0} S_p(\tau_0)]^2 c_2^* = {\cal K}[S](\tau_0)\,.
\end{eqnarray}  
Using (\ref{ssle14}) one can compute the left hand side of 
(\ref{ssle4}) in terms of $c_1,c_2$. The result is 
\begin{equation}
	\label{ssle15} 
4 K_p(\tau_0) J_p(\tau_0) - (\partial_{\tau_0} J_p(\tau_0))^2 = 
4(c_1^2 - |c_2|^2) = 4 {\cal I}[S]\,. 
\end{equation}
Since $c_1 \geq |c_2|$ this reconfirms (\ref{ssle4}). The invariance 
(\ref{sleinv2}) of ${\cal I}, {\cal J}, {\cal K}$ follows from 
Lemma \ref{lm2.3}.  
\end{proof}

Finally, we verify the equivalence of (\ref{ssle10}) and (\ref{sle5}). 
For a general solution $T_p(\tau)$ one can match the 
parameterizations (\ref{sle1}) and (\ref{ssle1}) by realizing 
the commutator function in terms of $S$. This 
gives 
\begin{eqnarray}
	\label{ssle16} 
\lambda &\! =\! & i \big(S_p(\tau_0)^* w - \partial_{\tau_0} S_p(\tau_0)^* z \big) \,,
\nonumber \\[1.5mm]
\mu &\! =\! & i \big(\partial_{\tau_0} S_p(\tau_0) z - S_p(\tau_0) w \big) \,.
\end{eqnarray}  
The same must hold for the minimizing parameters. A brute force 
verification of the latter is cumbersome. Instead we compare 
the modulus square computed from (\ref{sle2}), i.e.
$(|\mu|^2 + |\lambda|^2) |S_p(\tau)|^2 + \mu \lambda^* S(\tau)^2 + 
\lambda \mu^* [S(\tau)^*]^2$ with $J_p(\tau)/(2\sqrt{c_1^2 - |c_2|^2})$, 
taking advantage of the directly verified Eq.~(\ref{ssle13}). 
Inserting (\ref{ssle13}) for $J_p(\tau)$ and comparing 
coefficients of $|S_p(\tau)|^2$, $S_p(\tau)^2$, one finds 
\begin{equation}
	\label{ssle17}  
|\mu^{\rm min}|^2 + |\lambda^{\rm min}|^2 = \frac{c_1}{\sqrt{c_1^2 - |c_2|^2}}\,, 
\quad 
(\lambda^{\rm min})^* \mu^{\rm min}= - \frac{c_2^*}{ 2 \sqrt{ c_1^2 - |c_2|^2}}\,.
\end{equation}
These can be solved for $\mu^{\rm min}, \lambda^{\rm min}$, and with the 
choice of phase $\arg \lambda^{\rm min} = \pi - \arg c_2$ 
one recovers (\ref{sle6}). This provides a direct verification
-- modulo phase choices -- of (\ref{ssle16}) for the minimizers 
(\ref{ssle7}) and (\ref{sle6}). The phases are however not 
necessarily matched, in particular real $\mu$ does not automatically 
correspond to real $z$. 
\medskip

\newpage 
\section{Convergent small momentum expansion for SLE} \label{sec3}

The SLE have been introduced on account of their Hadamard 
property, which relates to a Minkowski-like behavior at 
large spatial momentum. Here we show that SLE admit a convergent 
small momentum expansion, both for massive and for massless 
theories. Remarkably, the momentum dependence turns out to 
Minkowski-like also for small momentum. In the massless case 
this provides a cure for the infrared divergences plaguing 
the two-point functions on FL cosmologies with accelerated 
expansion. In fact, for any scale factor the leading terms are 
given by
\begin{eqnarray}
	\label{sleIR} 
T_p^{\rm SLE}(\tau) T_p^{\rm SLE}(\tau')^* = \frac{\bar{a}}{2p} - 
\frac{i}{2} (\tau - \tau') + O(p)\,,
\quad 
\bar{a}:=\bigg(\frac{\int\! d\tau f(\tau)^2}{\int \!d\tau f(\tau)^2 
a(\tau)^{2d-2}}\bigg)^{\frac{1}{2}}\,. 
\end{eqnarray}

\subsection{Fiducial solutions and their Cauchy product}\label{sec3.1} 

A SLE can be defined either in terms of a fiducial solution $S_p$ 
or in terms of the Commutator function $\Delta_p$. Here we 
prepare results establishing uniformly convergent series for 
these solutions as well as their Cauchy products. Throughout 
we consider the differential equation 
\begin{equation}
	\label{odedef0a} 
[\partial_{\tau}^2 + \omega_p(\tau)^2] S_p(\tau) =0\,, \quad 
\omega_p(\tau)^2 = \omega_0(\tau)^2 + p^2 \omega_2(\tau)^2\,,  
\end{equation}
where $\omega_0, \omega_2$ are continuous real-valued
functions on $[\tau_i,\tau_f]$ and $\omega_2$ is not identically zero. 
The case $\omega_0(\tau)^2 = m(\tau)^2 a(\tau)^{2d}$, 
$\omega_2(\tau)^2 = a(\tau)^{2d -2}$ 
corresponds to the dispersion relation arising from the Klein Gordon equation;
the function $m(\tau)$ may have zeros or vanish identically 
(massless case). Throughout we write $p$ for the modulus of the spatial momentum.

\begin{proposition} \label{Ssmallp} The differential equation 
(\ref{odedef0a}) admits convergent series solutions 
with a radius of convergence $ p_*>0$ on $[\tau_i,\tau_f]$, 
such that for any $p<p_*$ 
\begin{eqnarray}
\label{Sseries} 
S_p(\tau)&\! =\! & \sum_{n=0}^\infty S_n(\tau)p^{2n}\,,
\quad {\rm and} \quad \partial_\tau S_p(\tau)=
\sum_{n=0}^\infty \partial_\tau S_n(\tau)p^{2n}\,,
\end{eqnarray}
and the sums converge {\it uniformly} on $[\tau_i,\tau_f]$.   
\end{proposition}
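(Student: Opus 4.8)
The plan is to build the solution by a Picard-type iteration in the parameter $p^2$, treating $p^2\omega_2(\tau)^2$ as a perturbation of the unperturbed operator $\partial_\tau^2+\omega_0(\tau)^2$. First I substitute the ansatz $S_p=\sum_{n\ge0}S_n(\tau)p^{2n}$ into (\ref{odedef0a}) and collect equal powers of $p^2$. Fixing $p$-independent initial data at some $\tau_0\in[\tau_i,\tau_f]$ for the leading coefficient, this yields the triangular recursion
\begin{equation}
[\partial_\tau^2+\omega_0(\tau)^2]S_0=0,\qquad
[\partial_\tau^2+\omega_0(\tau)^2]S_n=-\,\omega_2(\tau)^2\,S_{n-1}\quad(n\ge1),
\end{equation}
where $S_0$ carries the prescribed data and $S_n(\tau_0)=\partial_\tau S_n(\tau_0)=0$ for $n\ge1$.

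Next I solve the recursion in closed form using the commutator function $\Delta_0:=\Delta_{p=0}$ of the unperturbed operator, characterized as in (\ref{Deltadef1}) with $\omega_0$ in place of $\omega_p$. By Duhamel's principle (variation of parameters with kernel $\Delta_0$), the unique solution with vanishing initial data is
\begin{equation}
S_n(\tau)=-\int_{\tau_0}^{\tau}\Delta_0(\tau,\tau')\,\omega_2(\tau')^2\,S_{n-1}(\tau')\,d\tau',\qquad
\partial_\tau S_n(\tau)=-\int_{\tau_0}^{\tau}\partial_\tau\Delta_0(\tau,\tau')\,\omega_2(\tau')^2\,S_{n-1}(\tau')\,d\tau'.
\end{equation}
Here $\Delta_0(\tau,\tau)=0$ and $\partial_\tau\Delta_0(\tau,\tau')|_{\tau'=\tau}=1$ guarantee that the integral operator inverts $\partial_\tau^2+\omega_0^2$ with the right boundary behavior, so the $S_n$ are well defined and $C^2$.

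The quantitative heart of the argument is a factorial bound obtained by induction on $n$. Since $[\tau_i,\tau_f]$ is compact and $\Delta_0,\partial_\tau\Delta_0,\omega_2^2$ are continuous, I may set $A:=\sup|S_0|$, $C:=(\sup|\Delta_0|)(\sup|\omega_2^2|)$ and $L:=\tau_f-\tau_i$, all finite. Feeding $|S_{n-1}(\tau')|\le A(C|\tau'-\tau_0|)^{n-1}/(n-1)!$ into the Duhamel formula and integrating gives
\begin{equation}
|S_n(\tau)|\le A\,\frac{(C|\tau-\tau_0|)^n}{n!}\le A\,\frac{(CL)^n}{n!},
\end{equation}
and an entirely analogous estimate bounds $|\partial_\tau S_n(\tau)|$ by $(\sup|\partial_\tau\Delta_0|/\sup|\Delta_0|)\,A\,(CL)^n/n!$. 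Both coefficient series are therefore dominated, uniformly in $\tau$, by the entire majorant $A\,e^{CL p^2}$. This already proves uniform (indeed absolute) convergence of (\ref{Sseries}) on $[\tau_i,\tau_f]$ for every $p$, so the finite radius $p_*$ claimed in the statement can in fact be taken arbitrarily large.

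The one step that deserves care — the main obstacle — is upgrading this uniform convergence of the formal series to the assertion that its sum is a genuine $C^2$ solution of (\ref{odedef0a}). I would close this as follows: the bounds above show that $\sum_n\partial_\tau S_n\,p^{2n}$ converges uniformly, so the limit $S_p$ is $C^1$ with $\partial_\tau S_p=\sum_n\partial_\tau S_n\,p^{2n}$; reading $\partial_\tau^2 S_n=-\omega_0^2 S_n-\omega_2^2 S_{n-1}$ off the recursion furnishes the same factorial control on the second derivatives, legitimizing a second termwise differentiation. Summing the recursion against $p^{2n}$ then collapses, by the Cauchy-product bookkeeping that names this subsection, to $[\partial_\tau^2+\omega_p^2]S_p=0$, with $S_p$ attaining the prescribed initial data. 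Applying the construction to two independent choices of data at $\tau_0$ yields a fundamental system, so every solution — in particular any Wronskian-normalized complex fiducial solution — is captured.
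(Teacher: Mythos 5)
Your proof is correct, but it takes a genuinely different route from the paper's. The paper recasts (\ref{odedef0a}) as a fixed-point problem for the pair $(S_p,\partial_\tau S_p)$ on the Banach space $C([\tau_i,\tau_f],\mathbb{C}^2)$, with a Green's function kernel integrated over the \emph{whole} interval, and invokes the Banach fixed point theorem: the map is a contraction once $p^2(\tau_f-\tau_i)R<1$, so existence, uniqueness of the fixed point (which is what identifies the second component as $\partial_\tau S_p$), and the uniformly convergent power series all follow from the geometric convergence of the iterates --- but only for $p$ below a finite threshold $p_*$. You instead solve the order-by-order recursion in closed form with the retarded kernel $\Delta_0$ based at $\tau_0$ (essentially the recursion and kernel the paper itself introduces only \emph{after} the proposition, in (\ref{Srec1})--(\ref{Srec3})) and establish factorial majorants $|S_n(\tau)|\le A\,(C|\tau-\tau_0|)^n/n!$ by induction; the induction step closes correctly, since the extra factor $|\tau-\tau_0|^{n-1}/(n-1)!$ integrates to $|\tau-\tau_0|^n/n!$, and the derivative bound works the same way because $\Delta_0(\tau,\tau)=0$ kills the boundary term. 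This is more elementary (no fixed point theorem) and yields a strictly stronger conclusion: the series are dominated by $A\,e^{CLp^2}$, so the solutions with $p$-independent initial data are entire in $p^2$ and the radius $p_*$ can be taken arbitrarily large, whereas the contraction argument caps it. The price is that you must separately legitimize termwise differentiation (twice), which you handle correctly by reading the second-derivative bounds off the recursion; in the paper this issue is absorbed into the uniqueness of the fixed point.

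One caveat: your closing sentence overreaches. For fixed $p$ your two series solutions do form a fundamental system, but only solutions whose initial data are $p$-independent (more generally, analytic in $p^2$) are ``captured'' as convergent series of the form (\ref{Sseries}); the paper stresses exactly this, characterizing the relevant subspace of solutions by the IR-finite data (\ref{SSeriesinit}). In particular, the massless SLE solution, with its $p^{-1/2}$ behavior, is Wronskian-normalized yet not of this form. This does not affect the proposition itself, which only asserts existence of such solutions.
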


These solutions in particular have IR finite initial data
\begin{eqnarray}
	\label{SSeriesinit} 
\lim_{p \rightarrow 0} S_p(\tau_0) =: z_0 < \infty\,,\quad 
\lim_{p \rightarrow 0} \partial_{\tau_0} S_p(\tau_0) =: w_0 < \infty\,.
\end{eqnarray}
The proof below entails that the subspace of solutions described by 
the proposition can be characterized by (\ref{SSeriesinit}).
In order to prove the proposition, we shall need the following standard 
existence and uniqueness result for the solutions of a second order 
linear ODE (which we state without proof):
\begin{lemma}\ \label{existlemma}
Consider the initial value problem   
\begin{eqnarray}
y''(\tau)+\alpha(\tau)y'(\tau)+\beta(\tau)y(\tau)
=g(\tau)\,,\quad y(\tau_0)=u\,,\,\,y'(\tau_0)=v\,.
\end{eqnarray}
If $\alpha,\,\beta,\,g$ are continuous functions on an open interval 
$I\ni \tau_0$, then there exists a unique solution of this initial 
value problem, and this solution exists throughout the interval $I$.
\end{lemma}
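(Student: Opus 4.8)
The plan is to prove this by the standard reduction of the scalar second-order problem to a first-order linear system, followed by a Picard-iteration (contraction-mapping) argument for local solvability, with the global-existence claim supplied by the linearity of the equation through a Gr\"onwall bound. First I would rewrite the problem as a system for $Y(\tau)=(y(\tau),y'(\tau))^{\top}$, namely $Y'=A(\tau)Y+G(\tau)$ with the continuous matrix $A(\tau)=\bigl(\begin{smallmatrix} 0 & 1 \\ -\beta(\tau) & -\alpha(\tau)\end{smallmatrix}\bigr)$ and inhomogeneity $G(\tau)=(0,g(\tau))^{\top}$, subject to $Y(\tau_0)=(u,v)^{\top}$. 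By the fundamental theorem of calculus a continuous $Y$ solves this system with the prescribed data if and only if it satisfies the Volterra integral equation $Y(\tau)=Y(\tau_0)+\int_{\tau_0}^{\tau}[A(s)Y(s)+G(s)]\,ds$, which recasts existence and uniqueness as a fixed-point problem for the integral operator on the right.

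For local existence and uniqueness I would fix a compact subinterval $[\tau_0-h,\tau_0+h]\subset I$ and work in the Banach space $C([\tau_0-h,\tau_0+h];\mathbb{R}^2)$ with the supremum norm. Since $A$ is continuous it is bounded, say $\norm{A(s)}\le M$, on this compact interval, so the integral operator is Lipschitz with constant $Mh$. Either one chooses $h$ small enough that $Mh<1$ and invokes the Banach fixed-point theorem directly, or --- to avoid shrinking $h$ --- one estimates the $k$-fold iterate of the operator and finds its Lipschitz constant to be $(Mh)^{k}/k!$, which is eventually a contraction; either way the iteration converges uniformly to a unique continuous fixed point, which is the desired local solution, its uniqueness being immediate from the contraction.

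The main point --- the step that distinguishes the linear case from the general nonlinear theory, where only local existence holds --- is the passage from the local to a global solution on all of $I$. Here I would consider the maximal interval of existence $(\tau_-,\tau_+)\subseteq I$ on which the unique solution is defined and argue that its endpoints can only be those of $I$. Fixing any compact $[a,b]\subset I$ with $\tau_0\in(a,b)$, the bound $\norm{A(s)}\le M$ there together with the integral equation yields, for $\tau\in[\tau_0,b]$ within the existence interval, $\norm{Y(\tau)}\le \norm{Y(\tau_0)}+\int_{\tau_0}^{\tau}\bigl(M\norm{Y(s)}+\norm{G(s)}\bigr)\,ds$, so Gr\"onwall's inequality produces an a priori bound $\norm{Y(\tau)}\le C$ depending only on the data and the coefficients, not on the solution itself. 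If $\tau_+$ were strictly less than $\sup I$, this bound, valid up to $\tau_+$, would force $Y$ (and via the equation $Y'$) to remain bounded and hence to have a limit at $\tau_+$; restarting the local construction there would extend the solution beyond $\tau_+$, contradicting maximality. The same reasoning applies at $\tau_-$, so $(\tau_-,\tau_+)=I$. I expect this Gr\"onwall a priori estimate to be the conceptual crux, since it is exactly the affine (linear) structure of the right-hand side that rules out the finite-time blow-up permitted for nonlinear equations.
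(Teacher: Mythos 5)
The paper states this lemma explicitly \emph{without proof} (it is invoked as a standard existence--uniqueness result for second-order linear ODEs, cited as input to Proposition \ref{Ssmallp}), so there is no in-paper argument to compare against; your proposal must stand on its own, and it does. The reduction to the first-order system $Y'=A(\tau)Y+G(\tau)$, the equivalence with the Volterra integral equation, the local contraction argument, and the Gr\"onwall-based continuation to all of $I$ are all correct, and you correctly identify linearity as what rules out finite-time blow-up. One remark on economy: the factorial estimate you mention in passing --- that the $k$-fold iterate of the Volterra operator has Lipschitz constant $(M(b-a))^{k}/k!$ --- already makes the iterated operator a contraction on $C([a,b];\mathbb{R}^2)$ for \emph{any} compact $[a,b]\subset I$ containing $\tau_0$, with no smallness restriction on the interval. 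This yields existence and uniqueness on every such compact subinterval directly; exhausting the open interval $I$ by an increasing sequence of compacts and patching the solutions by uniqueness then gives the global statement outright, so the maximal-interval/Gr\"onwall continuation step, while perfectly sound, is dispensable in the linear case. Your route and this shortcut trade off as follows: the Gr\"onwall argument generalizes verbatim to equations that are merely locally Lipschitz with linear growth, whereas the factorial-gain argument is special to Volterra operators with affine integrands but gives the cleanest proof of exactly the lemma as stated. Either way the proof is complete; this is also consonant with how the paper itself handles its fixed-point arguments (e.g.\ in the proof of Proposition \ref{Ssmallp}), where contraction is arranged on a fixed compact interval.
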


\begin{proof}[Proof of Proposition \ref{Ssmallp}] \

First consider the ``$p=0$'' equation, 
{\it i.e.\ } $[\partial_{\tau}^2 + \omega_0(\tau)^2] S_0(\tau) =0$. 
Lemma \ref{existlemma} implies that there exists a complex solution  
$S_0(\tau)$, which may be Wronskian normalized to satisfy 
$\partial_\tau S_0\,S_0^\ast -S_0\partial_\tau S_0^\ast =-i$. In the case $\omega_0(\tau)=0$ 
on $[\tau_i,\,\tau_f]$, the solution with initial data $w_0,\,z_0$ 
is $S_0(\tau)=w_0(\tau-\tau_0)+z_0$, $w_0 z_0^* - z_0 w_0^* =-i$. 
Remaining with general $\omega_0(\tau)$ we reformulate 
(\ref{odedef0a}) as an integral equation. Defining the kernel%
\footnote{This is the (generalized) Feynman Greens function. 
Any other choice of Greens function also renders $L$ in \eqref{smallp6} a contraction, 
merely the value of $p_*$ may change.}  
\begin{eqnarray}\label{smallp1}
K(\tau,\,\tau'):=i\theta(\tau-\tau')S_0(\tau)S_0(\tau')^\ast 
+i\theta(\tau'-\tau)S_0(\tau)^\ast S_0(\tau') \,,
\end{eqnarray}
a function $S(\tau)$ satisfying 
\begin{eqnarray}\label{smallp2}
S(\tau)&\! =\! & S_0(\tau)-p^2\int_{\tau_i}^{\tau_f}\!K(\tau,\,\tau')
\omega_2(\tau')^2S(\tau')d\tau'
\end{eqnarray}
solves (\ref{odedef0a}). Further, $\partial_{\tau}S(\tau)$ satisfies
\begin{eqnarray}\label{smallp3}
\partial_{\tau} S(\tau)= \partial_{\tau}S_0(\tau)-p^2\int_{\tau_i}^{\tau_f}
\partial_\tau K(\tau,\tau')\omega_2(\tau')^2S(\tau')d\tau'\,.
\end{eqnarray}
In terms of 
\begin{eqnarray}\label{smallp4}
	\mathcal{S}(\tau):=
	\begin{pmatrix}
		S(\tau)\\
		\tilde{S}(\tau)
	\end{pmatrix},\quad 
	\mathcal{S}_0(\tau):=
	\begin{pmatrix}
		S_0(\tau)\\
		\partial_\tau S_0(\tau)
	\end{pmatrix},\quad 
	 {\cal K}(\tau,\tau'):=
	\begin{pmatrix}
		K(\tau,\,\tau')\omega_2(\tau')^2&0\\
		\partial_\tau K(\tau,\tau')\omega_2(\tau')^2&0
	\end{pmatrix},
\end{eqnarray}
we search for a solution of the integral equation 
\begin{eqnarray}\label{smallp5}
\mathcal{S}(\tau )=\mathcal{S}_0(\tau)
-p^2\int_{\tau_i}^{\tau_f} {\cal K}(\tau,\tau')\mathcal{S}(\tau')d\tau'\,.
\end{eqnarray}
As the underlying Banach space we take $(X,\left\lVert\cdot\right\rVert)
:=\big(C([\tau_i,\tau_f],\mathbb{C}^2),\left\lVert\cdot\right\rVert)_{\sup}\big)$, where 
$\mathbb{C}^2$ is being equipped with the sup-norm. 
Next, we define the linear operator $L:X\to X$
\begin{eqnarray}\label{smallp6}
\forall\,u\in X:\,\big(Lu\big)(\tau):=
\mathcal{S}_0(\tau) -p^2\int_{\tau_i}^{\tau_f}\!{\cal K}(\tau,\,\tau')u(\tau')d\tau'\,,
\end{eqnarray}
and show that for sufficiently small $p$, this map is actually a contraction.

Since $S_0$ is a $C^1$ function, it is clear that both 
$K(\tau,\tau')$ and $\partial_\tau K(\tau,\tau')$ are bounded functions 
on $[\tau_i,\tau_f]^2$. As $\omega_2$ is also continuous, there is $R>0$ such that 
$|{\cal K}(\tau,\tau')_{ij}|<R$ on $[\tau_i,\tau_f]^2$. Then for any $u,\,v\in X$
\begin{eqnarray}
\label{smallp7}
|Lu(\tau)-Lv(\tau)|_{\max }&\! =\! & 
p^2\Big|\int_{\tau_i}^{\tau_f}\!{\cal K}(\tau,\tau')
\big(u(\tau')-v(\tau')\big)d\tau'\Big|_{\max }
\nonumber \\[1.5mm] 
&\leq & p^2\int_{\tau_i}^{\tau_f}\!
\Big|{\cal K}(\tau,\tau')\big(u(\tau')-v(\tau')\big)\Big|_{\max }d\tau'
\nonumber \\[1.5mm]
\implies \left\lVert Lu-Lv\right\rVert_{\sup}&\leq &  p^2(\tau_f-\tau_i) R  \left\lVert u-v\right\rVert_{\sup}\,,
\end{eqnarray}	
and so there is $p_*>0$ such that for all $p<p_*$,  $L$ is a contraction.	

Assuming that $p<p_*$, the Banach Fixed Point theorem implies that there 
exists a {\it unique} $\mathcal{S}_p=(S_p,\,\tilde{S}_p)^T\in X$ such 
that $L\mathcal{S}_p=\mathcal{S}_p$, {\it i.e.\ }
\begin{eqnarray}\label{smallp8}
S_p(\tau)&\! =\! & S_0(\tau)-p^2\int_{\tau_i}^{\tau_f}\!K(\tau,\,\tau')
\omega_2(\tau')^2S_p(\tau')d\tau'\,,
\nonumber \\[1.5mm] 
\tilde{S}_p(\tau)&\! =\! & \partial_\tau S_0(\tau)-p^2\int_{\tau_i}^{\tau_f}\!
\partial_\tau K(\tau,\,\tau')\omega_2(\tau')^2S_p(\tau')d\tau'\,.
\end{eqnarray}
Comparing (\ref{smallp8}) and (\ref{smallp3}), it is clear that 
$\partial_\tau S_p(\tau)$ satisfies the second equation above. The 
uniqueness of the fixed point $\mathcal{S}_p$ then implies that 
$\tilde{S}_p=\partial_\tau S_p$.

Further, the iterated sequence $L^m\mathcal{S}_0,\,m\in \mathbb{N}$, 
converges to $\mathcal{S}_p$ in the sup-norm. It is then easily 
verified that there is a sequence of 
$C^1$ functions $S_n(\tau)$ such that we have the {\it uniformly convergent} 
power series representations of the form asserted in (\ref{Sseries}).
\end{proof}

Next we consider the product of two series solutions and state, without proof, the following slight generalization of Merten's theorem.
\begin{lemma}\ \label{lm..2} Let 
\begin{eqnarray}
A(\tau)=\sum_{n=0}^\infty a_n(\tau)p^{2n}\,,\quad 
B(\tau)=\sum_{n=0}^\infty b_n(\tau)p^{2n}\,,
\end{eqnarray}
be power series in the Banach space $C([\tau_i,\tau_f],\mathbb{C})$ with 
radius of convergence $p_*>0$. 
Consider the map $C:[\tau_i,\tau_f]\times [\tau_i,\tau_f]\to \mathbb{C}$ defined 
by $C(\tau_1,\tau_2):=A(\tau_1)B(\tau_2)$, and the coefficients of the 
unequal time Cauchy product of $A$ and $B$,
\begin{eqnarray}
\label{Cprod1} 
c_n(\tau_1,\tau_2)&:=&\sum_{i=0}^na_i(\tau_1)b_{n-i}(\tau_2)\,.
\end{eqnarray}
Then for any $p<p_*$ 
\begin{eqnarray}
\label{Cprod2} 
\sum_{n=0}^\infty c_n(\tau_1,\tau_2)p^{2n}&\! =\! & C(\tau_1,\tau_2)\,,
\end{eqnarray}
with uniform convergence in $[\tau_i,\tau_f]\times [\tau_i,\tau_f]$.
The same holds for the equal time Cauchy product ($\tau_1 = \tau_2$
in (\ref{Cprod1}), (\ref{Cprod2}) ) with uniform convergence in 
$[\tau_i,\tau_f]$. 
\end{lemma}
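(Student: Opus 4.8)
The plan is to reduce the statement to the classical fact that the Cauchy product of two \emph{absolutely} convergent scalar series converges (absolutely) to the product of their sums, and then to upgrade pointwise convergence in $(\tau_1,\tau_2)$ to uniform convergence via a single numerical Weierstrass majorant assembled from sup-norms.

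First I would exploit Cauchy--Hadamard in the Banach space $C([\tau_i,\tau_f],\mathbb{C})$: a radius of convergence $p_*>0$ means that for every $p<p_*$ the numerical series
\begin{equation*}
M_A(p):=\sum_{n=0}^\infty \norm{a_n}_{\sup}\,p^{2n}\,,\qquad
M_B(p):=\sum_{n=0}^\infty \norm{b_n}_{\sup}\,p^{2n}
\end{equation*}
both converge. In particular, for each fixed $\tau_1$ (respectively $\tau_2$) the scalar series $\sum_n a_n(\tau_1)p^{2n}$ (respectively $\sum_n b_n(\tau_2)p^{2n}$) is absolutely convergent, with sum $A(\tau_1)$ (respectively $B(\tau_2)$).

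Next I would majorize the two-variable Cauchy-product coefficients. For every $(\tau_1,\tau_2)\in[\tau_i,\tau_f]^2$,
\begin{equation*}
|c_n(\tau_1,\tau_2)|\le\sum_{i=0}^n |a_i(\tau_1)|\,|b_{n-i}(\tau_2)|
\le \sum_{i=0}^n \norm{a_i}_{\sup}\,\norm{b_{n-i}}_{\sup}=:\gamma_n\,,
\end{equation*}
and summing the majorant while reordering the resulting double series of nonnegative terms (justified by Tonelli) gives
\begin{equation*}
\sum_{n=0}^\infty \gamma_n\,p^{2n}
=\Big(\sum_{i=0}^\infty\norm{a_i}_{\sup}p^{2i}\Big)
\Big(\sum_{j=0}^\infty\norm{b_j}_{\sup}p^{2j}\Big)=M_A(p)\,M_B(p)<\infty\,.
\end{equation*}
Hence $\sum_n c_n(\tau_1,\tau_2)p^{2n}$ is dominated by the convergent constant series $\sum_n\gamma_n p^{2n}$, and the Weierstrass $M$-test yields absolute and \emph{uniform} convergence on $[\tau_i,\tau_f]^2$.

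Finally I would identify the limit: for each fixed $(\tau_1,\tau_2)$ the two scalar series are absolutely convergent, so the classical product theorem gives $\sum_n c_n(\tau_1,\tau_2)p^{2n}=A(\tau_1)B(\tau_2)=C(\tau_1,\tau_2)$ pointwise, which together with the uniform bound establishes (\ref{Cprod2}). The equal-time statement follows by restricting to the diagonal $\tau_1=\tau_2$, on which uniform convergence over $[\tau_i,\tau_f]^2$ restricts to uniform convergence over $[\tau_i,\tau_f]$. The only point requiring care -- and the reason this is a genuine, if mild, strengthening of Mertens' theorem -- is that uniformity in the two independent arguments must issue from a single numerical majorant; this is exactly what the sup-norm bound supplies, the unequal-time feature being harmless because $\tau_1$ and $\tau_2$ enter $a_i$ and $b_{n-i}$ separately.
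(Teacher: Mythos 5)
Your proof is correct. Note that the paper itself states this lemma \emph{without} proof (it is introduced there as a ``slight generalization of Merten's theorem''), so there is no in-paper argument to compare against; your write-up supplies exactly the missing argument, and by the natural route. The two load-bearing points are both handled properly: (i) ``radius of convergence $p_*$'' for a power series with coefficients in $C([\tau_i,\tau_f],\mathbb{C})$ entails, for every $p<p_*$, convergence of the numerical series $\sum_n \norm{a_n}_{\sup}p^{2n}$ (comparison with a geometric series, i.e.\ Cauchy--Hadamard in the Banach space), which is what makes your majorant $\gamma_n=\sum_{i=0}^n\norm{a_i}_{\sup}\norm{b_{n-i}}_{\sup}$ summable against $p^{2n}$; and (ii) the identification of the limit is done pointwise via the classical Cauchy-product theorem for absolutely convergent scalar series, after which the Weierstrass $M$-test upgrades this to uniform convergence on the square, with the equal-time claim following by restriction to the diagonal. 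This is precisely the structure one wants here: the unequal-time feature is, as you say, harmless because $\tau_1$ and $\tau_2$ enter separate factors, so a single product majorant $M_A(p)M_B(p)$ controls everything.
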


An immediate corollary of Proposition \ref{Ssmallp}  and Lemma 
\ref{lm..2} is: 

\begin{corollary}\label{cor3.3} The Commutator function $\Delta_p(\tau,\tau')$ 
and the Greens functions defined in terms of it 
have uniformly convergent series expansions in $p< p_*$ for 
distinct $(\tau, \tau') \in [\tau_i,\tau_f]\times [\tau_i,\tau_f]$. 
\end{corollary}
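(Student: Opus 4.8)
The plan is to read off the corollary directly from Proposition~\ref{Ssmallp} and the Mertens-type Lemma~\ref{lm..2}, after expressing both the commutator function and the Greens functions as finite combinations of products of the series solution $S_p$ and its complex conjugate.

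First I would recall, from Lemma~\ref{lm2.3} and (\ref{Deltadef2}), that on a Wronskian normalized solution the commutator function is the bilinear quantity
\[
\Delta_p(\tau,\tau') = i\big(S_p(\tau)S_p(\tau')^* - S_p(\tau)^* S_p(\tau')\big)\,.
\]
Since $p$ denotes the (real) modulus of the spatial momentum, every power $p^{2n}$ is real and complex conjugation acts coefficientwise: the uniformly convergent expansion $S_p(\tau)=\sum_n S_n(\tau)p^{2n}$ of Proposition~\ref{Ssmallp} immediately yields the uniformly convergent expansion $S_p(\tau)^*=\sum_n S_n(\tau)^* p^{2n}$, with the same radius $p_*$ because $|S_n^*|=|S_n|$.

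Next I would apply the unequal-time Cauchy product of Lemma~\ref{lm..2} to the pairs $(S_p(\tau),S_p(\tau')^*)$ and $(S_p(\tau)^*,S_p(\tau'))$, obtaining uniformly convergent power series in $p^2$ for the two products $S_p(\tau)S_p(\tau')^*$ and $S_p(\tau)^* S_p(\tau')$ on all of $[\tau_i,\tau_f]\times[\tau_i,\tau_f]$. Taking the fixed linear combination above (a difference times $i$) preserves uniform convergence, so $\Delta_p$ inherits a uniformly convergent $p^2$-expansion on the whole square, in particular at distinct points and, since $\Delta_p$ vanishes there, continuously up to the diagonal. For the Greens functions I would use that each of the retarded, advanced and Feynman kernels built from $\Delta_p$ (for the Feynman case, (\ref{smallp1}) with $S_0$ replaced by $S_p$) equals, on each of the open regions $\{\tau>\tau'\}$ and $\{\tau<\tau'\}$, a constant $\pm$-combination of those same two products, the step functions being locally constant there; hence the series converges uniformly on each region, and any distinct pair $(\tau,\tau')$ lies in exactly one of them.

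The one point to handle carefully --- rather than a genuine obstacle --- is the diagonal: the step functions in the Greens functions jump across $\tau=\tau'$, so uniform convergence cannot be expected there, which is precisely why the statement is restricted to distinct $(\tau,\tau')$ and why the argument proceeds region-by-region. The commutator function, being free of step functions, is the sole object whose expansion extends uniformly across the diagonal.
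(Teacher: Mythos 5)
Your proof is correct and follows essentially the same route as the paper, which presents the corollary as an immediate consequence of Proposition~\ref{Ssmallp} and Lemma~\ref{lm..2}: you write $\Delta_p(\tau,\tau')=i\big(S_p(\tau)S_p(\tau')^*-S_p(\tau)^*S_p(\tau')\big)$, apply the unequal-time Cauchy product to the two bilinear terms, and observe that the step functions in the Greens functions are locally constant off the diagonal. Your explicit remark that the diagonal jump of the step functions is exactly what forces the restriction to distinct $(\tau,\tau')$, while $\Delta_p$ itself converges uniformly across the diagonal, is a correct and useful elaboration of what the paper leaves implicit.
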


So far these are mostly existence results. For the actual construction of 
these series solutions one will solve the implied recursion relations. 
For a solution $S_p(\tau)$ of the form (\ref{Sseries}) one has
\begin{eqnarray}
	\label{Srec1} 
[\partial_{\tau}^2 + \omega_0(\tau)^2 ]S_0(\tau) &\! =\! & 0\,, 
\nonumber \\[1.5mm]
[\partial_{\tau}^2 + \omega_0(\tau)^2 ]S_n(\tau) &\! =\! & - \omega_2(\tau)^2 S_{n-1}(\tau) \,, 
\quad n \geq 1\,.
\end{eqnarray}
Each $S_n$ is only unique up to addition of a solution of the homogeneous 
equation, characterized by two complex parameters. These ambiguities 
account for the initial data of the series solution 
\begin{eqnarray}
	\label{Srec2}
&& S_p(\tau_0) = \sum_{n \geq 0} z_n p^{2n} =: z_p\,, \quad 
\partial_{\tau_0} S_p(\tau_0) = \sum_{n \geq 0} w_n p^{2n} =: w_p \,,
\nonumber \\[1.5mm]
&& \quad \mbox{with} \quad 
\sum_{j=0}^{n} (w_j z^*_{n-j} - w_j^* z_{n-j})  =0\,,\quad n \geq 1\,,
\end{eqnarray}
where the constraint stems from the Wronskian normalization. 
One can use the same Greens function $G_0(\tau',\tau)$ at each 
order and adjust the initial data 
of the additive modification such that $S_n(\tau_0) = z_n$, 
$(\partial_{\tau} S_n)(\tau_0) = w_n$ holds, for given $z_n,w_n \in \mathbb{C}$, 
mildly constrained by (\ref{Srec2}). 

Later on a series solution of this form will play the role of the 
fiducial solution in the construction of the SLE.  Theorem \ref{th2.1} ensures 
that {\it any} such solution will produce the {\it same} SLE solution 
(within the implied radius of convergence) up to a phase. We are therefore 
free to choose one with especially simple, namely $p$-independent, initial 
data for $\tau_0 = \tau_i$: $z_n = 0 = w_n$, $n \geq 1$. In this case the 
relevant Greens function is the retarded Greens function 
$G^{\wedge}_0(\tau,\tau') := \theta(\tau-\tau') 
\Delta_0(\tau,\tau')$, with $\Delta_0$ the commutator function for 
$\partial_{\tau}^2 + \omega_0(\tau)^2$. Further, no additive, order dependent, 
modification is needed and the solution of the iteration is simply 
\begin{eqnarray}
	\label{Srec3} 
S_n(\tau) \!\!&\! =\! & \!\!\int_{\tau_i}^{\tau_f} \! d\tau' \, 
K_n(\tau,\tau')\, S_0(\tau')\,, \quad n \geq 1\,,
\\[2mm] 
K_1(\tau,\tau')  \!&\!:=\!&\! - G_0^{\wedge}(\tau,\tau') \omega_2(\tau')^2\,,
\nonumber \\[1.5mm]
K_{n+1}(\tau,\tau') \!&\!:=\!&\! (-)^{n+1} 
\int_{\tau_i}^{\tau_f} \!\! d\tau_1...d\tau_n 
\, G_0^{\wedge}(\tau,\tau_1) \omega_2(\tau_1)^2 
\, G_0^{\wedge}(\tau_1,\tau_2) \omega_2(\tau_2)^2  \ldots 
\, G_0^{\wedge}(\tau_n,\tau') \omega_2(\tau')^2\,. 
\nonumber
\end{eqnarray}    
The kernel $K_n$ is manifestly real and satisfies $K_n(\tau_i, \tau') =0= 
\partial_{\tau} K_n(\tau,\tau')|_{\tau = \tau_i}$, for $\tau' \in (\tau_i, \tau_f]$. 
The associated series solution $S_p(\tau)$ therefore satisfies
$S_p(\tau_i) = z_0$, $(\partial_{\tau} S_p)(\tau_i) = w_0$, for $p$-independent 
constants with $w_0 z_0^* - w_0^* z_0 = -i$.

The commutator function $\Delta_p(\tau,\tau')$ is likewise independent 
of the choice of the Wronskian normalized solution used to realize it, 
see Lemma \ref{lm2.3}. We are thus free to use the solution (\ref{Srec3}) 
for this purpose. Writing $\Delta_p(\tau,\tau') = 
\sum_{n\geq 0} \Delta_n(\tau,\tau') p^{2n}$, one finds 
\begin{eqnarray}
	\label{Deltaexp1} 
\Delta_n(\tau,\tau') &\! =\! & i \sum_{j=0}^n \big(S_j(\tau) S^*_{n-j}(\tau') - 
S_j^*(\tau) S_{n-j}(\tau') \big) 
\nonumber \\[1.5mm]
&\! =\! & \int_{\tau_i}^{\tau_f} \! ds [ K_n(\tau,s)\Delta_0(s,\tau')  - 
K_n(\tau',s) \Delta_0(s,\tau)]
\nonumber \\[1.5mm]
&+& 
\int_{\tau_i}^{\tau_f} \! ds_1 ds_2 \, 
\sum_{j=1}^{n-1} K_j(\tau, s_1) K_{n-j}(\tau',s_2) \Delta_0(s_1,s_2) \,.
\end{eqnarray}
One can check that the coefficients satisfy all the relations implied
by the expansion of the defining conditions (\ref{Deltadef1}) 
\begin{eqnarray}
	\label{Deltaexp2} 
[\partial_{\tau}^2 + \omega_0(\tau)^2 ]\Delta_n(\tau,\tau') &\! =\! & - \omega_2(\tau)^2
\Delta_{n-1}(\tau,\tau') \,, 
\quad 
\partial_{\tau} \Delta_n(\tau, \tau') \big|_{\tau = \tau'} =0\,,
\nonumber \\[1.5mm]
[\partial_{\tau'}^2 + \omega_0(\tau')^2]\Delta_n(\tau,\tau') &\! =\! & 
- \omega_2(\tau')^2 \Delta_{n-1}(\tau,\tau') \,, \;\quad n \geq 1\,.
\end{eqnarray}
The two recursion relations follow from $[\partial_{\tau}^2 + \omega_0(\tau)^2 ] 
K_n(\tau,\tau') = - \omega_2(\tau)^2 K_{n-1}(\tau,\tau')$, $n \geq 2$. 
For the third relation it is convenient to first verify $\partial_{\tau} 
[ \partial_{\tau} \Delta_n (\tau,\tau') |_{\tau = \tau'}] =0$. Then, 
it suffices to show  $\partial_{\tau} \Delta_n (\tau,\tau_i) |_{\tau = \tau_i} =0$, 
which follows from $K_n(\tau_i, \tau') =0=
\partial_{\tau} K_n(\tau,\tau')|_{\tau = \tau_i}$, for $\tau' \in (\tau_i, \tau_f]$. 


\subsection{IR Behavior of States of Low Energy}\label{sec3.2}

We use the formulas from Theorem \ref{sslethm} to derive 
convergent series expansions for the SLE. The basic expansion is 
$\Delta_p(\tau',\tau) = \sum_{n\geq 0} \Delta_n(\tau',\tau) p^{2n}$, 
with coefficients from (\ref{Deltaexp1}). In terms of it convergent 
expansions for the $J_p(\tau_0), \partial_{\tau_0} J_p(\tau_0), K_p(\tau_0)$ 
in (\ref{ssle3}) can be derived. The uniform convergence of the 
various pointwise products is ensured by the results of Section \ref{sec3.1} and 
allows one to exchange the order of summation and integration. 
The following notation is convenient 
\begin{eqnarray}
	\label{sleexp1} 
&& C(\tau,\tau_0) = \sum_{n \geq 0} C_n(\tau,\tau_0) \,p^{2n} \quad 
\Longrightarrow \quad 
C(\tau,\tau_0)^2 = \sum_{n \geq 0} C(\tau,\tau_0)^2_n \,p^{2n} 
\nonumber \\[1.5mm] 
&& \mbox{with} \quad C(\tau,\tau_0)^2_n := \sum_{j=0}^n 
C_j(\tau,\tau_0) C_{n-j}(\tau, \tau_0) \,.
\end{eqnarray}
In this notation one has 
\begin{eqnarray}
	\label{sleexp2}
J_p(\tau_0) &\! =\! & \sum_{n \geq 0} J_n(\tau_0) \,p^{2n}\,, \quad 
\quad 
K_p(\tau_0) = \sum_{n \geq 0} K_n(\tau_0) \,p^{2n}\,, 
\\[2mm] 
J_0(\tau_0) &\! =\! & \frac{1}{2} \int\! d\tau \, f(\tau)^2 
\Big[ \big( \partial_{\tau} \Delta_0(\tau,\tau_0) \big)^2 + 
\omega_0(\tau)^2 \Delta_0(\tau,\tau_0)^2 \Big] \,,
\nonumber \\[1.5mm]
J_n(\tau_0) &\! =\! & \frac{1}{2} \int\! d\tau \, f(\tau)^2 
\Big[ \big( \partial_{\tau} \Delta(\tau,\tau_0) \big)^2_n + 
\omega_0(\tau)^2 \Delta(\tau,\tau_0)^2_n + 
\omega_2(\tau)^2 \Delta(\tau,\tau_0)^2_{n-1} \Big] \,,
\nonumber \\[1.5mm]
K_0(\tau_0) &\! =\! & \frac{1}{2} \int\! d\tau \, f(\tau)^2 
\Big[ \big( \partial_{\tau} \partial_{\tau_0} \Delta_0(\tau,\tau_0) \big)^2 + 
\omega_0(\tau)^2 \big(\partial_{\tau_0} \Delta_0(\tau,\tau_0)\big)^2 \Big]\,, 
\nonumber \\[1.5mm]
K_n(\tau_0) &\! =\! & \frac{1}{2} \int\! d\tau \, f(\tau)^2 
\Big[ \big( \partial_{\tau} \partial_{\tau_0}\Delta(\tau,\tau_0) \big)^2_n + 
\omega_0(\tau)^2 \big( \partial_{\tau_0} \Delta(\tau,\tau_0) \big)^2_n 
+ \omega_0(\tau)^2 \big( \partial_{\tau_0} \Delta(\tau,\tau_0) \big)^2_{n-1} \Big] \,,
\nonumber
\end{eqnarray}
and $\partial_{\tau_0} J_p(\tau_0) = \sum_{n \geq 0} \partial_{\tau_0} J_n(\tau,\tau_0) 
\,p^{2n}$ with the implied coefficients. Interpreting (\ref{ssle11}) as 
\begin{eqnarray}
	\label{sleexp3} 
({\cal E}_p^{\rm SLE})^2 &\! =\! & \frac{1}{4} \int\! d\tau_0 f(\tau_0)^2 
\big[ K_p(\tau_0) + \omega_p(\tau_0)^2 J_p(\tau_0) \big] =: 
\sum_{n \geq 0} \varepsilon_n^2 \,p^{2n} \,,
\nonumber \\[1.5mm]
\varepsilon_0^2 &\! =\! & \frac{1}{4}   \int\! d\tau_0 f(\tau_0)^2 
\big[ K_0(\tau_0) + \omega_0(\tau_0)^2 J_0(\tau_0) \big]\,,
\\[2mm] 
\varepsilon_n^2 &\! =\! & \frac{1}{4}   \int\! d\tau_0 f(\tau_0)^2 
\big[ K_n(\tau_0) + \omega_0(\tau_0)^2 J_n(\tau_0) + 
\omega_2(\tau_0)^2 J_{n-1}(\tau_0)\big]\,,\quad n \geq 1\,,
\nonumber
\end{eqnarray}
one sees that the energy's expansion is determined by the same 
coefficients. As a consequence all quantities in Theorem 
\ref{sslethm}(b) admit convergent series expansions in powers of 
$p$ whose coefficients can be expressed in terms of those in 
(\ref{sleexp2}) only. 

In the following we focus on the expansion of 
the energy ${\cal E}_p^{\rm SLE}$ and the modulus squared $|T_p^{\rm SLE}(\tau)|^2$. 
It is useful to distinguish two cases (where the terminology will 
become clear momentarily). 
\medskip 

{\bf Massive:} $\varepsilon_0 >0$ and $K_0(\tau_0) >0$.  
\begin{eqnarray}
	\label{sleexp4} 
{\cal E}_p^{\rm SLE} &\! =\! & \varepsilon_0 + \frac{\varepsilon_1^2}{2 \varepsilon_0} p^2 
- \frac{\varepsilon_1^4 - 4 \varepsilon_0^2 \varepsilon_2^2}{8 \varepsilon_0^3} p^4 
+ O(p^6) \,,
\nonumber \\[1.5mm]
|T_p^{\rm SLE}(\tau)|^2 &\! =\! & \frac{J_0(\tau)}{2 \varepsilon_0} + 
\frac{ 2 J_1(\tau) \varepsilon_0^2 - J_0(\tau) \varepsilon_1^2}{4 \varepsilon_0^3} p^2  
\\
&+& \frac{1}{16 \varepsilon_0^5} 
\big( 8 J_2(\tau) \varepsilon_0^4 - 4 J_1(\tau) \varepsilon_0^2 \varepsilon_1^2 + 3 
J_0(\tau) \varepsilon_1^4 - 4 J_0(\tau) \epsilon_0^2 \varepsilon_2^2 \big) p^4 + 
O(p^6) \,.
\nonumber
\end{eqnarray}

{\bf Massless:} $\varepsilon_0 =0$ and $K_0(\tau_0) =0$ and $\varepsilon_1>0$.  
\begin{eqnarray}
	\label{sleexp5} 
{\cal E}_p^{\rm SLE} &\! =\! & \varepsilon_1 p + \frac{\varepsilon_2^2}{2 \varepsilon_1} p^3 
- \frac{\varepsilon_2^4 - 4 \varepsilon_1^2 \varepsilon_3^2}{8 \varepsilon_1^3} p^5 
+ O(p^7) \,,
\nonumber \\[1.5mm]
|T_p^{\rm SLE}(\tau)|^2 &\! =\! & \frac{J_0(\tau)}{ 2 \varepsilon_1} \frac{1}{p}    
+ \frac{ 2 J_1(\tau) \varepsilon_1^2 - J_0(\tau) \varepsilon_2^2}{4 \varepsilon_1^3} p  
\\
&+& \frac{1}{16 \varepsilon_1^5} 
\big( 8 J_2(\tau) \varepsilon_1^4 - 4 J_1(\tau) \varepsilon_1^2 \varepsilon_2^2 + 3 
J_0(\tau) \varepsilon_2^4 - 4 J_0(\tau) \epsilon_1^2 \varepsilon_3^2 \big) p^3 + 
O(p^5) \,.
\nonumber
\end{eqnarray}
 
The massive case corresponds to $\omega_0(\tau) = m(\tau)^2 a(\tau)^{2d}$, 
$\omega_2(\tau)^2 = a(\tau)^{2 d-2}$. Even the lowest order 
commutator function $\Delta_0(\tau,\tau')$ can then in general 
no longer be found in closed form. All other aspects of the 
expansions are however explicitly computable in terms of $\Delta_0$:
the $\Delta_n$'s via (\ref{Deltaexp1}), the $J_n,K_n$'s via (\ref{sleexp2}), 
the $\varepsilon_n$'s from (\ref{sleexp3}), and hence everything else.  
\medskip

{\bf Two-point function of massless SLE.} The massless case 
corresponds to $\omega_0(\tau) = 0$, $\omega_2(\tau)^2 
= a(\tau)^{2 d-2}$. The lowest order wave equation in (\ref{Srec1}) is 
then trivially soluble: $S_0(\tau) = w_0(\tau\!-\!\tau_0) + z_0$, with
$w_0 z_0^* - w_0^* z_0 = -i$. The coefficients of the commutator 
function are explicitly known 
\begin{eqnarray}
	\label{massless1} 
\Delta_0(\tau',\tau) &\! =\! & \tau' - \tau\,, 
\nonumber \\[1.5mm]
\Delta_1(\tau',\tau) &\! =\! & \int_{\tau_i}^{\tau_f} \! ds [ \theta(\tau\!-\!s) - 
\theta(\tau'\!-\!s)] (\tau\!-\!s) (\tau'\!-\!s) a(s)^{2d-2}\,,
\end{eqnarray}
etc. This entails $K_0(\tau_0)=0$, $\varepsilon_0=0$, and 
\begin{eqnarray}
	\label{massless2}
\varepsilon_1^2 &\! =\! & \frac{1}{4} 
\int\!d\tau f(\tau)^2 \int\! d\tau' f(\tau')^2 a(\tau')^{2 d-2}\,,
\nonumber \\[1.5mm]
J_0(\tau_0) &\! =\! &  \frac{1}{2} \int\!d\tau f(\tau)^2 \,, \quad 
J_1(\tau_0) = \int\! d\tau f(\tau)^2 \big[ 
\partial_{\tau}\Delta_1(\tau,\tau_0) + (\tau\!-\!\tau_0)^2 a(\tau)^{2 d-2} \big]\,, 
\nonumber \\[1.5mm]
K_1(\tau_0) &\! =\! & \frac{1}{2} \int\! d\tau f(\tau)^2 a(\tau)^{2 d-2} \,.
\end{eqnarray}
This gives 
\begin{eqnarray}
	\label{massless3}  
|T_p^{\rm SLE}(\tau)|^2 = \frac{\bar{a}}{2p}+O(p)\,,
\quad {\cal E}_p^{\rm SLE} = \frac{p}{2 \bar{a} } \int\!d\tau f(\tau)^2 \,,
\quad 
\bar{a}:=\bigg(\frac{\int\! d\tau f(\tau)^2}{\int \!d\tau f(\tau)^2 
a(\tau)^{2d-2}}\bigg)^{\frac{1}{2}}\,,
\end{eqnarray}
as claimed in (\ref{sleIR}). Since the leading term is $\tau$ 
independent one obtains from (\ref{ssle10}) 
\begin{eqnarray}
	\label{massless4} 
T_p^{\rm SLE}(\tau) &\! =\! & \Delta_p(\tau,\tau_0) w_p^{\rm min} - 
\partial_{\tau_0} \Delta_p(\tau,\tau_0) z_p^{\rm min} = 
\sqrt{\frac{\bar{a}}{2 p} } - i (\tau-\tau_0) \sqrt{\frac{p}{2 \bar{a}}}
+ O(p^{3/2})\,.
\nonumber \\[1.5mm]
z_p^{\rm min} &\! =\! & \sqrt{\frac{\bar{a}}{2 p}}\Big(1 + O(p^2)\Big) \,,
\quad 
w_p^{\rm min} = - i \sqrt{\frac{ p}{2 \bar{a}}}\Big(1 + O(p^2) \Big)\,.
\end{eqnarray}
This holds up an undetermined $p$-dependent phase which 
is fixed in the initial value formulation of the minimization 
procedure by taking $z$ real. This phase ambiguity disappears 
in the two-point function, for which one obtains 
\begin{equation}
	\label{sletwop3} 
T_p^{\rm SLE}(\tau) T_p^{\rm SLE}(\tau')^* = \frac{\bar{a}}{2p} - 
\frac{i}{2} (\tau - \tau') + O(p)\,.
\end{equation}
The same result can alternatively be obtained from (\ref{sletwop0}).   
\medskip 

\noindent
{\bf Remarks.}

(i) Based on (perhaps mislead by) the exactly soluble case of 
power-like scale factors one normally regards the  IR behavior of 
the solutions as directly determined by the cosmological 
scale factor. From the small argument expansion of the Bessel 
functions one has 
\begin{eqnarray}
	\label{powerlike}
|S_p(\tau)|^2 \propto p^{- 2 |\nu|}\quad \mbox{for} \quad 
a(\tau) \propto \tau^{\frac{1 - 2 \nu}{2(d-1) \nu}}\,.
\end{eqnarray}
Here $d/2< \nu <\infty$ corresponds to acceleration 
while $- \infty < \nu < 1/2$ corresponds to deceleration. 
The interval $1/2 \leq \nu \leq d/2$ does not give rise to a 
curvature singularity; the boundary values $\nu =1/2$ and $\nu= d/2$ 
model Minkowski space and deSitter space, respectively. 
The inverse Fourier transform is infrared finite whenever 
$\int_0^1 \! dp \,p^{d-1} |S_p(\tau)|^2 $ is finite. For the 
solutions (\ref{powerlike}) this is the case only in part of the 
decelerating window, $0 < \nu <1/2$, see \cite{FRWinfra1} for the 
original discussion.     

(ii) The leading IR behavior of the massless SLE solution 
(\ref{massless4}) is constant, pointwise in $\tau$. This corresponds 
to the expected freeze-out of the 
oscillatory behavior on scales much larger than the Hubble radius. 
The universality of the $1/\sqrt{p}$ behavior is however surprising,
as is the simple coefficient $\sqrt{\bar{a}/2}$, valid for {\it any} 
scale factor. The result (\ref{massless4}) could not have been 
obtained based on the traditional adiabatic iteration, which is 
incurably singular at small momentum. 

(iii) In arriving at (\ref{massless4}) we took the expressions 
from Theorem \ref{sslethm} as the starting point. It is instructive to 
go through the derivation based on the original parameterization  
(\ref{sle1}), (\ref{sle3}). The fiducial solution is 
constructed via (\ref{Srec3}) from its leading order, $S_0$. 
In the massless case the general 
(Wronskian normalized) solution to the leading order equation 
is $S_0(\tau) = w_0(\tau\!-\!\tau_0) + z_0$, $w_0 z_0^* - w_0^* z_0 = -i$. 
A somewhat longer computation then gives 
\begin{eqnarray}
\label{massless8}
\mu_p&\! =\! & |w_0|\sqrt{\frac{\bar{a}}{2p}}-
\frac{1}{|w_0|}\sqrt{\frac{p}{8\bar{a}}}+O(p^{\frac{3}{2}})\,,
\nonumber \\[1.5mm]
\lambda_p &\! =\! & -\frac{w_0^\ast}{w_0} |w_0|\sqrt{\frac{\bar{a}}{2p}}
-\frac{w_0^\ast}{w_0}\frac{1}{|w_0|}\sqrt{\frac{p}{8\bar{a}}}+O(p^{\frac{3}{2}})\,.
\\[2mm]
T_p^{\rm SLE}(\tau)
&\! =\! & -i\frac{w_0^\ast}{|w_0|}\sqrt{\frac{\bar{a}}{2p}}
-\frac{1}{|w_0|w_0}\sqrt{\frac{p}{8\bar{a}}}
\Big[2|w_0|^2(\tau-\tau_0)+2\Re(z_0w_0^\ast)\Big]+O(p^{\frac{3}{2}})\,.
\nonumber
\end{eqnarray}
One sees that all intermediate results depend on the 
parameters $w_0,z_0$ of the fiducial solution. In the two-point
function, however, these drop out and one recovers (\ref{sletwop3}).

(iv) While in the massive case minimization of ${\cal E}_p$ and 
expansion in $p^2$ are commuting operations, this is not 
the true in the massless case. In the SLE construction 
via a fiducial solution we chose one with a regular $p \rightarrow 0$ limit,
which is evidently not the case for (\ref{massless4}). The 
independence of the SLE solution from the choice of 
fiducial solution is crucial for the result.

(v) The IR behavior of (\ref{sletwop3}) 
is Minkowski-like for {\it all} scale factors $a$. This means that 
massless SLE are automatically IR finite and provide an elegant
solution to the long standing IR divergences in Friedmann-Lema\^{i}tre 
backgrounds with accelerated expansion \cite{FRWinfra1}. 

(vi) The existence of a pre-inflationary epoch with non-accelerated 
expansion typically removes the IR singularity. For generic powerlike 
scale factors the mode matching can (with some effort) be controlled 
analytically \cite{InflIRmatch3}; typically one focuses on a 
radiation dominated ($\nu=-1/2$ in (\ref{powerlike})) 
\cite{InflIRmatch4,InflIRmatch5}
or kinetic energy dominated ($\nu=0$ in (\ref{powerlike})) 
\cite{InflIRmatch0,InflIRmatch1} pre-inflationary period. 
Another take on the IR issue is to regard it as an artifact 
of using non-gauge invariant observables \cite{FRWinfra2,FRWinfra3}.

(vii) The mathematical principle underlying (\ref{sletwop3}) is very 
different from the ones in (vi).  As detailed in Section \ref{sec5}, there 
are independent reasons to regard the existence of a pre-inflationary 
period as part of the standard paradigm. Positing a massless SLE 
as primordial vacuum in this period  then ought to be consistent 
with the qualitative properties of the power spectrum at seed 
formation. This physics requirement will be taken up in Section \ref{sec5.2}.

(viii) As a consequence of (\ref{sletwop3}) the long range properties 
of the SLE position space two-point function will be similar to that 
of its Minkowski space counterpart. Further, 
the shift symmetry, $\phi(\tau,x) \mapsto \phi(\tau, x) + {\rm const}$, 
turns out to be spontaneously broken for $d \geq 2$, as it is 
for the massless free field in Minkowski space. A proper proof 
can be based on Swieca's Noether charge criterion \cite{Swieca,Lopusbook} 
and is omitted here.

\newpage 
\section{WKB type large momentum asymptotics} \label{sec4}

Any Wronskian normalized solution of the basic wave equation 
is uniquely determined by its modulus 
\begin{eqnarray}
	\label{Spviamod} 
S_p(\tau) = |S_p(\tau)| \exp\Big\{\!\!- \frac{i}{2} 
\int_{\tau_0}^{\tau} \! ds \frac{1}{ |S_p(s)|^2} \Big\}\,,  
\end{eqnarray}
up to a choice of $\tau_0$ where $S_p(\tau_0)$ is real. 
In this section we show that for each $N>1$ there exists an exact 
`order $N$' solution with a certain 
$N$-term positive frequency asymptotics. These solutions are such that 
$|S_p(\tau)|^2$ is asymptotic up to $O(p^{-2N-1})$ to a 
polynomial in odd inverse powers of $p$, whose coefficients are 
{\it local} differential polynomials in $\omega_0, \omega_2$ 
generalizing the heat kernel coefficients. The resulting order 
$N$ solutions will be referred to as WKB type solutions.%
\footnote{A WKB ansatz proper is one where only the integrand of the 
exponent is formally expanded in terms of local coefficients.}
An SLE solution will then be shown to be a WKB type solution 
of {\it infinite} order. Throughout this section we assume $\omega_0,\,\omega_2$ to be smooth.

\subsection{Existence of solutions with WKB type asymptotics} \label{sec4.1}

As a starting point the relation (\ref{Spviamod}) is cumbersome because  
the exponential needs to be re-expanded. In the following we 
establish the existence of asymptotic expansions of all 
quantities needed by starting from a simplified formal series 
ansatz for $S_p$'s large momentum asymptotics 
\begin{eqnarray}
\label{asym2.1}
S_p(\tau)=
\frac{\exp\big\{-ip\int_{\tau_i}^\tau ds\,\omega_2(s)\big\}}%
{\sqrt{2p\omega_2(\tau)}}\Big\{1+\sum_{n\geq 1}(ip)^{-n} s_n(\tau)\Big\}\,,
\end{eqnarray}
with real-valued $s_n$. As in Section \ref{sec3} we consider the basic 
differential equation 
$[\partial_{\tau}^2 + \omega_p(\tau)^2] S_p(\tau) =0$ with generic time 
dependent frequency $\omega_p(\tau) = \omega_0(\tau)^2 + p^2 
\omega_2(\tau)^2$. The leading term in (\ref{asym2.1}) is a positive 
frequency wave. The latter is known to be a necessary (but by no 
means sufficient property) for a solution to comply with the 
Hadamard condition.

Upon insertion of (\ref{asym2.1}) into the basic wave equation 
one finds the following recursion relations  
\begin{eqnarray}
\label{asym2.2}
\partial_{\tau} s_n &\! =\! &  \partial_{\tau} s_1 s_{n-1} + \partial_{\tau} \Big( 
\frac{\partial_{\tau} s_{n-1}}{2 \omega_2} \Big)\,, \quad n \geq 2\,,
\nonumber \\[1.5mm]
\partial_{\tau} s_1 &\! =\! & \frac{\omega_0^2}{2 \omega_2} - \frac{1}{4 \omega_2} 
\bigg( \frac{\partial_{\tau}^2 \omega_2}{\omega_2} - \frac{3}{2} 
\bigg( \frac{\partial_{\tau} \omega_2}{ \omega_2} \Big)^2 \bigg) \,.
\end{eqnarray}	
Clearly, each $s_n$ can be obtained simply by integration and 
the only ambiguity arises from the choice of integration constants 
$s_n(\tau_i)$. We claim that 
\begin{equation}
	\label{asym2.2a}
s_n(\tau_i) =0\,, \;\;\;n\; \mbox{odd} \,, 
\end{equation}
uniquely determines all $s_n(\tau_i)$, $n$ even, such that the 
Wronskian normalization condition holds. The stipulation 
$s_n(\tau_i) =0$, $n$ odd, goes hand in hand with the fact 
(seen later on) that $|S_p(\tau)|^2$ admits an asymptotic expansion 
in odd inverse powers of $p$. Comparing with the $|S_p(\tau_i)|^2$ 
series arising from (\ref{asym2.1}) one sees that the odd 
$s_n$ must vanish at $\tau = \tau_i$. The stipulation is also 
consistent with the flat space limit $a(\tau)\equiv 1$. 

The second part of the claim is that the $s_n(\tau_i)$ for $n$ even 
are determined by imposing the Wronskian normalization condition
\begin{eqnarray}
\label{asym2.2c}
\partial_\tau S_p(\tau)S_p(\tau)^\ast -  
S_p(\tau)\partial_\tau S_p(\tau)^\ast 
\overset{\displaystyle{!}}{=}-i\,.
\end{eqnarray}
Using momentarily a `$\prime$' to denote a $\partial_\tau$ derivative 
and setting $s_0:=1$, a formal computation shows (\ref{asym2.2c}) 
to hold subject to (\ref{asym2.2a})  iff 
\begin{eqnarray}
	\label{asym2.2d} 
\sum_{m,\,n\geq 0\,, m+n=N} (s_{2n}s_{2m})(\tau_i)
-\sum_{m\geq 0,\,n\geq 1,\, 2m+n=2N-1}(\omega_2^{-1}s_n'\,s_{2m})(\tau_i) 
\overset{\displaystyle{!}}{=} 0\,, \quad N \geq 1\,.
\end{eqnarray}
To low orders, 
\begin{eqnarray}
\label{asym2.2e} 
&&N=1:\quad 2s_2(\tau_i)-\omega_2(\tau_i)^{-1}s_1'(\tau_i)
\overset{\displaystyle{!}}{=}0
\nonumber \\[1.5mm] 
&&N=2:\quad 2s_4(\tau_i)+s_2(\tau_i)^2-\omega_2(\tau_i)^{-1}s'_3(\tau_i)
-\omega_2^{-1}s'_1(\tau_i)s_2(\tau_i)
\overset{\displaystyle{!}}{=}0\,.
\end{eqnarray}
Clearly, $s_2(\tau_i)$ is determined by the unambiguous $s'_1(\tau_i)$
from (\ref{asym2.2}). In terms of it $s_4(\tau_i)$ is determined 
by the unambiguous $s'_3(\tau_i)$, and so forth. Hence (\ref{asym2.2d}) 
iteratively fixes the integration constants $s_n(\tau_i)$ for $n$ even, 
as claimed. Finally, we note that the recursion (\ref{asym2.2}) 
entails that if (\ref{asym2.2d}) holds at $\tau_i$, then (\ref{asym2.2c}) 
holds formally for all $\tau$.

Assume now that to some order $N$ the $s_1(\tau), \ldots s_N(\tau)$ 
have been computed by the recursion (\ref{asym2.2}) with initial data
(\ref{asym2.2a}), (\ref{asym2.2d}). Then  
\begin{eqnarray}
\label{asym2.3}
S_p^{(N)}(\tau):=\frac{\exp\big\{\!\!-ip\int_{\tau_0}^\tau ds\,\omega_2(s)\big\}}%
{\sqrt{2p\omega_2(\tau)}}\Big\{1+\sum_{n= 1}^{N}(ip)^{-n}s_n(\tau)\Big\}\,,
\end{eqnarray}
is unambigously defined. It enters our work horse Lemma:
\medskip

\begin{lemma} \label{SUVlemma} For some $N\!>\!1$ let $S_p^{(N)}(\tau)$ 
be as in (\ref{asym2.3}). Then, the differential equation 
$[\partial_{\tau}^2 + \omega_p(\tau)^2] 
S_p(\tau) =0$ admits an exact (though implicitly $N$-dependent), 
Wronskian normalized ($\partial_\tau S_p(\tau)S_p(\tau)^\ast -  
S_p(\tau)\partial_\tau S_p(\tau)^\ast=-i)$, complex solution $S_p$, 
such that 
\begin{eqnarray}
\label{asym2.4}
S_p(\tau)&\! =\! & S_p^{(N)}(\tau)\big[1+O(p^{-N})\big]
\nonumber \\[1.5mm] 
\partial_{\tau}S_p(\tau)&\! =\! & \partial_\tau S_p^{(N)}(\tau)\big[1+O(p^{-N})\big]\,,
\end{eqnarray}
{\it uniformly} in $\tau\in[\tau_i,\tau_f] $ as $p\to \infty$. 
\end{lemma}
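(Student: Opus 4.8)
The plan is to treat $S_p^{(N)}$ as an approximate solution and build the exact solution as a multiplicative correction whose size is controlled by a fixed-point argument in the small parameter $1/p$. First I would compute the residual $r_p^{(N)} := [\partial_\tau^2 + \omega_p^2]\,S_p^{(N)}$. Inserting (\ref{asym2.3}) and collecting powers of $(ip)^{-1}$, the recursion (\ref{asym2.2}) is precisely the statement that the coefficients of $(ip)^{-n}$ cancel for $0\le n\le N-1$; the only surviving term sits at order $N$, where the would-be cancelling contribution $-2\omega_2\,\partial_\tau s_{N+1}$ is absent because the ansatz is truncated. A short computation then gives the closed form
\[
r_p^{(N)}(\tau) = \frac{\exp\{-ip\int_{\tau_0}^\tau ds\,\omega_2(s)\}}{\sqrt{2p\,\omega_2(\tau)}}\,(ip)^{-N}\,2\,\omega_2(\tau)\,\partial_\tau s_{N+1}(\tau),
\]
where $\partial_\tau s_{N+1}$ is the smooth function produced by one further step of (\ref{asym2.2}) (only its derivative, not its integration constant, is needed). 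Since $\omega_0,\omega_2$ are smooth and $\omega_2>0$ on the compact interval, this yields the uniform bound $|r_p^{(N)}(\tau)|\le M\,p^{-N-1/2}$ with $M$ independent of $p$ and $\tau$.

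Next I would write the sought exact solution as $S_p = S_p^{(N)}(1+h)$, which is legitimate wherever $S_p^{(N)}\neq 0$; for large $p$ this holds on all of $[\tau_i,\tau_f]$ because $|S_p^{(N)}|^2 = |\Psi^{(N)}|^2/(2p\,\omega_2)$ with $\Psi^{(N)}=1+O(p^{-1})$ bounded away from zero. Substituting into $[\partial_\tau^2+\omega_p^2]S_p=0$ and using $[\partial_\tau^2+\omega_p^2]S_p^{(N)}=r_p^{(N)}$ brings the equation to the self-adjoint form $\partial_\tau\big((S_p^{(N)})^2\,\partial_\tau h\big) = -(1+h)\,r_p^{(N)}\,S_p^{(N)}$, which integrates to the Volterra equation
\[
h(\tau) = -\int_{\tau_i}^\tau \frac{ds}{S_p^{(N)}(s)^2}\int_{\tau_i}^s \big(1+h(t)\big)\,r_p^{(N)}(t)\,S_p^{(N)}(t)\,dt,
\]
with homogeneous data $h(\tau_i)=0$ and $\big((S_p^{(N)})^2\partial_\tau h\big)(\tau_i)=0$. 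These data correspond to choosing the exact solution whose Cauchy data at $\tau_i$ match those of $S_p^{(N)}$; existence and uniqueness of that solution on the whole interval is guaranteed by Lemma \ref{existlemma}, so the only remaining task is the quantitative estimate of $h$.

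This estimate is the heart of the argument. Here $|S_p^{(N)}(s)^{-2}|=O(p)$ while $|r_p^{(N)}S_p^{(N)}|=O(p^{-N-1})$, so the double integral over the finite interval produces an overall factor $O(p^{-N})$ and multiplies the $h$-dependence by the same factor. Hence for $p$ large the right-hand side defines a contraction on $C([\tau_i,\tau_f],\mathbb{C})$ with Lipschitz constant $O(p^{-N})$, whose unique fixed point obeys $\|h\|_{\sup}=O(p^{-N})$; by uniqueness it is the $h$ belonging to the exact solution. Differentiating the inner-integral representation gives $\partial_\tau h(\tau) = -S_p^{(N)}(\tau)^{-2}\int_{\tau_i}^\tau(1+h)\,r_p^{(N)}\,S_p^{(N)}\,dt = O(p^{-N})$ as well. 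The first line of (\ref{asym2.4}) is then immediate, and for the derivative one writes $\partial_\tau S_p = \partial_\tau S_p^{(N)}\big[(1+h)+(S_p^{(N)}/\partial_\tau S_p^{(N)})\,\partial_\tau h\big]$; since $\partial_\tau S_p^{(N)}/S_p^{(N)} = -ip\,\omega_2 + O(1)$ the ratio $S_p^{(N)}/\partial_\tau S_p^{(N)}$ is $O(p^{-1})$, so the bracket equals $1+O(p^{-N})$ and the second line of (\ref{asym2.4}) follows.

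Finally I would fix the Wronskian normalization. Because $[\partial_\tau^2+\omega_p^2]$ is real, the Wronskian $\partial_\tau S_p\,S_p^\ast - S_p\,\partial_\tau S_p^\ast$ is constant in $\tau$ and equals its value at $\tau_i$, i.e. the (truncated) Wronskian of $S_p^{(N)}$ there; by the construction of the even integration constants through (\ref{asym2.2d}) this is $-i$ up to an error $O(p^{-N})$, hence $-i$ times a positive real for large $p$. Rescaling $S_p$ by the corresponding real factor $1+O(p^{-N})$ enforces $\partial_\tau S_p\,S_p^\ast - S_p\,\partial_\tau S_p^\ast = -i$ while preserving both relations in (\ref{asym2.4}). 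The main obstacle is exactly the tension inside the middle step: the operator has a coefficient of order $p^2$, so a naive Gronwall bound would grow exponentially in $p$ and be useless; peeling off the oscillatory amplitude $S_p^{(N)}$ is what replaces this by the benign self-adjoint form above, in which the only large factor $S_p^{(N)}(s)^{-2}=O(p)$ is defeated by the residual smallness $r_p^{(N)}S_p^{(N)}=O(p^{-N-1})$ integrated over a finite interval.
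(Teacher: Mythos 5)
Your proposal is correct and takes essentially the same route as the paper's proof: peel off $S_p^{(N)}$ multiplicatively, use the recursion (\ref{asym2.2}) to bound the residual by $O(p^{-N-1/2})$ (the paper only states this bound, while you give the exact closed form $2\omega_2\partial_\tau s_{N+1}$), convert to a Volterra integral equation whose kernel-times-residual is $O(p^{-N})$, and invoke the Banach fixed point theorem to get $S_p=S_p^{(N)}[1+O(p^{-N})]$ together with the derivative estimate. The only organizational difference is the Wronskian normalization: the paper builds a real constant $r_p$ into the integral equation ($R_p(\tau_i)=1+r_p$) and solves $(1+r_p)^2(-1+\delta_p)+1=0$, whereas you take homogeneous data and rescale the exact solution by a real factor $1+O(p^{-N})$ at the end; both are valid and yield the same conclusion.
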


Here and below the $O$ remainders refer to the supremum of the 
modulus of the function $f \in C[\tau_i,\tau_f]$ estimated, 
i.e.~$f(\tau) = O(p^{-N})$ means $\left\lVert f\right\rVert)_{\sup} = O(p^{-N})$.    
The existence of such estimates for an order dependent function 
in terms of partial sums will below be indicated by the 
``$\,\asymp_N\,$'' relation for the infinite series.  
For example, Lemma \ref{SUVlemma} amounts to the ``$\,\asymp_N\,$'' equality 
of both sides in (\ref{asym2.1}). The asymptotic expansion of 
a fixed ($N$-independent) function will be denoted by ``$\,\asymp\,$''.

\begin{proof} \

To establish the existence and asymptotics 
of the solution $S_p$, we substitute 
\begin{eqnarray}
\label{asym2.5}
S_p(\tau)=S_p^{(N)}(\tau)\cdot R_p(\tau)\,,
\end{eqnarray}
into the differential equation $[\partial_{\tau}^2 + \omega_p(\tau)^2] S_p(\tau) =0$ 
to obtain
\begin{eqnarray}
\label{asym2.6}
&&\partial_\tau^2 R_p+2\frac{\partial_\tau S_p^{(N)}}{S_p^{(N)}}\,
\partial_\tau R_p+F(\tau,p) R_p=0\,,\quad{\rm with}
\nonumber \\[1.5mm] 
&& F_p(\tau):=\frac{\partial_\tau^2 S_p^{(N)}+\omega_p(\tau)^2 S_p^{(N)}}{S_p^{(N)}}\,.
\end{eqnarray}
It is readily verified from the recursion relations (\ref{asym2.2}) 
that $\partial_\tau^2 S_p^{(N)}+\omega_p(\tau)^2 S_p^{(N)}=O(p^{-N-1/2})$, 
while $S_p^{(N)}=O(p^{-1/2})$, uniformly in $\tau\in [\tau_i,\tau_f]$ 
as $p\to \infty$.
This entails
\begin{eqnarray}
\label{asym2.7}
F_p(\tau)=O(p^{-N})\quad 
\text{uniformly in $\tau\in [\tau_i,\tau_f]$ as $p\to \infty$}\,.
\end{eqnarray}
Defining the kernel
\begin{eqnarray}
\label{asym2.8}
K_p(\tau,\tau'):=\int_{\tau'}^\tau S_p^{(N)}(\tau')^2\,
S_p^{(N)}(\tau'')^{-2}d\tau''\,,
\end{eqnarray}
it is easy to see that a function $R_p(\tau)$ satisfying 
the integral equation 
\begin{eqnarray}
\label{asym2.9}
R_p(\tau)=1+r_p-\int_{\tau_i}^\tau K_p(\tau,\tau')F_p(\tau')R_p(\tau')d\tau'\,,
\end{eqnarray}
solves (\ref{asym2.6}). Here $r_p\in \mathbb{R}$ is a constant, satisfying 
$R_p(\tau_i)=1+r_p$, that will be determined later on. Further, 
$K_p=O(1)$ uniformly on $[\tau_i,\tau_f]^2$; so for sufficiently 
large $p$ it follows from (\ref{asym2.7}) that the map
\begin{eqnarray}
\label{asym2.10}
u(\tau)\mapsto 1+r_p-\int_{\tau_i}^\tau K_p(\tau,\tau')F_p(\tau')u(\tau')d\tau'\,,
\end{eqnarray}
is a contraction on the Banach space $\big(C([\tau_i,\tau_f],\mathbb{C}),
\left\lVert \cdot \right\rVert)_{\sup}\big)$; c.f~(\ref{smallp7}). 
Hence (\ref{asym2.9}) has a unique solution 
by the Banach Fixed Point theorem. Moreover, $R_p(\tau)$ is 
differentiable, with 
\begin{eqnarray}
\label{asym2.10a}
\partial_\tau R_p(\tau)&\! =\! & -\int_{\tau_i}^\tau S_p^{(N)}(\tau')^2\,
S_p^{(N)}(\tau)^{-2}F(\tau',p)R_p(\tau')d\tau'\,.
\end{eqnarray}
	
We now determine the constant  $r_p$ by imposing the Wronskian 
condition (\ref{asym2.3}). Since $S_p(\tau)=S_p^{(N)}(\tau) \cdot R_p(\tau)$  
solves $[\partial^2_\tau +\omega_p(\tau)^2]S_p(\tau)=0$, the Wronskian is 
conserved in time. Thus it is sufficient to demand that the 
normalization (\ref{asym2.3}) holds for $\tau=\tau_i$. One has 
\begin{eqnarray}
\label{asym2.10b}
&&\big(\partial_\tau S_p\,S_p^\ast -  S_p\,\partial_\tau S_p^\ast\big)(\tau_i)
\nonumber \\[1.5mm] 
&\! =\! & \big[\partial_\tau S_p^{(N)}\,S_p^{(N)\,\ast } 
-  S_p^{(N)}\,\partial_\tau S_p^{(N)\,\ast }\big](\tau_i)\cdot 
R_p(\tau_i)R_p(\tau_i)^\ast 
\nonumber \\[1.5mm] 
&+& S_p^{(N)}(\tau_i)S_p^{(N)\,\ast }(\tau_i)\cdot
\big[\partial_\tau R_p\,R_p^\ast -  R_p\,\partial_\tau R_p^\ast\big](\tau_i)
\nonumber \\[1.5mm] 
&\! =\! & (1+r_p)^2\,\big[\partial_\tau S_p^{(N)}\,S_p^{(N)\,\ast } 
-  S_p^{(N)}\,\partial_\tau S_p^{(N)\,\ast }\big](\tau_i)\,.
\end{eqnarray}	
The expression $[\partial_\tau S_p^{(N)}\,S_p^{(N)\,\ast } -  
S_p^{(N)}\,\partial_\tau S_p^{(N)\,\ast }](\tau_i)$ may be expanded in 
powers of $p^{-2}$ as before. Although this is a finite sum, in order 
to make contact to the formal Wronskian normalization (\ref{asym2.2c}), 
(\ref{asym2.2d}),  it is convenient to regard the sum as being infinite, 
with the understanding that $s_n\equiv 0$ for $n>N$. With this 
understanding 
\begin{eqnarray}
\label{asym2.10c}
&&\big[\partial_\tau S_p^{(N)}\,S_p^{(N)\,\ast } 
-  S_p^{(N)}\,\partial_\tau S_p^{(N)\,\ast }\big](\tau_i)
\nonumber \\[1.5mm] 
&\! =\! & -i
+i\sum_{k\geq 1}(-)^{k+1}p^{-2k}\Big\{\sum_{m,\,n\geq 0,\, m+n =k}
(s_{2n}s_{2m})(\tau_i)-\sum_{m,\,n\geq 0,\, 2m+n =2k-1}
(\omega_2^{-1}s'_n\,s_{2m})(\tau_i)\Big\}
\nonumber \\[1.5mm] 
&=:& i(-1+\delta_p) 
\end{eqnarray}
Then  
\begin{eqnarray}
\label{asym2.10c1}
\big(\partial_\tau S_p\,S_p^\ast 
-  S_p\,\partial_\tau S_p^\ast\big)(\tau_i)&\! =\! &  
-i +i\big[(1+r_p)^2(-1+\delta_p)+1\big] \,,
\end{eqnarray}	
and the appropriate normalization is thus ensured by choosing $r_p$ 
such that the term in square brackets vanishes. In order to determine 
the large $p$ behavior of $r_p$, that of $\delta_p$ is needed. 
To this end we decompose the sum in (\ref{asym2.10c}) as 
\begin{eqnarray}
\label{asym2.10d}
\delta_p &\! =\! & 
\sum_{k\geq 1}^{\lfloor N/2\rfloor}(-)^{k+1}p^{-2k}
\Big\{\sum_{m,\,n\geq 0,\, m+n =k}(s_{2n}s_{2m})(\tau_i)
-\sum_{m,\,n\geq 0,\, 2m+n =2k-1}(\omega_2^{-1}s'_n\,s_{2m})(\tau_i)\Big\}
\nonumber \\[1.5mm] 
&+&\sum_{k> \lfloor N/2\rfloor}(-)^{k+1}p^{-2k}\Big\{\sum_{m,\,n\geq 0,\, m+n =k}
(s_{2n}s_{2m})(\tau_i)
-\sum_{m,\,n\geq 0,\, 2m+n =2k-1}(\omega_2^{-1}s'_n\,s_{2m})(\tau_i)\Big\}\,,
\nonumber \\[1.5mm] 
\end{eqnarray}
again with the understanding that $s_n \equiv 0, n>N$. The highest index 
of $s_n$ appearing in the first sum is $s_{2\lfloor N/2\rfloor}$, leaving it 
unaffected by setting $s_n\equiv 0$ for $n>N$.  Hence the first sum 
in  (\ref{asym2.10d}) vanishes as before, while the remainder 
contains only a finite number of nonzero terms
\begin{eqnarray}
\label{asym2.10e}
&&\delta_p=\sum_{k> \lfloor N/2\rfloor}(-)^{k+1}p^{-2k}
\Big\{\sum_{m,\,n\geq 0,\, m+n =k}(s_{2n}s_{2m})(\tau_i)
-\sum_{m,\,n\geq 0,\, 2m+n =2k-1}(\omega_2^{-1}s'_n\,s_{2m})(\tau_i)\Big\}\,.
\nonumber \\[1.5mm] 
\end{eqnarray}	
In general this remainder is nonzero, but it manifestly obeys 
$\delta_p=O(p^{-N-1})$. Solving $(1 + r_p)^2( -1 + \delta_p) 
+1 =0$ for $r_p$ and choosing the positive square root one has 
\begin{eqnarray}
\label{asym2.10g}
r_p&\! =\! & -1+ \sqrt{1+\frac{\delta_p}{1-\delta_p}}=O(p^{-N-1})\,,
\end{eqnarray}
on account of $\delta_p=O(p^{-N-1})$.	
	
Having established the normalization (\ref{asym2.2c}) we now proceed 
to showing (\ref{asym2.4}). It follows from (\ref{asym2.7}), 
(\ref{asym2.9}), and (\ref{asym2.10g}) that
\begin{eqnarray}
\label{asym2.11}
R_p(\tau)=1+O(p^{-N})\quad \text{uniformly in $\tau\in [\tau_i,\tau_f]$ 
as $p\to \infty$}\,,
\end{eqnarray}
proving the existence of an exact $S_p(\tau)$ such that 
$S_p(\tau)=S_p^{(N)}(\tau)[1+O(p^{-N})]$. 
On account of the same estimates (\ref{asym2.10a}) entails  
$\partial_\tau R_p(\tau)=O(p^{-N})$, from which it follows that 
\begin{eqnarray}
\label{asym2.12}
\partial_\tau S_p(\tau)&\! =\! & \partial _\tau S_p^{(N)}(\tau)
\bigg[R_p(\tau)+\frac{S_p^{(N)}(\tau)}{ \partial _\tau S_p^{(N)}(\tau)}\, 
\partial _\tau R_p(\tau)\bigg]
\nonumber \\[1.5mm] 
&\! =\! & \partial _\tau S_p^{(N)}(\tau)\big[1+O(p^{-N})\big]\,.
\end{eqnarray}
This completes the proof.

\end{proof}	

\noindent
{\bf Remarks.} 

(i) Using the results of \cite{FRWHadamard1} one can show 
that $s_n, n=1,\ldots N$, coincide with the ones induced by 
the adiabatic iteration for suffiently large order upon 
expansion in $1/p$. The recursion (\ref{asym2.2}) with initial 
data (\ref{asym2.2a}), (\ref{asym2.2d}) in this sense replaces 
the adiabatic iteration.

(ii) A WKB ansatz of the form (\ref{asym2.1}) has been analyzed in 
\cite{Nemes} recently, and was shown to be Borel summable under 
additional assumptions. These assumptions are typically not satisfied 
in massive theories, but may be attainable in massless ones. 
Our Lemma gives a weaker result which however directly applies 
to both situations. 

(iii) The Lemma implies analogous asymptotic expansions for products 
of $S_p(\tau)$'s, both at identical and at distinct times. We prepare 
below the requisite notation for the two-point function (\ref{twoplargep}), 
the modulus square (\ref{modslargep}), and the commutator function 
(\ref{commlargep}).

For the two-point function's Fourier kernel the Lemma implies 
\begin{eqnarray}
	\label{twoplargep} 
S_p(\tau) S_p(\tau')^*  &\asymp_N & \frac{ \exp\big\{ 
\!\!- i p \int_{\tau'}^{\tau} \! ds \, \omega_2(s) \big\} }%
{2 p \sqrt{\omega_2(\tau) \omega_2(\tau')}} \sum_{n \geq 0} V_n(\tau,\tau') (ip)^{-n}  
\nonumber \\[1.5mm]
V_n(\tau,\tau') &\! =\! & \sum_{j=0}^n (-)^{n-j} s_j(\tau) s_{n-j}(\tau') \,, 
\quad n \geq 0\,. 
\end{eqnarray}
To low orders $V_0 =1, V_1(\tau,\tau') = s_1(\tau) - s_1(\tau')$, 
$V_2(\tau,\tau') = s_2(\tau) - s_1(\tau) s_1(\tau') + s_2(\tau')$, 
etc.. Generally, the coefficients obey 
\begin{equation}
	V_{2 j}(\tau,\tau') = V_{2 j}(\tau',\tau) \,, \quad
V_{2 j+1}(\tau,\tau') = - V_{2 j+1}(\tau',\tau) \,, \quad j \geq 0\,.
\end{equation}
They can be evaluated from (\ref{asym2.2}), (\ref{asym2.2a}), 
(\ref{asym2.2d})  recursively to any desired order and are 
increasingly nonlocal; see (\ref{largep10}) for $n=1,2,3$. 

For the modulus square this results in an asymptotic expansion 
in odd inverse powers of $p$, 
\begin{equation}
	\label{modslargep}
|S_p(\tau)|^2 \asymp_N \frac{1}{2\omega_2(\tau)} 
\sum_{n \geq 0} (-)^n V_{2n}(\tau,\tau) \frac{1}{p^{2n+1}}\,.
\end{equation}
When used in (\ref{Spviamod}) this establishes the existence of WKB type 
asymptotic expansions. 

For the commutator function the Lemma implies 
\begin{eqnarray}
	\label{commlargep}
\Delta_p(\tau,\tau') &= & \Lambda_p^+(\tau,\tau') 
\sin\Big(p\! \int_{\tau'}^{\tau} \!\! ds\, \omega_2(s) \Big)+
\Lambda_p^{-}(\tau,\tau') 
\cos\Big(p\!\int_{\tau'}^{\tau}\! \! ds\, \omega_2(s)\Big) \,. 
\nonumber \\[1.5mm]
\Lambda_p^{+} (\tau,\tau') &\asymp_N&  
\frac{1}{\sqrt{\omega_2(\tau) \omega_2(\tau')} }
\sum_{j \geq 0} p^{-2 j -1} (-)^j V_{2j}(\tau,\tau') \,,
\nonumber \\[1.5mm]
\Lambda_p^{-} (\tau,\tau') &\asymp_N&  
\frac{1}{\sqrt{\omega_2(\tau) \omega_2(\tau')} }
\sum_{j \geq 0} p^{-2 j -2} (-)^j V_{2j+1}(\tau,\tau') \,.
\end{eqnarray}

\subsection{Generalized resolvent expansion} \label{sec4.2}

As highlighted in (\ref{Spviamod}), a Wronskian normalized solution 
of the basic wave 
equation is fully determined by its modulus square. By 
(\ref{modslargep}) we know the form of the modulus square's 
asymptotic expansion. 
The coefficients $V_{2n}(\tau,\tau)$ are in principle determined 
by the basic recursion (\ref{asym2.2}). Since at each order an 
additional integration enters, one would expect these coefficients 
to be highly nonlocal in time. Remarkably, this is not the case:
the $V_{2n}(\tau,\tau)$ turn out to be local differential polynomials 
in the frequency functions $\omega_0(\tau)^2, \omega_2(\tau)^2$ of the 
differential operator $\partial_{\tau}^2 + \omega_0(\tau)^2 + p^2 \omega_2(\tau)^2$.

The main ingredient in the derivation is the Gelfand-Dickey equation.
Using only the basic differential equation and the Wronskian 
normalization (\ref{FLode1}) one finds $|S_p(\tau)|^2$ to satisfy the 
(nonlinear form of the) Gelfand-Dickey equation  
\begin{equation}
	\label{GD1}
2 |S_p|^2 \partial_{\tau}^2 |S_p|^2 - \big( \partial_{\tau} |S_p|^2 \big)^2 
+ 4 \omega_p^2 |S_p|^4 =1\,.
\end{equation}
In view of the expected relation to (\ref{asym2.1}) it is convenient to 
set 
\begin{equation}
	\label{GD2}
|S_p(\tau)|^2 =: i G_{ip}(\tau)\,. 
\end{equation}
Then 
\begin{eqnarray}
	\label{GD3}  
&& 2 G_z \partial_{\tau}^2 G_z - (\partial_{\tau} G_z)^2 + 
4[\omega_0^2 - z^2 \omega_2^2] G_z^2 =-1\,,
\nonumber \\[1.5mm]
&& \partial_{\tau}^3 G_z + 4 [\omega_0^2 - z^2 \omega_2^2] \partial_{\tau}G_z + 
2 \partial_{\tau}[\omega_0^2 - z^2 \omega_2^2] G_z =0\,.
\end{eqnarray}
Here the second, linear version 
of the Gelfand-Dickey equation follows by differentiating 
the nonlinear form. For $\omega_2^2 =1$ and $\omega_0^2 = v$ the 
same equations govern the diagonal of the resolvent 
kernel of the differential operator $\partial_{\tau}^2 + v$, with 
$z^2 = - p^2$ playing the role of the resolvent parameter
\cite{Dickeybook}. The  diagonal of the resolvent kernel is known 
to admit an asymptotic expansion in inverse powers of $z$, whose 
coefficients coincide with the heat kernel coefficients on 
general grounds, see e.g.~\cite{Avramidibook}. The 
generalization to $[\partial_{\tau}^2 + \omega_0(\tau)^2 ] S = 
z^2 \omega_2(\tau)^2 S$, with non-constant $\omega_2(\tau)^2$ 
can be treated as follows.   

Inserting the ansatz 
\begin{eqnarray}
	\label{GD4} 
G_z(\tau) = \sum_{n \geq 0} \frac{G_{n}(\tau)}{2 \omega_2} z^{-2 n-1} \,,
\quad G_0 =1\,,
\end{eqnarray}
into the nonlinear Gelfand-Dickey equation results in the recursion
\begin{eqnarray}
	\label{GD6} 
G_n = \!\!\sum_{k,l\geq 0, k+l =n-1} \!\Big\{ 
\frac{1}{4} \frac{G_k}{\omega_2} 
\partial_{\tau}^2 \Big( \frac{G_l}{\omega_2} \Big) - \frac{1}{8} 
\partial_{\tau} \Big( \frac{G_k}{\omega_2} \Big) 
\partial_{\tau} \Big( \frac{G_l}{\omega_2} \Big) + 
\frac{1}{2} \frac{\omega_0^2}{\omega_2^2} G_k G_l \Big\} 
-\frac{1}{2} \sum_{k,l\geq 1, k+l =n} G_k G_l. \nonumber\\ 
\end{eqnarray} 
This expresses $G_n$ in terms of $G_{n-1}, \ldots, G_1$, 
and involves only differentiations. It follows that all $G_n$ 
are differential polynomials in $v := \omega_0^2,w:= \omega_2^2$. 
Denoting $\partial_{\tau}$ differentiations momentarily by a ``$\,'\,$'' one 
finds:
\begin{eqnarray}
	\label{GD7} 
G_1 &\! =\! & \frac{v}{2 w} + \frac{5}{32} \frac{{w'}^2}{ w^3} 
- \frac{1}{8} \frac{w''}{w^2}\,,
\nonumber \\[1.5mm]
G_2 &\! =\! & \frac{3}{8w^2} \Big( v^2 + \frac{1}{3} v'' \Big) 
- \frac{5}{16 w^3} \Big( v w'' + v'w' - v\frac{7 {w'}^2}{4w} \Big)     
\nonumber \\[1.5mm]
&+&\frac{1}{32 w^3} \Big(\! - w^{(4)} +\frac{21 {w''}^2}{4 w}
+\frac{7 w^{(3)} w'}{w} -\frac{231 {w'}^2w''}{8 w^2} 
+\frac{1155 {w'}^4}{64 w^3} \Big)\,.
\end{eqnarray}   
The recursion (\ref{GD6}) is easily programmed 
in {\tt Mathematica} and produces the $G_n$ to reasonably high orders. 
The $G_n$ can be seen as generalized heat kernel coefficients.
For $\omega_2 = 1$, $v=\omega_0^2$ plays the role of the potential 
and (\ref{GD7}) reproduces the well-known expressions \cite{Avramidibook} (up to 
overall normalizations). In the massless case $v = \omega_0^2 =0$, 
and only the purely $w$ dependent parts of the $G_n$ remain.
From the viewpoint of the initial expansion (\ref{asym2.1}), 
(\ref{asym2.2}) the concise differential polynomials (\ref{GD7}) 
are surprising: $G_n = V_{2n}(\tau,\tau)$ must hold by 
construction, but would seem to suggest highly nonlocal 
coefficients. At low orders one can see the cancellation of 
the nonlocal terms directly. For example, the $n=2$ recursion 
(\ref{asym2.2}) integrates to $s_2 = s_1^2/2 + \partial_{\tau} s_1/(2 \omega_2)$. 
Hence $G_1 = 2 s_2 - s_1^2 = \partial_{\tau} s_1/\omega_2$, which is 
indeed local. 

One can also relate the $G_n$'s more directly to the standard 
heat kernel coefficients. To this end, we transform the 
basic differential equation (\ref{FLode1}) into conformal 
time as in (\ref{FLode3}), but for generic frequency functions:
$\partial_{\eta} = \omega_2(\tau)^{-1} \partial_{\tau}$, $\chi_p(\tau) =
\omega_2(\tau)^{1/2} S_p(\tau)|_{\tau = \tau(\eta)}$. This replaces the 
differential operator $\partial_{\tau}^2 + \omega_0(\tau)^2 + p^2 \omega_2(\tau)^2$
by $\partial_{\eta}^2 + 2 E_1(\eta) + p^2$, with $E_1(\eta) = G_1(\tau(\eta))$,
the image of $G_1$ in (\ref{GD7}). The coefficient of $p^2$ is 
now unity and $2 E_1(\eta)$ plays the role of the potential. 
Inserting the $\eta$-version of the ansatz (\ref{GD4}) into 
the linear Gelfand-Dickey equation results in the one-step 
differential recursion  
\begin{equation}
	\label{GD5}
\partial_{\eta} E_{n+1} = \partial_{\eta} E_1 E_n + 2 E_1 \partial_{\eta} E_n 
+ \frac{1}{4} \partial_{\eta}^3 E_n\,,\quad n \geq 1\,.
\end{equation} 
This defines (up to a conventional normalization) the standard 
heat kernel coefficients with potential $2 E_1$. Undoing the 
transformation one has 
\begin{equation}
	\label{GD8} 
G_n = E_n\big|_{E_1 \mapsto G_1, \partial_{\eta} \mapsto 
\omega_2(\tau)^{-1} \partial_{\tau}}\,.
\end{equation}
For example, for $n=2$ this gives  
\begin{equation} 
\label{GD9} 
G_2 = \frac{3}{2} G_1^2 + 
\frac{1}{4 \omega_2} \partial_{\tau} \Big( \frac{\partial_{\tau} G_1}{\omega_2} \Big)\,,
\end{equation}
which is indeed satisfied by (\ref{GD7}). Generally, the agreement of 
(\ref{GD6}) with (\ref{GD8}) provides a welcome check. 
\medskip 

An analogous interplay exists
for the asymptotics of the phase as induced by the basic 
expansion (\ref{asym2.1}) and the resolvent expansion (\ref{GD4}),
respectively.  
Starting from the basic expansion (\ref{asym2.1}) the phase 
is determined by $\tan (\arg S_p(\tau)) = \Im S_p(\tau)/ \Re S_p(\tau)$.  
One finds 
\begin{eqnarray}
	\label{phlargep1}
&& \tan \big(\arg S_p(\tau)\big) = - 
\frac{ {\cal S}_p^-(\tau) {\bf C}_p  + {\cal S}_p^+(\tau) {\bf S}_p}%
{ {\cal S}_p^+(\tau) {\bf C}_p  - {\cal S}_p^-(\tau) {\bf S}_p}\,,
\nonumber \\[1.5mm]
&& {\bf S}_p =\sin\Big(p\! \int_{\tau'}^{\tau} \!\! ds\, \omega_2(s) \Big)\,,
\quad 
{\bf C}_p = \cos\Big(p\!\int_{\tau'}^{\tau}\! \! ds\, \omega_2(s)\Big) \,.
\end{eqnarray} 
with 
\begin{eqnarray}
	\label{phlargep2} 
&& {\cal S}_p^+(\tau) \asymp_N \frac{1}{\sqrt{2 p \omega_2(\tau)}} 
\sum_{j \geq 0} (-)^j s_{2 j}(\tau) p^{-2j}\,, 
\nonumber \\[1.5mm]
&& {\cal S}_p^-(\tau) \asymp_N \frac{1}{\sqrt{2 p \omega_2(\tau)}} 
\sum_{j \geq 0} (-)^j s_{2 j+1}(\tau) p^{-2j-1}\,,
\end{eqnarray}
To low orders 
\begin{eqnarray}
	\label{phlargep3} 
&& \tan \big(\!\arg S_p(\tau)\big) \asymp_N - \frac{{\bf S}_p}{{\bf C}_p}  
- \frac{1}{p} s_1(\tau) \frac{1}{ {\bf C}_p^2} 
- \frac{1}{p^2} s_1(\tau)^2 \frac{{\bf S}_p}{ {\bf C}_p^3} 
- \frac{1}{p^3} s_1(\tau)^3 \frac{{\bf S}_p^2}{ {\bf C}_p^4} 
\nonumber \\[1.5mm]
&& - \frac{1}{p^3} (s_1 s_2 - s_3)(\tau) \frac{1}{ {\bf C}_p^2} 
+ O\Big(\frac{1}{p^4} \Big)\,.
\end{eqnarray}
Writing $s_1 s_2 -s_3 = s_1^3/3 -u_3$, the ratios of trigonometric 
functions are just the derivatives of the $\tan$ function; 
so (\ref{phlargep3}) is equivalent to 
\begin{eqnarray}
	\label{phlargep4}
\arg S_p(\tau) &\asymp_N& - p\int_{\tau_0}^{\tau} \! ds \, \omega_2(s) - 
\frac{s_1(\tau)}{p} + \frac{u_3(\tau)}{p^3} 
+ O\Big(\frac{1}{p^5} \Big)\,, 
\nonumber \\[1.5mm]
u_3(\tau) &=& \frac{\partial_{\tau} G_1}{4 \omega_2} + \frac{1}{2}\int_{\tau_0}^{\tau}
\! ds\, \omega_2(s) G_1(s)^2\,,\quad G_1(\tau) = \frac{\partial_{\tau} s_1}{\omega_2}\,, 
\end{eqnarray}
where the explicit form of $u_3$ follows from the recursion 
(\ref{asym2.2}). Proceeding along these lines, it is not immediate that 
at higher orders no oscillatory terms will occur in the phase itself 
and that the coefficients will be single integrals of local quantities.  

This is, however, the case and can be seen from the alternative 
realization of the phase entailed by (\ref{Spviamod}) and (\ref{GD2})
\begin{equation}
	\label{phlargep6} 
\arg S_p(\tau) = - \frac{1}{2} \int_{\tau_0}^{\tau} \! ds\, 
\frac{1}{ i G_{ip}(s)}\,.
\end{equation} 
Here the expansion (\ref{GD4}) can be used. It follows that 
$\arg S_p(\tau)$ admits an asymptotic expansion in odd inverse 
powers of $p$ whose coefficients are {\it single} integrals 
of polynomials in the $G_n$. To low orders 
\begin{equation}
	\label{phlargep7} 
\frac{1}{ i G_{ip}(\tau)} \asymp_N 2 \omega_2(\tau) p \Big\{ 1 + 
\frac{G_1(\tau)}{p^2} + \frac{(G_1^2 - G_2)(\tau)}{p^4} 
+ O \Big( \frac{1}{p^6} \Big) \Big\}\,.
\end{equation}
The equivalence to (\ref{phlargep4}) is ensured by (\ref{GD9}). 

\subsection{Induced asymptotic expansion of SLE.} \label{sec4.3}

Using the formulas from Theorem \ref{sslethm} and (\ref{commlargep}) 
all SLE related quantities have induced asymptotic expansions in 
inverse powers of $p$ at some finite order $N>1$. The order 
can be increased arbitrarily, but in general the exact 
reference solution in Lemma \ref{SUVlemma} needs to be changed 
in order to do so. Here we show that the (unique, $N$-independent) 
SLE solution is asymptotic $\asymp_N$ to the previously constructed 
series for {\it all} $N$. In particular, the asymptotic expansion 
is independent of the window function $f$.

\begin{theorem} \label{sleUVthm} The modulus-square of the SLE solution 
admits an asymptotic expansion in odd inverse powers of $p$, whose 
coefficients are independent of the window function $f$ and 
are given by generalized heat kernel coefficients. Specifically 
\begin{equation}
	\label{SLElargep}
|T_p^{\rm SLE}(\tau)|^2 \asymp \frac{1}{2 p \omega_2(\tau)}
\left\{ 1 + \sum_{n\geq 1} \frac{(-)^n}{p^{2n}} G_n(\tau)\right\}\,,
\end{equation}
where the $G_n$ are determined recursively by (\ref{GD6}). The 
phase has an asymptotic expansion obtained from 
\begin{equation}
	\label{phSLElargep} 
\arg T_p^{\rm SLE}(\tau) \asymp - p\! \int_{\tau_0}^{\tau} \! ds \,\omega_2(s)  
\left\{ 1 + \sum_{n\geq 1} \frac{(-)^n}{p^{2n}} G_n(\tau)\right\}^{-1}\,.
\end{equation}
The massless limits are regular and have 
coefficients $G_n|_{\omega_0^2 =0}$. 
\end{theorem}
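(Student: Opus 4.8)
The plan is to leverage the fiducial-independence of the SLE two-point function (Theorem \ref{th2.1}) together with the exact WKB-type reference solutions of Lemma \ref{SUVlemma}. First I would fix an order $N$ and take as fiducial the exact, Wronskian normalized solution $S_p$ furnished by Lemma \ref{SUVlemma}, for which $|S_p(\tau)|^2\asymp_N\tfrac{1}{2\omega_2(\tau)}\sum_{n\ge0}(-)^nV_{2n}(\tau,\tau)\,p^{-2n-1}$ by (\ref{modslargep}). Inserting $S_p$ into the Bogoliubov-invariant formula (\ref{sleinv3}) gives $|T^{\rm SLE}_p(\tau)|^2={\cal J}_p[S_p](\tau)/\bigl(2\sqrt{{\cal I}_p[S_p]}\bigr)$ with ${\cal I}_p[S_p]={\cal E}_p[S_p]^2-|{\cal D}_p[S_p]|^2$ and ${\cal J}_p[S_p]=2{\cal E}_p[S_p]\,|S_p|^2-2\,\Re\bigl({\cal D}_p[S_p]^{*}S_p^{2}\bigr)$. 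The leading WKB term immediately yields ${\cal E}_p[S_p]=O(p)$ and $|S_p|^2,\,S_p^2=O(p^{-1})$, so the whole problem reduces to controlling ${\cal D}_p[S_p]$.

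The decisive step is to show that ${\cal D}_p[S_p]$ decays faster than any power of $p$. Writing $S_p(\tau)=\tilde S_p(\tau)\exp\{-ip\int_{\tau_i}^{\tau}\!\omega_2(s)\,ds\}$ with a slowly varying envelope $\tilde S_p$, the integrand of ${\cal D}_p[S_p]=\tfrac12\int d\tau\,f(\tau)^2[(\partial_\tau S_p)^2+\omega_p^2S_p^2]$ factorizes into the rapidly oscillating $\exp\{-2ip\int_{\tau_i}^{\tau}\!\omega_2\}$ times the non-oscillatory amplitude $B_p=(\partial_\tau\tilde S_p)^2-2ip\,\omega_2\,\tilde S_p\,\partial_\tau\tilde S_p+\omega_0^2\tilde S_p^2$. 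Since $\omega_2=a^{d-1}>0$ the phase is strictly monotonic, so repeated integration by parts against the smooth, compactly supported $f^2$ produces a factor $p^{-1}$ at each step. Using the derivative estimates for the correction factor $R_p$ of Lemma \ref{SUVlemma} to bound the $\tau$-derivatives of $B_p$ then gives ${\cal D}_p[S_p]=O(p^{-k})$ for any prescribed $k$, provided $N$ is chosen large enough. Feeding this into (\ref{sleinv3}) — or, equivalently, noting from (\ref{sle3}) that the minimizing Bogoliubov parameter obeys $|\mu_p|^2=O(|{\cal D}_p[S_p]|^2/{\cal E}_p[S_p]^2)$, so that $T^{\rm SLE}_p$ is only an infinitesimal rotation of $S_p$ — yields $|T^{\rm SLE}_p(\tau)|^2=|S_p(\tau)|^2+O(p^{-k-2})$.

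Combining this with (\ref{modslargep}) and the identification $V_{2n}(\tau,\tau)=G_n(\tau)$ already established through the Gelfand-Dickey recursion (\ref{GD6})--(\ref{GD8}), one obtains $|T^{\rm SLE}_p(\tau)|^2\asymp_N\tfrac{1}{2p\omega_2(\tau)}\{1+\sum_{n\ge1}(-)^nG_n(\tau)\,p^{-2n}\}$. Because the SLE solution is unique and $N$-independent while the $G_n$ do not depend on $N$, letting $N\to\infty$ upgrades this to a genuine all-order asymptotic expansion $\asymp$, with coefficients built solely from $\omega_0^2,\omega_2^2$ and therefore manifestly independent of the window $f$; this is (\ref{SLElargep}). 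The identification can also be made intrinsic: $|T^{\rm SLE}_p|^2$ solves the Gelfand-Dickey equation (\ref{GD1}) exactly, and the unique formal solution in inverse powers of $p$ with leading term $1/(2p\omega_2)$ reproduces the recursion (\ref{GD6}). For the phase I would invoke the reconstruction (\ref{Spviamod}), $\arg T^{\rm SLE}_p(\tau)=-\tfrac12\int_{\tau_0}^{\tau}ds/|T^{\rm SLE}_p(s)|^2$ with $\tau_0$ fixing the residual phase, and substitute the reciprocal of the modulus expansion — equivalently (\ref{phlargep6}), (\ref{phlargep7}) via $|T^{\rm SLE}_p|^2=iG_{ip}$ of (\ref{GD2}) — to reach (\ref{phSLElargep}); this simultaneously shows that no oscillatory term survives in the phase itself. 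The massless case is covered verbatim, since $\omega_2>0$ keeps the phase non-stationary, so every estimate persists and the coefficients collapse to $G_n|_{\omega_0^2=0}$.

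The hard part will be the to-all-orders control of the oscillatory integral ${\cal D}_p[S_p]$ in the second step, i.e. justifying the repeated integration by parts. This demands uniform bounds on the higher $\tau$-derivatives of the envelope $\tilde S_p$, equivalently of $R_p$. I expect each derivative to cost at most one power of $p$ — since stripping the oscillation leaves a slowly varying amplitude and $\partial_\tau R_p=O(p^{-N})$, with the differentiated integral equation (\ref{asym2.10a}) degrading the estimate by one power per derivative — so that choosing the reference order $N$ large enough secures decay of any desired order. Making this degradation precise and uniform in $\tau\in[\tau_i,\tau_f]$ is the technical heart of the argument.
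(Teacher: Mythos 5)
Your proposal is correct, but it takes a genuinely different route from the paper's proof. The paper works with the commutator-function representation of Theorem \ref{sslethm}, $|T_p^{\rm SLE}(\tau)|^2 = J_p(\tau)/(2{\cal E}_p^{\rm SLE})$: it inserts the asymptotics (\ref{commlargep}) of $\Delta_p$ into the integrands (\ref{largep12}), (\ref{largep13}), splits them into oscillatory and non-oscillatory parts, uses the integration-by-parts identities (\ref{largep14}) to push the oscillatory pieces below any order, and reads off that $J_p$ and $p^{-1}{\cal E}_p^{\rm SLE}$ have expansions in even inverse powers of $p$, so that their ratio has the claimed odd-power structure; the coefficients are then pinned down by the Gelfand--Dickey recursion together with an ab-initio computation of the leading term ($\tilde G_0 = 1$, verified through (\ref{largep15})--(\ref{largep22})). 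You instead work with the fiducial representation (\ref{sleinv3}) of Theorem \ref{th2.1}, take the order-$N$ WKB solution of Lemma \ref{SUVlemma} itself as fiducial, and concentrate the entire oscillatory analysis into the single scalar ${\cal D}_p[S_p]$, showing by non-stationary phase that it decays faster than any prescribed power; this makes $\mu_p$ negligible, so the SLE modulus coincides with $|S_p|^2$ to all orders and the coefficients are inherited from (\ref{modslargep}) and the identification $V_{2n}(\tau,\tau)=G_n$ of Section \ref{sec4.2}. What your route buys: it dispenses with the separate leading-order verification, and it directly yields the statement (which the paper only records as Remark (i) after the theorem) that $T_p^{\rm SLE}$ has the asymptotics of Lemma \ref{SUVlemma} for every $N$. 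What the paper's route buys: everything is phrased in terms of the state-independent, manifestly Bogoliubov-invariant $\Delta_p$, so the $f$-independence and the even/odd parity structure are transparent without having to quantify how close the fiducial is to the minimizer, and the phase identity (\ref{largep28}) is checked within the same formalism. Both arguments rest on the same technical kernel, which you correctly single out as the hard part: repeated integration by parts against $e^{\pm 2ip\int\omega_2}$ requires derivative bounds on the WKB remainder $R_p$, with each $\tau$-derivative of (\ref{asym2.10a}) degrading the $O(p^{-N})$ estimate by one power of $p$; this loss is harmless because $N$ can be taken arbitrarily large, the uniqueness from Theorem \ref{th2.1} guaranteeing that the induced expansion of $|T_p^{\rm SLE}|^2$ does not depend on $N$ — exactly the observation the paper invokes at the corresponding point of its own proof.
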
 

\begin{proof}\

 We mostly need to show that $|T_p^{\rm SLE}(\tau)|^2$ 
admits an asymptotic expansion of the form (\ref{SLElargep}) 
with some coefficients $\tilde{G}_n(\tau)$. Since the SLE solution 
is a Wronskian normalized solution of the basic wave equation, its 
modulus square solves the nonlinear Gelfand-Dickey equation 
(\ref{GD1}). The coefficients $\tilde{G}_n(\tau)$  therefore also 
have to obey the recursion (\ref{GD6}). It then suffices to 
check by direct computation that $\tilde{G}_0 =1$. The latter will 
be done separately following the proof. Since $\tilde{G}_0 =1$ 
determines all other coefficients, it follows that $\tilde{G}_n = G_n$, 
for all $n \in \mathbb{N}$. The relation (\ref{phSLElargep}) between phase 
and modulus holds on account of the Wronskian normalization. 

In order to show that $|T_p^{\rm SLE}(\tau)|^2$ has an asymptotic 
expansion in odd inverse powers of $p$, we use the realization 
as $J_p(\tau)/(2 {\cal E}_p^{\rm SLE})$ from (\ref{ssle12}). The 
integrands of $J_p(\tau)$ and $({\cal E}_p^{\rm SLE})^2$ are built from 
$\Delta_p(\tau,\tau_0)$, $\partial_{\tau} \Delta_p(\tau,\tau_0)$, 
$\partial_{\tau} \partial_{\tau_0} \Delta_p(\tau,\tau_0)$. For these 
we prepare  
\begin{eqnarray}
	\label{largep7}
\Delta_p(\tau,\tau')&\! =\! & \Lambda_p^{+}(\tau,\tau') {\bf S}_p +
\Lambda_p^{-}(\tau,\tau') {\bf C}_p\,, 
\nonumber \\[1.5mm]
\partial_{\tau} \Delta_p(\tau,\tau') &\! =\! &  
\cap_p^{+}(\tau,\tau') {\bf S}_p +
\cap_p^{-}(\tau,\tau') {\bf C}_p \,,
\nonumber \\[1.5mm]
\partial_{\tau} \partial_{\tau'}\Delta_p(\tau,\tau') &\! =\! &  
\sqcap_p^{+}(\tau,\tau') {\bf S}_p +
\sqcap_p^{-}(\tau,\tau') {\bf C}_p \,,
\end{eqnarray} 
with ${\bf S}_p$, ${\bf C}_p$ as defined in \eqref{phlargep1}, and
\begin{eqnarray}
	\label{largep8}
\cap_p^{\pm}(\tau,\tau') &\! =\! & \partial_{\tau} \Lambda^{\pm }(\tau,\tau') 
\mp p \omega_2(\tau) \Lambda^{\mp }(\tau,\tau') \,,
\nonumber \\[1.5mm]
\sqcap_p^{\pm}(\tau,\tau') &\! =\! & \partial_{\tau'}\partial_{\tau} 
\Lambda^{\pm}_p(\tau,\tau') \pm p [ \partial_{\tau} \Lambda^{\mp}(\tau,\tau') 
\omega_2(\tau') - \partial_{\tau'} \Lambda^{\mp}(\tau,\tau') \omega_2(\tau) ] 
\nonumber \\[1.5mm]
&+& p^2 \omega_2(\tau) \omega_2(\tau') \Lambda^{\pm}(\tau,\tau')\,.
\end{eqnarray}
Note that $\Lambda^{\pm}_p(\tau,\tau') = \pm \Lambda^{\pm}_p(\tau',\tau)$,
$\sqcap^{\pm}_p(\tau,\tau') = \pm \sqcap^{\pm}_p(\tau',\tau)$,
while $\cap_p^{\pm}(\tau,\tau')$ has no manifest symmetry. 
The normalization of the commutator function implies, however, 
$\cap_p^-(\tau,\tau) =1$. The definitions in combination 
with (\ref{commlargep}) imply that $\Lambda_p^+, \cap_p^+, \sqcap_p^+$ 
have an asymptotic $\asymp_N$ expansion in odd inverse powers of $p$,
while $\Lambda_p^-, \cap_p^-, \sqcap_p^-$ have an asymptotic $\asymp_N$ 
expansion in even inverse powers of $p$. Crucially, while the fiducial solutions $S_{N}$ provided by Lemma \ref{SUVlemma} are implicitly $N$-dependent, Theorem \ref{th2.1} ensures that the induced expansion of  $|T_p^{\rm SLE}(\tau)|^2$ is independent thereof. Schematically, $|T_p^{\rm SLE}[S_{N}]|^2$ is the same for all $N$, which allows one to take $N$ arbitrarily large.
 
Next we use (\ref{largep7}) to evaluate the integrands of $J_p(\tau')$ 
from (\ref{ssle3}) and $({\cal E}_p^{\rm SLE})^2$ from (\ref{ssle11}). In a first 
step we merely insert (\ref{largep7}) and replace all powers of 
oscillatory terms by linear ones using 
\begin{equation}
	\label{largep11} 
{\bf S}_p^2 = \frac{1}{2}(1 - {\bf C}_{2 p})\,, \quad 
{\bf C}_p^2 = \frac{1}{2}(1 + {\bf C}_{2 p})\,, \quad 
{\bf S}_p {\bf C}_p = \frac{1}{2} {\bf S}_{2p} \,.
\end{equation}
This gives
\begin{eqnarray}
	\label{largep12} 
&& (\partial_{\tau} \Delta_p(\tau,\tau'))^2 + \omega_p(\tau)^2 \Delta_p(\tau,\tau')^2 
\nonumber \\[1.5mm]
&& \quad = 
\frac{1}{2}\Big[ \cap_p^+(\tau,\tau')^2 + \cap_p^-(\tau,\tau')^2 + 
\omega_p(\tau)^2 \big(\Lambda_p^+(\tau,\tau')^2 
+ \Lambda_p^-(\tau,\tau')^2 \big) \Big] 
\nonumber \\[1.5mm] 
&& \quad -
\frac{1}{2}\Big[ \cap_p^+(\tau,\tau')^2 - \cap_p^-(\tau,\tau')^2 + 
\omega_p(\tau)^2 \big(\Lambda_p^+(\tau,\tau')^2 
- \Lambda_p^-(\tau,\tau')^2 \big) \Big] {\bf C}_{2p} 
\nonumber \\[1.5mm]
&& \quad 
+ \Big[ (\cap_p^+ \cap_p^-)(\tau,\tau')  
+ \omega_p(\tau)^2 (\Lambda_p^+ \Lambda_p^-)(\tau,\tau')\Big] 
{\bf S}_{2p} \,.
\end{eqnarray}
The integrand of $({\cal E}_p^{\rm SLE})^2$ is of course symmetrized in 
$\tau,\tau'$; for brevity's sake we use the non-symmetric version   
\begin{eqnarray}
	\label{largep13} 
&& \big(\partial_{\tau} \partial_{\tau'}\Delta_p(\tau,\tau')\big)^2 
+ 2\omega_p(\tau')^2 \big(\partial_{\tau} \Delta_p(\tau,\tau')\big)^2 
+ \omega_p(\tau)^2 \omega_p(\tau')^2 \Delta_p(\tau,\tau')^2  
\nonumber \\[1.5mm]
&& \quad =
\frac{1}{2}\Big[ \sqcap_p^+(\tau,\tau')^2 + \sqcap_p^-(\tau,\tau')^2 
+ 2\omega_p(\tau')^2 \big(\cap_p^+(\tau,\tau')^2 + 
\cap_p^-(\tau,\tau')^2 \big) 
\nonumber \\[1.5mm]
&& \quad +
\omega_p(\tau)^2 \omega_p(\tau')^2 \big(\Lambda_p^+(\tau,\tau')^2 + 
\Lambda_p^-(\tau,\tau')^2 \big)  
\Big] 
\nonumber \\[1.5mm]
&& \quad -
\frac{1}{2}\Big[ \sqcap_p^+(\tau,\tau')^2 - \sqcap_p^-(\tau,\tau')^2 
+ 2\omega_p(\tau')^2 \big(\cap_p^+(\tau,\tau')^2 - 
\cap_p^-(\tau,\tau')^2 \big) 
\nonumber \\[1.5mm]
&& \quad +
\omega_p(\tau)^2 \omega_p(\tau')^2 \big(\Lambda_p^+(\tau,\tau')^2 - 
\Lambda_p^-(\tau,\tau')^2 \big)  
\Big]{\bf C}_{2p} 
\nonumber \\[1.5mm]
&& \quad +  
\Big[ (\sqcap_p^+ \sqcap^-)(\tau,\tau') + 
2 \omega_p(\tau')^2 (\cap_p^+ \cap_p^-)(\tau,\tau') 
\nonumber \\[1.5mm]
&& \quad +
\omega_p(\tau)^2 \omega_p(\tau')^2 (\Lambda_p^+\Lambda_p^-)(\tau,\tau') \Big] 
{\bf S}_{2p} \,.
\end{eqnarray}
The coefficients of the oscillatory terms have asymptotic 
expansions in inverse powers of $p$ which are uniform the 
both variables. Focussing on the integration variable we write 
$A_p(\tau)$ for such a coefficient. For smooth $\omega_0,\omega_2$ 
also $A_p$ will be smooth in $\tau$. By repeated use of the 
integrations-by-parts identities 
\begin{eqnarray}
	\label{largep14}
&& {\bf S}_{2p} = - \frac{1}{2p \omega_2(\tau)} \partial_{\tau} {\bf C}_{2p} \,, 
\quad 
{\bf C}_{2p} = \frac{1}{2p \omega_2(\tau)} \partial_{\tau} {\bf S}_{2p} \,, 
\nonumber \\[1.5mm]
&& \int\! d\tau f(\tau)^2 A_p(\tau) {\bf S}_{2p} = \frac{1}{2p} 
\int\! d\tau \partial_{\tau} \bigg( \frac{f(\tau)^2 A_p(\tau)}{\omega_2(\tau)} \bigg) 
{\bf C}_{2p}\,, 
\nonumber \\[1.5mm]
&& \int\! d\tau f(\tau)^2 A_p(\tau) {\bf C}_{2p} = -\frac{1}{2p} 
\int\! d\tau \partial_{\tau} \bigg( \frac{f(\tau)^2 A_p(\tau)}{\omega_2(\tau)} \bigg) 
{\bf S}_{2p}\,, 
\end{eqnarray}
the oscillatory terms can therefore be made 
subleading at any desired order of the asymptotic expansion.  

It follows that at any order the asymptotic expansion of 
$J_p(\tau')$ and $({\cal E}_p^{\rm SLE})^2$ is generated by the 
non-oscillatory terms in (\ref{largep12}), (\ref{largep13}).  
By inspection of the orders induced by (\ref{commlargep}) and 
(\ref{largep8})  one sees that the non-oscillatory term in 
(\ref{largep12}) has an expansion in even inverse powers of 
$p$, starting with a $O(p^0)$ term. Similarly $p^{-2}$ times 
the non-oscillatory term in (\ref{largep13}) has an expansion 
in even inverse powers of $p$, starting with a $O(p^0)$ term.  
 Hence $J_p(\tau')$ has an asymptotic expansion in even 
inverse powers of $p$, starting with an $O(p^0)$ term. 
The square root of the non-oscillatory term in (\ref{largep13}) 
governs the expansion of $p^{-1} {\cal E}_p^{\rm SLE}$, which therefore 
likewise has an asymptotic expansion in even inverse powers 
of $p$, starting with a $O(p^0)$ term. Together, 
$J_p(\tau)/(2 {\cal E}_p^{\rm SLE})$ admits a asymptotic expansion 
in odd inverse powers of $p$, as claimed. 
Augmented by the explicit computation of the leading order, this 
implies the result.   
\end{proof} 

\noindent
{\bf Remarks.} 

(i) The exponent in $\exp\{ i \arg T_p^{\rm SLE}(\tau)\}$ can be 
re-expanded in powers of $1/p$ to obtain a simplified expansion 
of the form (\ref{asym2.1}). Theorem \ref{sleUVthm} implies 
that $T_p^{\rm SLE}(\tau)$ has the property described in Lemma 
\ref{SUVlemma} for {\it any $N>1$}. This replaces Olbermann's 
Lemma 4.5, where the adiabatic vacua of order $N$ play a role 
analogous to our approximants $S_p^{(N)}(\tau)$ (though not necessarily 
with matched orders). The adiabatic 
vacua are however far less explicit: first, the adiabatic 
iteration produces more complicated formulas of which only 
the large $p$ expansion is actually used. Second, the 
iterates are only well-defined for sufficiently large $p$, so 
for technical reasons they need to be extended 
in an ad-hoc manner to small momenta \cite{FRWHadamard1}.
Third, the result then enters an integral equation 
whose iteration produces the required exact solution, dubbed  
adiabatic vacuum of order $N$. The Lemma \ref{SUVlemma} short 
cuts these three steps. The ansatz (\ref{asym2.1}) only processes the 
information relevant for large $p$ and the iteration (\ref{asym2.2}) 
is manifestly well-defined without modifications. In combination 
with (\ref{GD4}), (\ref{GD6}) this yields a practically usable expansion.  

(ii) The simplified expansion from (i) for the product $T_p^{\rm SLE}(\tau) 
T_p^{\rm SLE}(\tau')^*$ can be viewed as the Fourier space 
version of the (state independent) Hadamard parametrix. 
The Hadamard parametrix also has a truncated version where 
only the solution of the recursion to some finite order is kept,
see e.g.~\cite{Moretti}. These truncations converge in a 
certain sense to the Hadamard parametrix proper, which in turn 
is a distributional solution of the wave equation in both 
arguments modulo a smooth piece. The fact that the inverse 
Fourier transform of the state independent WKB expansion has 
the form of the Hadamard parametrix was verified (in $d=3$ and 
in conformal time) by an instructive if formal computation 
in \cite{Pirk}. In Olbermann's proof of the Hadamard property 
this step is rigorously supplied by appealing to a general 
result of Junker and Schrohe \cite{JunkerS}, describing the wave 
front set of adiabatic vacua of order $N$. Since  our 
approximants have the same large $p$ asymptotics as the 
adiabatic vacua (though not necessarily with matched orders) 
this step carries over. It may be worthwhile to attempt a 
direct, simplified proof, specific for SLE and including 
the massless case. 

(iii) Assuming that the massless case can be treated along 
these lines the SLE would provide very relevant 
examples of {\it infrared finite Hadamard} states. Their relevance 
stems from the following {\it Proposal:} The primordial vacuum-like 
state (of a massless free QFT and the perturbation theory based on it) 
should be chosen to be an infrared finite Hadamard state 
and conceptually be associated with a pre-inflationary
period of non-accelerated expansion. The rationale for this 
proposal is detailed in Section \ref{sec5}.

\medskip
\noindent
{\bf Direct verification of Theorem \ref{sleUVthm} to subleading order.}
The proof of Theorem \ref{sleUVthm} hinges on the direct verification of 
the leading order asymptotics. Here we present an ab-initio 
evaluation of the $|T^{\rm SLE}_p(\tau|^2$ asymptotics to subleading 
order, starting from Eq.~(\ref{ssle12}) and the asymptotics 
(\ref{commlargep}) of the commutator function. 
We prepare to subleading order
\begin{eqnarray}
	\label{largep9}
\Lambda^{+}_p(\tau,\tau') &\! =\! & \frac{1}{p} \tilde{V}_0(\tau,\tau') - 
\frac{1}{p^3} \tilde{V_2}(\tau,\tau')  + O\Big(\frac{1}{p^5}\Big) \,, 
\nonumber \\[1.5mm]
\Lambda^{-}_p(\tau,\tau') &\! =\! & \frac{1}{p^2} \tilde{V}_1(\tau,\tau') 
- \frac{1}{p^4} \tilde{V}_3(\tau,\tau') + O\Big(\frac{1}{p^6}\Big) \,, 
\nonumber \\[1.5mm]
\cap^{+}_p(\tau,\tau') &\! =\! & \frac{1}{p} \big[\partial_{\tau}\tilde{V}_0(\tau,\tau') - 
\omega_2(\tau) \tilde{V}_1(\tau,\tau')\big] -
\frac{1}{p^3}\big[ \partial_{\tau} \tilde{V_2}(\tau,\tau')  
- \omega_2(\tau) \tilde{V}_3(\tau,\tau') \big] 
+ O\Big(\frac{1}{p^5}\Big) \,, 
\nonumber \\[1.5mm]
\cap^{-}_p(\tau,\tau') &\! =\! & \omega_2(\tau) \tilde{V}_0(\tau,\tau') 
+ \frac{1}{p^2}\big[\partial_{\tau}\tilde{V}_1(\tau,\tau') - \omega_2(\tau) 
\tilde{V}_2(\tau,\tau') \big]  + O\Big(\frac{1}{p^4}\Big)\,,
\nonumber \\[1.5mm]
\sqcap^{+}_p(\tau,\tau') &\! =\! & p \,\omega_2(\tau) \omega_2(\tau')\tilde{V}_0(\tau,\tau') 
+ \frac{1}{p} \big[ \partial_{\tau}\partial_{\tau'} \tilde{V}_0(\tau,\tau') 
+ \partial_{\tau} \tilde{V}_1(\tau,\tau') \omega_2(\tau') 
- \partial_{\tau'} \tilde{V}_1(\tau,\tau') \omega_2(\tau) 
\nonumber \\[1.5mm]
&-& \omega_2(\tau) \omega_2(\tau') 
\tilde{V}_2(\tau,\tau') \big] 
 + O\Big(\frac{1}{p^3}\Big) \,, 
\nonumber \\[1.5mm]
\sqcap^{-}_p(\tau,\tau') &\! =\! & -\partial_{\tau} \tilde{V}_0(\tau,\tau') \omega_2(\tau') 
+\partial_{\tau'} \tilde{V}_0(\tau,\tau') \omega_2(\tau) + 
\omega_2(\tau) \omega_2(\tau') \tilde{V}_1(\tau,\tau')  
\nonumber \\[1.5mm]
&+& \frac{1}{p^2}\big[\partial_{\tau}\partial_{\tau'} \tilde{V}_1(\tau,\tau') + 
\partial_{\tau} \tilde{V}_2(\tau,\tau') \omega_2(\tau') - 
\partial_{\tau'} \tilde{V}_2(\tau,\tau') \omega_2(\tau) 
\nonumber \\[1.5mm]
&-& \omega_2(\tau) \omega_2(\tau') \tilde{V}_3(\tau,\tau')\big] 
+ O\Big(\frac{1}{p^4}\Big) \,, 
\end{eqnarray}
with 
\begin{eqnarray}
	\label{largep10} 
\tilde{V}_n(\tau,\tau') &:=& 
\frac{V_n(\tau,\tau')}{ \sqrt{\omega_2(\tau) \omega_2(\tau')}} \,, \quad 
V_0 =1\,, 
\nonumber \\[1.5mm]
V_1(\tau,\tau') &=& s_1(\tau) - s_1(\tau') \,, \quad 
V_2(\tau,\tau') = \frac{1}{2} V_1(\tau,\tau')^2 + 
\frac{1}{2}[G_1(\tau) + G_1(\tau')]\,,
\nonumber \\[1.5mm]
V_3(\tau,\tau') &\! =\! & \frac{1}{6} V_1(\tau,\tau')^3 + V_1(\tau,\tau') 
\big[ G_1(\tau) + G_1(\tau') \big] 
\nonumber \\[1.5mm]
&+& \frac{\partial_{\tau} G_1(\tau)}{ 2 \omega_2(\tau)} - 
\frac{\partial_{\tau'} G_1(\tau')}{ 2 \omega_2(\tau')} - 
2\! \int_{\tau'}^{\tau} \! ds \,\omega_2(s) G_1(s)^2\,.
\end{eqnarray}
As described in the proof, it suffices to focus on the non-oscillatory 
in (\ref{largep12}), (\ref{largep13}). Keeping up to subleading terms 
in (\ref{largep12}) one finds 
\begin{eqnarray}
	\label{largep15}
&\!\!\!\!\!\!\!\!\!\! & 
(\partial_{\tau} \Delta_p(\tau,\tau'))^2 + \omega_p(\tau)^2 \Delta_p(\tau,\tau')^2 
\nonumber \\[1.5mm]
&\!\!\!\!\!\!\!\!\!\! & \quad \asymp \omega_2(\tau)^2 \tilde{V}_0(\tau,\tau')^2 
+ \frac{1}{p^2} \Big\{ \frac{1}{2} (\partial_{\tau}\tilde{V}_0(\tau,\tau'))^2 
+ \frac{1}{2} \omega_0(\tau)^2 \tilde{V}_0(\tau,\tau')^2
\nonumber \\[1.5mm]
&\!\!\!\!\!\!\!\!\!\! & \quad 
+ \omega_2(\tau) \big( \tilde{V}_0\partial_{\tau} 
\tilde{V}_1 - \partial_{\tau} \tilde{V}_0 
\tilde{V}_1\big)(\tau,\tau') 
+ \omega_2(\tau)^2 \big( \tilde{V}_1^2 - 
2 \tilde{V}_0\tilde{V}_2 \big)(\tau,\tau') \Big\} + 
O\Big( \frac{1}{p^4} \Big).
\end{eqnarray} 
Upon integration this gives 
\begin{eqnarray}
	\label{largep16}
J_p(\tau') \asymp \frac{\bar{\omega}_2}{2 \omega_2(\tau')}\bigg\{1 + 
\frac{1}{p^2}\Big[ - G_1(\tau') + \frac{1}{2 \bar{\omega}_2} 
\int\! d\tau f(\tau)^2 \Big( \frac{\omega_0^2}{\omega_2} + 
\frac{1}{4} \frac{(\partial_{\tau} \omega_2)^2}{\omega_2^2} \Big) \Big] + 
O\Big(\frac{1}{p^4} \Big) \bigg\}\,. 
\end{eqnarray}
Here we used 
\begin{eqnarray}
	\label{largep17} 
&& \tilde{V}_1^2 - 2\tilde{V}_0 \tilde{V}_2 = 
- \frac{G_1(\tau) + G_1(\tau')}{\omega_2(\tau) \omega_2(\tau')} \,,
\nonumber \\[1.5mm]
&& \frac{1}{2} (\partial_{\tau} \tilde{V}_0)^2 + 
\frac{1}{2} \omega_0(\tau)^2 \tilde{V}_0^2 = 
\frac{1}{2 \omega_2(\tau) \omega_2(\tau')}
\bigg( \omega_0^2 + \frac{1}{4}
\frac{(\partial_{\tau} \omega_2)^2}{\omega_2^2} \bigg)\,,
\nonumber \\[1.5mm]
&& 2 \omega_2^2 G_1 = \omega_0^2 + \frac{1}{4}
\frac{(\partial_{\tau} \omega_2)^2}{\omega_2^2} - \frac{1}{2} \partial_{\tau} \Big( 
\frac{\partial_{\tau} \omega_2}{\omega_2} \Big)\,.
\end{eqnarray}
Similarly, keeping up to subleading terms in (\ref{largep13}) one 
has  
\begin{eqnarray}
	\label{largep18} 
&& \big(\partial_{\tau} \partial_{\tau'}\Delta_p(\tau,\tau')\big)^2 
+ 2 \omega_2(\tau')^2 \big(\partial_{\tau} \Delta_p(\tau',\tau)\big)^2 
+ \omega_p(\tau)^2 \omega_p(\tau')^2 \Delta_p(\tau,\tau')^2  
\nonumber \\[1.5mm]
&& \quad \asymp p^2 \, 2 \omega_2(\tau)^2 \omega_2(\tau')^2 \tilde{V}_0^2 
+ \Big(\frac{1}{2} \omega_0(\tau)^2 \omega_2(\tau')^2 + \frac{3}{2} 
\omega_0(\tau')^2 \omega_2(\tau)^2 \Big) \tilde{V}_0^2 
\nonumber \\[1.5mm]
&& \quad + \omega_2(\tau) \omega_2(\tau') \tilde{V}_0 \partial_{\tau} \partial_{\tau'} 
\tilde{V}_0 + \omega_2(\tau')^2 (\partial_{\tau} \tilde{V}_0)^2  
+ \frac{1}{2} \big[ \partial_{\tau} \tilde{V}_0 \omega_2(\tau') - 
\partial_{\tau'} \tilde{V}_0 \omega_2(\tau) \big]^2 
\nonumber \\[1.5mm]
&& + \quad 
3 \omega_2(\tau) \omega_2(\tau')^2 \big( \tilde{V}_0 \partial_{\tau} \tilde{V}_1 
- \tilde{V}_1 \partial_{\tau} \tilde{V}_0 \big) 
- \omega_2(\tau)^2 \omega_2(\tau') \big( \tilde{V}_0 \partial_{\tau'} \tilde{V}_1 
- \tilde{V}_1 \partial_{\tau'} \tilde{V}_0 \big) 
\nonumber \\[1.5mm]
&& + \quad 
2 \omega_2(\tau)^2 \omega_2(\tau')^2 \big( \tilde{V}_1^2 - 2\tilde{V}_0 
\tilde{V}_2 \big) + O \Big( \frac{1}{p^2} \Big) .
\end{eqnarray}
For the simplification we use (\ref{largep17}) as well as
\begin{equation}
	\label{largep19} 
\tilde{V}_0 \partial_{\tau} \tilde{V}_1 - \tilde{V}_1 \partial_{\tau} \tilde{V}_0 = 
\frac{G_1(\tau)}{\omega_2(\tau')}\,, 
\quad 
\tilde{V}_0 \partial_{\tau'} \tilde{V}_1 - \tilde{V}_1 \partial_{\tau'} \tilde{V}_0 = 
- \frac{G_1(\tau')}{\omega_2(\tau)}\,, 
\end{equation} 
For the $O(p^0)$ term in (\ref{largep18}) this results in 
\begin{eqnarray}
	\label{largep20}
&& \omega_2(\tau) \omega_2(\tau') \big( G_1(\tau) - G_1(\tau') \big) 
+ \frac{1}{2} \frac{\omega_0(\tau)^2}{\omega_2(\tau)} \omega_2(\tau') 
+ \frac{3}{2} \frac{\omega_0(\tau')^2}{\omega_2(\tau')} \omega_2(\tau) 
\nonumber \\[1.5mm]
&& \quad + \frac{3}{8} \frac{(\partial_{\tau} \omega_2)^2}{ \omega_2(\tau)^3} 
\omega_2(\tau') + \frac{1}{8} \frac{(\partial_{\tau'} \omega_2)^2}{ \omega_2(\tau')^3} 
\omega_2(\tau)\,.
\end{eqnarray}  
Finally, 
\begin{eqnarray}
	\label{largep21} 
({\cal E}_p^{\rm SLE})^2 = \frac{p^2}{4} \bar{\omega}_2^2 + 
\frac{\bar{\omega}_2}{4} \int\! d\tau \, f(\tau)^2 
\bigg( \frac{\omega_0^2}{ \omega_2} + \frac{1}{4}
\frac{(\partial_{\tau} \omega_2)^2}{ \omega_2^3} \bigg) + O\Big(\frac{1}{p^4} \Big) \,.
\end{eqnarray}
This results in 
\begin{equation}
	\label{largep22} 
|T_p^{\rm SLE}(\tau)|^2 \asymp \frac{1}{2 p \omega_2(\tau)}
\left\{ 1 - \frac{1}{p^2} G_1(\tau)  + O \Big( \frac{1}{p^4} \Big)
\right\}\,.
\end{equation}
The leading term confirms $\tilde{G}_0=1$ in the proof of 
Theorem \ref{sleUVthm}. The subleading term verifies the assertion at this order 
by an ab-initio computation. 

As seen in before, the relation (\ref{phSLElargep}) 
between phase and modulus holds on account of the Wronskian 
normalization. However, it is not immediate how the expression 
(\ref{ssle12}) for $\tan (\arg T_p^{\rm SLE}(\tau))$ reproduces 
this simple answer. As a final check on the framework we 
verified the equivalence to subleading order by direct computation. 
Omitting the details, the result is 
 
\begin{equation}
	\label{largep28} 
\tan\big( \arg T_p^{\rm SLE}(\tau) \big) = 
- \frac{{\cal E}_p^{\rm SLE} \Delta_p(\tau,\tau_0)}{J_p(\tau,\tau_0) } 
\asymp 
- \frac{{\bf S}_p}{{\bf C}_p} - \frac{s_1(\tau)}{p} \frac{1}{ {\bf C}_p^2} 
+ O\Big( \frac{1}{p^3} \Big)\,.
\end{equation}
This agrees with \eqref{phlargep3} and hence \eqref{phlargep6}, \eqref{phlargep7} to the order considered.

\newpage 
\section{SLE as pre-inflationary vacua}\label{sec5}

One of the key empirical facts about the Cosmological Microwave Background 
(CMB) is its near scale invariance at large values of the multipole
expansion. This feature, realized at $t=t_{\rm decoupl}$, is thought to be rooted in a similar 
behavior of the primordial power spectrum $P_{\zeta}(t_*,p)$ at the 
(cosmological) time $t_* \ll t_{\rm decoupl}$ when the seeds for 
structure formation are laid, for any of the relevant fluctuation 
variables $\zeta$. In terms of the spatial Fourier 
momentum a behavior $P_{\zeta}(t_*,p) \sim |p|^{-2\nu}$ is needed, 
with $\nu$ close to $d/2$. Such a behavior is seemingly incompatible 
with the momentum dependence of the massless SLE modes. We show here 
that a qualitatively correct power spectrum arises at $t=t_*$, 
if a pre-inflationary period is followed by one of near-exponential 
expansion. 

It must be stressed that general relativity {\it demands} a period 
of non-accelerated expansion following the Big Bang, 
i.e.~for some interval $t \in (t_{\rm sing}, t_1]$. In particular, 
variants of the cosmological singularity theorems remain valid 
for generic inflationary spacetimes with positive cosmological 
constant \cite{GRsingthms}. For FL spacetimes a pre-inflationary phase 
with kinetic energy domination is preferred \cite{FRWasym1,FRWasym2}. 
As a consequence, 
the time-honored purely positive frequency Bunch-Davies vacuum, 
traditionally postulated at the beginning of the inflationary 
period cannot be physically realistic: the modes from 
the pre-inflationary period (whether themselves positive 
frequency or not close to the singularity) will generically {\it not} 
be positive frequency at $t_1$. As a consequence the modes 
at $t = t_1$ can also not comply with deSitter invariance.
This is because an admixture of positive and negative frequency 
modes compatible with deSitter invariance (known as $\alpha$ vacua) 
fails to define a Hadamard state. Perturbation theory in 
an $\alpha$ vacuum suffers from incurable UV divergences already 
at one loop order. One is thus led to search for Hadamard states 
on an FL background in the interval $(t_{\rm sing}, t_1]$ with implicitly 
defined bonus properties that lead to a qualitatively correct 
power spectrum at $t=t_*$. We propose massless SLE states as 
viable candidates.  
 

\subsection{Asymptotics of massless modes versus power spectrum}\label{sec5.1}

We return to the basic wave equation in conformal time (\ref{FLode3}) 
and specialize to the massless case and $d=3$
\begin{equation}
	\label{FLode4} 
\Big[ \partial_{\eta}^2 + p^2 - \frac{\partial_{\eta}^2 a}{a}\Big] \chi_p(\eta)=0\,,
\quad 
\partial_{\eta} \chi_p \chi_p^* - (\partial_{\eta} \chi_p)^* \chi_p =-i\,.
\end{equation} 
The wave equation (\ref{FLode4}) bears a two-fold relation  to 
lowest order cosmological perturbation theory, see 
e.g.~\cite{WeinbergCosmbook}, Chapter 10: (a) it coincides 
precisely with the  wave equation satisfied by the tensor 
perturbations, with $\chi_p$ playing the role of either of the 
coefficient functions $h_+(\eta,p)$ or $h_{\times}(\eta,p)$  
in the polarization decomposition $h_{ij}(\eta,x) = 
h_+(\eta,x) e^+_{ij} + h_{\times}(\eta,x) e^{\times}_{ij}$, 
and $ds^2 = a(\eta)^2 [ -d \eta^2 + (\delta_{ij} + h_{ij} ) dx^i dx^j]$. 
(b) With the replacement of $a$ by $z$, the Mukhanov-Sasaki variable,
it coincides with wave equation satisfied by the scalar (curvature) 
perturbations, where $\chi_p$ is often denoted by $v_p(\eta) = 
z(\eta) {\cal R}_p(\eta)$. 

The equation (\ref{FLode4}) can be solved for small $p$ and 
large $p$ as detailed in Sections \ref{sec3.1} and \ref{sec4.1}, respectively. 
For small $p$ one has a convergent power series expansion
$\chi_p(\eta) = \sum_{n \geq 0} \chi_n(\eta) p^{2n}$,  
which corresponds to the massless case of (\ref{Srec1}).
Since $\tau = \int^{\eta} \! ds \, a(s)^{-2}$ and $S_p(\eta) = 
\chi_p(\eta)/a(\eta)$, the leading order $S_0(\tau)$ from before 
(\ref{massless1}) reads 
\begin{equation}
	\label{chiasym1} 
\chi_0(\eta) = a(\eta) \Big[z_0 +  w_0 \int_{\eta_0}^{\eta} \frac{ds}{a(s)^2} 
\Big] \,, \quad w_0 z_0^* - w_0^* z_0 =-i\,.
\end{equation}
The higher orders then are determined recursively by transcribing 
(\ref{Srec3}). Heuristically, the leading order can be expected
to be a good approximation if $p \ll \partial_{\eta} a/a$, 
$p^2 \ll \partial_{\eta}(\partial_{\eta} a/a)$, so that $2 p^2 \ll \partial_{\eta}^2a /a$. 
In other words, the wavelength $1/p$ of the mode needs to be 
uniformly much larger than the comoving Hubble distance 
$a/\partial_{\eta} a$. Under these conditions $\int^{\eta} \! ds \,a(s)^{-2} 
\propto 1/(p a(\eta)^2)$ (with a small constant of proportionality) 
is selfconsistent and shows that the second term in $\chi_0$ 
will be decreasing in $\eta$, while the first term is increasing. 
With the replacement of $a(\eta)$ by $z(\eta)$ 
the same applies to the scalar perturbations. It must be stressed 
that the low momentum behavior (\ref{chiasym1}) is not generic;
there are relevant solutions with a different behavior, as 
highlighted by the SLE solution (\ref{chiasym3}) below.

In order to transcribe the WKB ansatz (\ref{asym2.1}) 
we note $\int_{\tau_i}^{\tau} d\tau' a(\tau')^2 = \eta- \eta_i$ and 
$\partial_{\tau} = a(\eta)^2\partial_{\eta}$, for $d=3$. Specializing also 
(\ref{asym2.2}) to $\omega_0(\eta) =0$, $\omega_2(\eta)= a(\eta)^2$
the WKB solution for (\ref{FLode4}) reads 
\begin{eqnarray}
	\label{chiasym2} 
\chi_p(\eta) &\asymp_N& \frac{e^{- i p (\eta - \eta_i)}}{\sqrt{2 p}} 
\Big\{ 1 + \sum_{n \geq 1} (i p)^{-n} s_n(\eta) \Big\}\,,
\nonumber \\[1.5mm]
\partial_{\eta} s_n &\! =\! & \partial_{\eta} s_1 s_{n-1} + \frac{1}{2} \partial_{\eta}^2 s_{n-1}\,,
\quad 
\partial_{\eta} s_1 = - \frac{1}{2} \frac{\partial_{\eta}^2 a}{a}\,.
\end{eqnarray}
For the modulus square this gives $2 p |\chi_p(\eta)|^2 \asymp_N 
1 + p^{-2} \partial_{\eta}^2a /(2a) + O(p^{-2})$, see (\ref{chiasym4}). 
Heuristically, the WBK approximation  is expected to be good 
in the regime opposite to (\ref{chiasym1}), 
i.e.~whenever the wave length $1/p$ of the mode is 
uniformly much smaller than the comoving Hubble distance 
$a/\partial_{\eta} a$, entailing $\partial_{\eta}^2a /a \ll 2p^2$. 
Again, simply replacing $a(\eta)$ by $z(\eta)$ gives the 
corresponding result for the scalar perturbations. 

The quantity of interest is the power spectrum at the time 
of seed formation $\eta_*$. Per tensor mode it is defined by  
\begin{equation}
	\label{Pdef} 
P_{\chi}(p) := \lim_{\eta \rightarrow \eta_*} \frac{p^3}{2 \pi^2} 
\frac{|\chi_p(\eta)|^2}{a(\eta)^2}\,,
\end{equation}
and similarly with $z$ replacing $a$ for the scalar perturbations. 
The time $\eta_*$ is often identified with the Hubble crossing time 
$\eta_p$, defined by $(\partial_{\eta} a/a)(\eta_p) = p$. This 
lies in the cross-over region of the $(\eta,p)$ plane not directly 
accessible via the small or large momentum expansions. 
A nearly scale invariant power spectrum is one where 
$P_{\chi}(p) \propto p^{-2 \epsilon}$ for a small positive coefficient
$\epsilon>0$. As indicated, the power spectrum also depends on 
the choice of solution $\chi_p$. The principles of QFT 
in curved spacetime require its large momentum behavior to 
be constrained by the Hadamard property. A necessary but by no 
means sufficient condition for a solution to be Hadamard it that 
it approaches a positive frequency wave for $p \rightarrow \infty$.  
The low momentum behavior is somewhat constrained along the lines 
discussed at the end of Section \ref{sec3}. In the present context, an additional
constraint arises from the requirement that $p^3 
|\chi_p(\eta)|^2/a(\eta)^2$ is approximately scale invariant 
in the cross-over region of the $(\eta,p)$ plane.

The SLE have been shown to meet the first two criteria. 
Here we explore the satisfiability of the last requirement. 
We first note the low and high momentum behavior by 
appealing to the results from Sections \ref{sec3.2} and \ref{sec4.2}. 
For the low momentum expansion the formulas 
(\ref{massless1}), (\ref{massless2}), (\ref{massless3})
require as input the directly transcribed massless 
commutator function $\Delta_0(\eta,\eta') = 
\int_{\eta'}^{\eta} \! ds\,a(s)^{-2}$. It solves  
$(a(\eta)^2 \partial_{\eta})^2 \Delta_0(\eta,\eta') =0$, where the 
field redefinition is not yet taken into account. (The latter 
generates an effective mass term and the computation would have to 
proceed differently). This leads to 
\begin{eqnarray}
	\label{chiasym3}
\frac{|\chi^{\rm SLE}_p(\eta)|^2}{a(\eta)^2} 
= \frac{\bar{a}}{2 p} + O(p) \,, \quad 
\quad 
\bar{a} = \frac{ \int\! d\eta \,a(\eta)^{-4} f^{\rm conf}(\eta)^2}%
{ \int\! d\eta \,f^{\rm conf}(\eta)^2}\,,
\end{eqnarray}
and similarly for $z$ replacing $a$. For large momentum 
the modulus square has the generic WKB asymptotics
\begin{eqnarray}
	\label{chiasym4} 
\frac{|\chi^{\rm SLE}_p(\eta)|^2}{a(\eta)^2} \asymp \frac{1}{2 p a(\eta)^2}
\Big\{ 1 + \frac{1}{2 p^2} \frac{\partial_{\eta}^2 a}{a} + 
O\Big( \frac{1}{p^4} \Big) \Big\}\,, 
\end{eqnarray}
and similarly for $z$ replacing $a$. As usual, the cross-over region 
needed for the power spectrum is not directly accessible via 
these expansions.  


\subsection{A model with pre-inflationary SLE}\label{sec5.2} 

To proceed, we consider an analytically soluble model, adopted 
from  \cite{InflIRmatch0}, where the seed formation time $\eta_*$ 
is $p$-independent and coincides with the end of a deSitter period. 
The deSitter period is preceded by one with kinetic energy domination.    
Computations of the power spectrum where a positive frequency 
solution in a pre-inflationary era is matched to a solution 
corresponding to accelerated expansion have been considered in 
\cite{InflIRmatch0,InflIRmatch1,InflIRmatch3,InflIRmatch4,InflIRmatch5}.

Following \cite{InflIRmatch0}, we use conformal time $\eta$ and 
consider an instantaneous transition between a kinetic dominated 
pre-inflationary period and de Sitter expansion. The scale factor reads
\begin{eqnarray}
\label{pinf01}
a(\eta)=
\begin{cases}
\sqrt{1+2H\eta}\,,& \eta\in (-\frac{1}{2H},0)\,,\\[2mm]
\frac{1}{1-H\eta}\,,& \eta \in [0,\frac{1}{H})\,,
\end{cases}
\end{eqnarray}
with the transition occurring at $\eta_1=0$, and $H$ denoting 
the (physical) Hubble parameter during inflation. The time 
of seed formation is $\eta_* = 1/H$ and the price to pay for the 
analytic solubility is the formal pole in the line element.

As seen in Section \ref{sec2} the modulus square of an SLE solution is 
strictly independent of the choice of fiducial solution. We are 
thus free to choose a convenient one, $S_p(\eta) = \chi_p(\eta)/a(\eta)$, 
in the process of evaluating $|\chi_p^{\rm SLE}(\eta)/a(\eta)|^2$ for  
a given window function $f \in C_c^{\infty}(- 1/(2H), 1/H)$.  
A useful choice adhering to the traditional Bunch-Davies 
solution during the deSitter period is 
\begin{eqnarray}
\label{pinf04}
S_p(\eta)&\! =\! & 
\begin{cases}
\alpha_p S^{\rm kin}_p(\eta)+\beta_p S^{\rm kin}_p(\eta)^\ast\,,& 
-\frac{1}{2H}<\eta\leq 0 \,, \\[2mm] 
S^{\rm BD}_p(\eta)\,, &0\leq  \eta<\frac{1}{H}\,,
\end{cases}
\end{eqnarray}
where 
\begin{eqnarray}
\label{pinf04a1}
S^{\rm kin}_p(\eta)&:= &\sqrt{\frac{\pi}{8H}} \,
H_0^{(2)}\bigg(p\eta+\frac{p}{2H}\bigg)\,,
\nonumber \\[1.5mm] 
S^{\rm BD}_p(\eta) &:=& \frac{e^{-ip(\eta-\frac{1}{H})}}{\sqrt{2p}}
(1 - H \eta) 
\Big(1+\frac{iH}{p}\frac{1}{1-H\eta}\Big)\,,
\end{eqnarray}
are solutions of (\ref{FLode4}) in their respective regimes.
The matching coefficients $\alpha_p\,,\beta_p$ are determined by 
demanding continuity of $S_p$ and $\partial_{\eta}S_p$ at the transition,
\begin{eqnarray}
\label{pinf04a}
\alpha_p&\! =\! & e^{ip/H} \sqrt{\frac{\pi p}{16 H}} 
\bigg[ H_0^{(1)}\Big(\frac{p}{2H}\Big)
-\Big(\frac{H}{p} -i \Big) H_1^{(1)}\Big(\frac{p}{2H}\Big)\bigg]\,,
\nonumber \\[1.5mm] 
\beta_p&\! =\! &  e^{ip/H} \sqrt{\frac{\pi p}{16 H}}
\bigg[- H_0^{(2)}\Big(\frac{p}{2H}\Big)
+\Big(\frac{H}{p} -i \Big) H_1^{(2)}\Big(\frac{p}{2H}\Big)\bigg]\,,
\end{eqnarray}
with $|\alpha_p|^2-|\beta_p|^2 = 1$ from the Wronskian condition.

This fiducial solution enters the SLE parameters 
$c_1,c_2$ and $\lambda_p,\mu_p$ from Section \ref{sec2.1}. The advantage of 
the choice (\ref{pinf04}) is that it leads to a relatively 
simple expression for the power spectrum in terms of the 
(numerically computed) SLE parameters $c_1$ and $c_2$. 
The SLE solution will however {\it not} be of the Bunch-Davies 
type during the deSitter period,
\begin{equation}
	\frac{\chi_p^{\rm SLE}(\eta)}{a(\eta)} =\lambda_p S^{\rm BD}(\eta)+\mu_p 
S^{\rm BD}(\eta)^\ast\,.
\end{equation}
For $\eta_* = 1/H$ the SLE's power spectrum (\ref{Pdef}) is 
given by 
\begin{eqnarray}
\label{pinf04b}
P_{\chi^{\rm SLE}}(p)&\! =\! & \frac{H^2}{(2\pi)^2}|\lambda_p-\mu_p|^2=
\frac{H^2}{(2\pi)^2} \frac{c_1+\Re c_2}{\sqrt{c_1^2-|c_2|^2}}\,.
\end{eqnarray} 
Here
\begin{eqnarray}
\label{pinf05}
c_1&\! =\! & \frac{1}{2} \int\! d\eta\, f(\eta)^2 a(\eta)^2 
\Big\{|\partial_{\eta}S_p(\eta)|^2+p^2 |S_p(\eta)|^2\Big\}\,,
\nonumber \\[1.5mm] 
c_2&\! =\! & \frac{1}{2} \int\!d\eta \,f(\eta)^2 a(\eta)^2 
\Big\{(\partial_{\eta} S_p(\eta))^2+p^2 S_p(\eta)^2\Big\}\,,
\end{eqnarray}
are determined by (\ref{pinf04}). With some slight caveats 
it follows from the earlier results that the right hand side is 
indeed a Bogoliubov invariant: by (\ref{ssle15}) this holds for 
$\sqrt{c_1^2 - |c_2|^2}$ and since $\lim_{\eta_0 \rightarrow 1/H} S_p^{\rm BD}(\eta_0) = 
iH/ \sqrt{2 p^3}$ one can interpret the first line of 
(\ref{ssle14}) as $\lim_{\eta_0 \rightarrow 1/H} J_p(\eta_0) = (H^2/p^3) 
(c_1 + \Re c_2)$. Further, the relation (\ref{chiasym3}) 
immediately suggests the low momentum asymptotics, while 
(\ref{chiasym4}) in combination with $\lim_{\eta \rightarrow 1/H} a(\eta)^{-2} =0$,
$\lim_{\eta \rightarrow 1/H} a(\eta)^{-3} \partial_{\eta}^2 a = 2 H^2$,
suggests $\lim_{\eta \rightarrow 1/H} |\chi_p^{\rm SLE}(\eta)/a(\eta)|^2 = 
H^2/(2 p^3) + O(p^{-5})$ for large $p$. The caveats are: that 
$\eta = 1/H$ lies at the boundary of the interval $[0, 1/H)$, 
that the line element (\ref{pinf01}) has a pole there, and that the 
window function may not have support in the deSitter phase only.      
We therefore present a more careful analysis of the small and large momentum behavior of $P_{\chi^{SLE}}(p)$, allowing for a generic window function 
with support in both the kinetic dominated and the deSitter period, thereby demonstrating that  the above conclusions are indeed valid.  

\begin{proposition}\label{pr5.1} Let $f\in C_c^\infty(-\frac{1}{2H},\,\frac{1}{H})$ 
be a window function for \eqref{pinf05}. Then
\begin{eqnarray*}
&(a)&\quad P_{\chi^{\rm SLE}}(p)=\frac{H^2}{(2\pi)^2} +O(p^{-2})\,\,\,
{\rm as} \,\,\,p\to \infty\,.
\nonumber \\[1.5mm] 
&(b)& \quad P_{\chi^{\rm SLE}}(p)=p^2 \frac{\bar{a}}{(2\pi)^2}+O(p^4)\,\,\,
{\rm as} \,\,\,p\to 0\,.
\end{eqnarray*}
\end{proposition}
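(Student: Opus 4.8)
The plan is to work entirely from the closed expression \eqref{pinf04b} and to read its two building blocks as the Bogoliubov invariants of Proposition \ref{pr2.2}. By \eqref{ssle15} the denominator is exactly $\mathcal{E}_p^{\rm SLE}=\sqrt{c_1^2-|c_2|^2}$; feeding $\lim_{\eta_0\to 1/H}S^{\rm BD}_p(\eta_0)=iH/\sqrt{2p^3}$ (read off \eqref{pinf04a1}) into the first line of \eqref{ssle14} identifies the numerator as $c_1+\Re c_2=(p^3/H^2)\,J_p(\tau_f)$, with $\tau_f:=\int^{1/H}a(s)^{-2}ds<\infty$, so that
\be
	P_{\chi^{\rm SLE}}(p)=\frac{p^3}{4\pi^2}\,\frac{J_p(\tau_f)}{\mathcal{E}_p^{\rm SLE}}=\frac{p^3}{2\pi^2}\,\big|T^{\rm SLE}_p(\tau_f)\big|^2
\ee
via \eqref{ssle12}. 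The decisive structural point, to be stressed first, is that the integrals \eqref{pinf05} defining $c_1,c_2$ run only over $\mathrm{supp}\,f$, a compact subset of the open interval that stays away from both the pole at $\eta=1/H$ and the junction at $\eta=0$; the pole enters only through the finite evaluation point $\tau_f$. Both parts then reduce to the small- and large-$p$ asymptotics of $c_1$ and $c_2$.

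For part (b) I would insert the convergent small-momentum expansions of Section \ref{sec3.2}. The denominator is untouched by the pole: by \eqref{sleexp5}, \eqref{massless2} one has $\mathcal{E}_p^{\rm SLE}=\varepsilon_1 p+O(p^3)$ with $\varepsilon_1^2=\tfrac14\int d\tau f^2\int d\tau' f^2 a^{2d-2}>0$. For the numerator I need $J_p(\tau_f)=J_0+O(p^2)$ with $J_0=\tfrac12\int d\tau f^2$; the only genuine check is that the coefficients stay finite at $\tau_f$ although $a(\tau_f)=\infty$, which holds because in the $s$-integral of $\Delta_1(\,\cdot\,,\tau_f)$ from \eqref{massless1} the growth $a(s)^{2d-2}$ near the pole is beaten by the vanishing factor $\tau_f-s\sim(1-H\eta)^3$. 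Dividing, $P_{\chi^{\rm SLE}}(p)=\frac{p^2}{(2\pi)^2}\,(J_0/\varepsilon_1)+O(p^4)$, and $J_0/\varepsilon_1=\bar a$ reproduces \eqref{massless3}, \eqref{chiasym3}.

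Part (a) is where the work lies. The leading order is universal: the leading WKB form of $S_p$ gives $a^2(|\partial_\eta S_p|^2+p^2|S_p|^2)=p+O(1/p)$, whence $c_1=\tfrac{p}{2}\int d\eta f^2+O(1/p)$, so the ratio tends to $1$ and $P\to H^2/(2\pi)^2$. To obtain the $O(p^{-2})$ remainder — i.e.\ to rule out an $O(1/p)$ term — I must bound $c_2$. In the deSitter part $S_p=S^{\rm BD}_p$ is purely positive frequency and its contribution to $c_2$ is oscillatory and rapidly decaying; the dangerous piece is the non-oscillatory cross term $\propto\alpha_p\beta_p$ produced in the kinetic part by $S_p=\alpha_p S^{\rm kin}_p+\beta_p (S^{\rm kin}_p)^{*}$, whose amplitude $a^2(|\partial_\eta S^{\rm kin}_p|^2+p^2|S^{\rm kin}_p|^2)=p+O(1/p)$ does not decay under integration. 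The main obstacle is therefore to show $\beta_p=O(p^{-2})$: the negative-frequency admixture is suppressed by \emph{two} powers because the junction at $\eta=0$ is only $C^1$, so the effective potential $\partial_\eta^2 a/a$ merely jumps (by $3H^2$). Concretely one expands the Hankel functions in \eqref{pinf04a} for large argument, where I expect the naive $O(1/p)$ contributions to $\beta_p$ to cancel, leaving a coefficient proportional to the potential jump $3H^2$, hence $\beta_p=O(p^{-2})$. Granting this, $c_2=\alpha_p\beta_p\cdot O(p)+\dots=O(1/p)$, so $\Re c_2/c_1=O(p^{-2})$ and $|c_2|^2/c_1^2=O(p^{-4})$, and \eqref{pinf04b} yields $P_{\chi^{\rm SLE}}(p)=\frac{H^2}{(2\pi)^2}+O(p^{-2})$. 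Securing this sign-sensitive Hankel cancellation is the crux of part (a).
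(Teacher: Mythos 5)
Your reduction of \eqref{pinf04b} to $P_{\chi^{\rm SLE}}(p)=\frac{p^3}{2\pi^2}\big|T_p^{\rm SLE}(\tau_f)\big|^2$ via \eqref{ssle14}, \eqref{ssle15} is exactly the identification the paper itself makes in the paragraph preceding the proposition, and your part (a) follows the paper's own route: split $c_1,c_2$ into kinetic and deSitter pieces, argue the oscillatory contributions are subleading, and reduce everything to the single estimate $\beta_p=O(p^{-2})$. That estimate is correct: expanding the Hankel functions in \eqref{pinf04a} to first order in $2H/p$, the $O(1/p)$ terms in the bracket cancel identically, leaving $|\beta_p|^2\sim\frac{9H^4}{16p^4}+O(p^{-6})$, which is precisely the value the paper quotes without derivation. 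So part (a) is sound, at essentially the same level of detail as the paper. (One slip: $\mathrm{supp}\,f$ need \emph{not} avoid the junction $\eta=0$ --- $f$ is only required to be compactly supported in $(-\tfrac{1}{2H},\tfrac1H)$; your split treatment does not actually use this claim, so it is harmless, but it should be dropped.)

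Part (b) contains a genuine gap. You insert the convergent expansions of Section \ref{sec3.2} into $J_p(\tau_f)/(2{\cal E}_p^{\rm SLE})$, where $\tau_f$ is the image of the pole $\eta=1/H$. But every convergence statement of Section \ref{sec3} --- Proposition \ref{Ssmallp}, Corollary \ref{cor3.3}, and hence the expansions \eqref{sleexp2}--\eqref{sleexp5} you invoke --- is proved under the hypothesis that $\omega_2$ is continuous, hence bounded, on the \emph{closed} interval: the contraction estimate \eqref{smallp7} rests on a sup bound for the kernel ${\cal K}\propto\omega_2^2$. Here $\omega_2(\tau)^2=a(\tau)^4\sim {\rm const}\,(\tau_f-\tau)^{-4/3}$ blows up at $\tau_f$, so nothing in Section \ref{sec3} guarantees that $\Delta_p(\cdot\,,\tau_f)$ and $J_p(\tau_f)$ even exist, let alone that $J_p(\tau_f)=J_0+O(p^2)$ and ${\cal E}_p^{\rm SLE}=\varepsilon_1 p+O(p^3)$ hold with controlled remainders. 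Your power-counting check that $\Delta_1(\cdot\,,\tau_f)$ from \eqref{massless1} is finite is necessary but far from sufficient: it controls one coefficient, not the convergence of the series nor the size of the remainder. This is exactly what the paper's proof of (b) names ``the main obstruction'' and spends its entire length removing: it re-runs the Banach fixed-point argument with a $p=0$ fiducial solution adapted to the deSitter phase, $S_0(\eta)=\frac{1}{\sqrt2}\big[-\frac{1}{3H}(1-H\eta)^3+i\big]$, which extends as a $C^1$ function to the closed interval $[\eta_i,1/H]$, so that the kernel built from $S_0$ stays bounded up to the pole, the series $S_p=\sum_n p^{2n}S_n$ converges there, and $\lim_{\eta\to1/H}T_p^{\rm SLE}(\eta)=\lambda_pS_p(1/H)+\mu_pS_p(1/H)^\ast$ is well defined, with $\lambda_p,\mu_p$ having convergent expansions as in \eqref{massless8}. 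To complete your route you would need an analogous statement --- for instance Volterra-type bounds built on the retarded kernel, whose effective weight $|\Delta_0(\tau_f,s)|\,a(s)^4\sim(\tau_f-s)^{-1/3}$ remains integrable at the pole --- establishing uniform convergence of $\sum_n\Delta_n(\tau,\tau_f)\,p^{2n}$ with a positive radius; verifying the first coefficient does not substitute for this.
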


\begin{proof}\

 (a) The large $p$ asymptotics are conveniently analyzed 
in terms of \eqref{pinf04b}, where the $\lambda_p,\,\mu_p$ coefficients 
refer to \eqref{pinf04}, \eqref{pinf04a1}, \eqref{pinf04a} as the 
fiducial solution for the SLE construction.
	
As the window function $f$ is allowed to have support both 
in the kinetic dominated and de Sitter periods, it is convenient 
to split the integrations in (\ref{pinf05})
\begin{eqnarray}\label{pinf010}
c_1=c_1^{<}+c_1^{>}\quad{\rm and}\quad c_2=c_2^{<}+c_2^{>}\,,
\end{eqnarray} 
with the $<(>)$ denoting the contribution from the kinetic 
dominated (de Sitter) regime. This takes into account the 
distinct forms of our fiducial solution (\ref{pinf04}) in the 
respective regimes. We may readily read off
\begin{eqnarray}
\label{pinf011}
c_1^> &\! =\! & \frac{p}{2} \int_{0}^{\frac{1}{H}}\! d\eta\,
f(\eta)^2 a(\eta)^2 (1-H\eta)^2
+\frac{1}{2}\frac{H^2}{2p} \int_{0}^{\frac{1}{H}}\! d\eta\,
f(\eta)^2 a(\eta)^2\,,
\nonumber \\[1.5mm] 
c_2^{>}&\! =\! &   \frac{1}{2} \int_{0}^{\frac{1}{H}} \!d\eta\,f(\eta)^2 
a(\eta)^2e^{-2ip(\eta-\frac{1}{H})}\bigg[iH-iH^2 \eta -\frac{H^2}{2p}\bigg]\,.
\end{eqnarray}
For the analysis of the $c_1^<,\,c_2^<$ terms, it will prove helpful to 
define  
\begin{eqnarray}
\label{pinf012}
\bar{c}_1&:=& \frac{1}{2} \int_{-\frac{1}{2H}}^0 \! d\eta \,f(\eta)^2 
a(\eta)^2 \Big\{|\partial_{\eta}S^{\rm kin}_p(\eta)|^2+p^2 |S^{\rm kin}_p(\eta)|^2\Big\}\,,
\nonumber \\[1.5mm] 
\bar{c}_2&:=& \frac{1}{2} \int_{-\frac{1}{2H}}^0 \! d \eta \,f(\eta)^2 a(\eta)^2 
\Big\{(\partial_{\eta} S^{\rm kin}_p(\eta))^2+p^2 S^{\rm kin}_p(\eta)^2\Big\}\,,
\end{eqnarray}
in terms of which we may express
\begin{eqnarray}\label{pinf014}
c_1^< &\! =\! & (|\alpha_p |^2 +|\beta_p|^2) \bar{c}_1+
2 \Re \big[ \alpha_p \beta_p^\ast \bar{c}_2\big]\,,
\nonumber \\[1.5mm] 
c_2^< &\! =\! & \alpha_p^2 \bar{c}_2+\beta_p^2 \bar{c}_2^{\,\,\ast }
+2\alpha_p \beta_p \bar{c}_1\,.
\end{eqnarray}
The $e^{-2ip(\eta-\frac{1}{H})}$ 
term in  the integrand of $c_2^>$  entails that $c_2^> \sim O(p^{-n})$ for 
any $n\in \mathbb{N}$ as $p\to \infty$. Hence, $c_2^>$ is negligible compared 
to $c_1^>$, for large enough $p$. Next, in order to understand the asymptotic 
behavior of (\ref{pinf012}), (\ref{pinf014}), it is sufficient to 
consider the leading asymptotic behavior of $S^{\rm kin}_p(\eta)$ 
as $p\to \infty$,
\begin{eqnarray}
\label{pinf015}
S^{\rm kin}_p(\eta) \sim \frac{1}{\sqrt{4Hp}}\bigg( \eta
+\frac{1}{2H}\bigg)^{-\frac{1}{2}}
e^{-ip(\eta+\frac{1}{2H})+i \pi/4}\big( 1+O(p^{-1})\big)\,,
\end{eqnarray}
leading to 
\begin{eqnarray}
\label{pinf016}
\bar{c}_1 &\sim&  \frac{p}{2} \int_{-\frac{1}{2H}}^0 \! d\eta\,f(\eta)^2\,,
\\[2mm] 
\bar{c}_2 &\sim & \frac{1}{2} \int_{-\frac{1}{2H}}^0 \!d\eta\,
f(\eta)^2a(\eta)^2 e^{-2ip(\eta+\frac{1}{2H})} \bigg[\frac{H}{(1+2H\eta)^2}
-\frac{iH}{2p(1+2H\eta)^3}\bigg]\big( 1+O(p^{-1})\big)\,.
\nonumber 
\end{eqnarray} 
As before, the presence of the $e^{-2ip(\eta+\frac{1}{2H})}$ entails that 
$\bar{c}_2 \ll \bar{c}_1$ as $p\to \infty$. 

Using (\ref{pinf014}) to express $|c_2|^2/c_1^2$ in terms of 
$c_1^>,\,c_2^>,\,\bar{c}_1,\,\bar{c}_2$, we may disregard relative 
contributions of $c_2^>\,, \bar{c}_2$ to $|c_2|^2/c_1^2$, and find
\begin{eqnarray}\label{pinf017}
	|c_2|^2&\! =\! & 4|\beta_p|^2 (1+|\beta_p|^2)\bar{c}_1^2\,,
	\nonumber \\[1.5mm] 
	c_1^2 &\! =\! & (\bar{c}_1+c_1^>)^2 + 4 |\beta_p|^2\big[(1+|\beta_p|^2)\bar{c}_1^2 + |\beta_p|^2 \bar{c}_1c_1^>\big]\,,
\end{eqnarray}
where we have used the fact that $|\alpha_p|^2-|\beta_p|^2=1$ to write $\alpha_p$ in terms of $\beta_p$. Examining (\ref{pinf011}) and (\ref{pinf016}), it is clear that $\bar{c}_1$ and $c_1^>$ have the same leading large $p$ behavior, and from (\ref{pinf04a}) it follows that $|\beta_p|^2 \sim \frac{9 H^4}{16 p^4}+O(p^{-6})$. Thus we use  (\ref{pinf017}) to estimate 
\begin{eqnarray}
\frac{|c_2|^2}{c_1^2} \sim O(p^{-4})\,.
\end{eqnarray}
Since 
\begin{eqnarray}
\label{pinf09a}
\mu_p&\! =\! & \frac{1}{\sqrt{2}}\sqrt{\frac{1}{1-\frac{|c_2|^2}{c_1^2}}-1}\,, 
\quad |\lambda_p|= \sqrt{1+\mu_p^2} \,,
\end{eqnarray}
this establishes part (a) of Proposition \ref{pr5.1}.

(b) The main obstruction to using \eqref{chiasym3} to infer the 
result is that the limit $\eta\to 1/H$ of the small $p$ SLE expansion 
is not a-priori well-defined. We remove this obstruction by a small 
modification of Proposition \ref{Ssmallp}.

In both the kinetic dominated and deSitter regimes, the mode equation reads
$S''_p(\eta)+2\frac{a'}{a}S'_p(\eta)+p^2S_p(\eta)=0$. Consistent with 
(\ref{chiasym1}) we choose the following solution for the $p=0$ equation
\begin{eqnarray}
S_0(\eta)=
\begin{cases}
\frac{1}{\sqrt{2}}\Big[\frac{\ln(1+2H\eta)}{2H}-\frac{1}{3H}+i\Big]\,,
& \text{kinetic domination}\,,
\\[3mm]
\frac{1}{\sqrt{2}}\Big[ -\frac{1}{3H}(1-H\eta)^3+i\Big]\,,
& {\rm deSitter}\,.
\end{cases}
\end{eqnarray}
Both cases satisfy $[\partial_\eta S_0\,S_0^\ast -S_0\,
\partial_\eta S_0^\ast](\eta)=-ia(\eta)^{-2}$, as well as 
\begin{eqnarray}
\lim_{\eta\to 1/H}S_0(\eta)=
\frac{i}{\sqrt{2}}\quad{\rm and}\quad \lim_{\eta\to 1/H}\partial_\eta S_0(\eta)=0\,.
\end{eqnarray} 
This shows that $S_0$ extends uniquely to a continuous function 
on $(-1/(2H),\,1/H]$.
	
Choosing some $0<\eta_i<1/H$ such that ${\rm supp}\,f\subset[\eta_i,\,1/H]$, 
it clear that a solution of the integral equation
\begin{eqnarray}
S(\eta)&\! =\! & S_0(\eta)-p^2\int_{\eta_i}^{1/H}\!K(\eta,\eta')S(\eta')d\eta'\,,
\nonumber \\[1.5mm] 
K(\eta,\eta')&\! =\! & i \theta(\eta-\eta')S_0(\eta)S_0(\eta')^\ast+
\theta(\eta'-\eta)S_0(\eta)^\ast S_0(\eta')\,,
\end{eqnarray}
is a solution of the mode equation on $(\eta_i,1/H)$. Since $S_0$ extends 
to a $C^1$ function on the closed interval $[\eta_i,\,1/H]$, the proof of 
Proposition \ref{Ssmallp} carries over on the Banach space 
$\big(C([\eta_i,1/H],\mathbb{C}^2),\,\left\lVert \cdot \right\rVert)_{\sup}\big)$.
	
Hence we have a convergent series $S_p(\eta)=\sum_{n=0}^\infty 
p^{2n}S_n(\eta)$, which we take as the fiducial solution for the SLE 
in the small $p$ regime. This then has a well-defined limit as 
$\eta\to 1/H$, namely
\begin{eqnarray}
\lim_{\eta \to 1/H}T^{\rm SLE}_p(\eta)&\! =\! & \lambda_p
\lim_{\eta \to 1/H}S_p(\eta)+\mu_p \lim_{\eta \to 1/H}S_p(\eta)^\ast 
\nonumber \\[1.5mm] 
&\! =\! & \lambda_pS_p(1/H)+\mu_pS_p(1/H)^\ast \,.
\end{eqnarray}
Both $p^\frac{1}{2}\lambda_p$ and $p^\frac{1}{2}\mu_p$ admit convergent 
power series expansions as in \eqref{massless8}, leading to 
\begin{eqnarray}
|T_p^{\rm SLE}(1/H)|=\frac{\bar{a}}{2p}+O(p)\,,
\end{eqnarray}
which proves part (b).
\end{proof}

The proposition provides an analytical description of the power 
spectrum's small and large momentum behavior. For intermediate 
momenta we evaluate $\chi_p^{\rm SLE}(\eta)$ numerically. 
For the numerical implementation a choice of 
window function in $ C_c^\infty(-\frac{1}{2H},\frac{1}{H})$ enters. A useful 
one-parametric family arises as follows. From the standard smoothened 
step function
\begin{eqnarray}
\label{pinf07}
 h(y):=
 \begin{cases}
 0 & y\leq 0\,,\\
 \frac{e^{-1/y}}{e^{-1/y}+e^{-1/(1-y)}}& 0<y<1\,,\\
 1 & y \geq 1\,,
 \end{cases}
 \end{eqnarray}
 we define the bump function of width $1+w$ centered at the origin,
 \begin{eqnarray}
\label{pinf08}
 {\tt bump}(y,w):=1-h\bigg( \frac{y^2-w^2}{(w+1)^2-w^2}\bigg)\,,
 \end{eqnarray}
where $w$ is the ratio of ``plateau'' of the bump to the ``walls'' of the bump.
Finally we define
\begin{eqnarray}
\label{pinf09}
F(\eta,\eta_1,\eta_2;w):={\tt bump}\bigg(
\frac{\eta-(\frac{\eta_1+\eta_2}{2})}{\frac{\eta_1+\eta_2}{2(w+1)}},w\bigg)\,,
\end{eqnarray} 
a positive smoothened ``top hat'' function centered at 
$\frac{\eta_1+\eta_2}{2}$. Here $\eta_1<\eta_2$ are the ``ends'' of 
the hat, specifying the cosmological period over which $F = (f^{\rm cosm})^2$ 
has support. 
The results of the 
power spectrum for various values of $\eta_1,\,\eta_2$ and $w=0.5$ 
are shown in the following figure
\begin{figure}[h]
\centering
\includegraphics[scale=0.4]{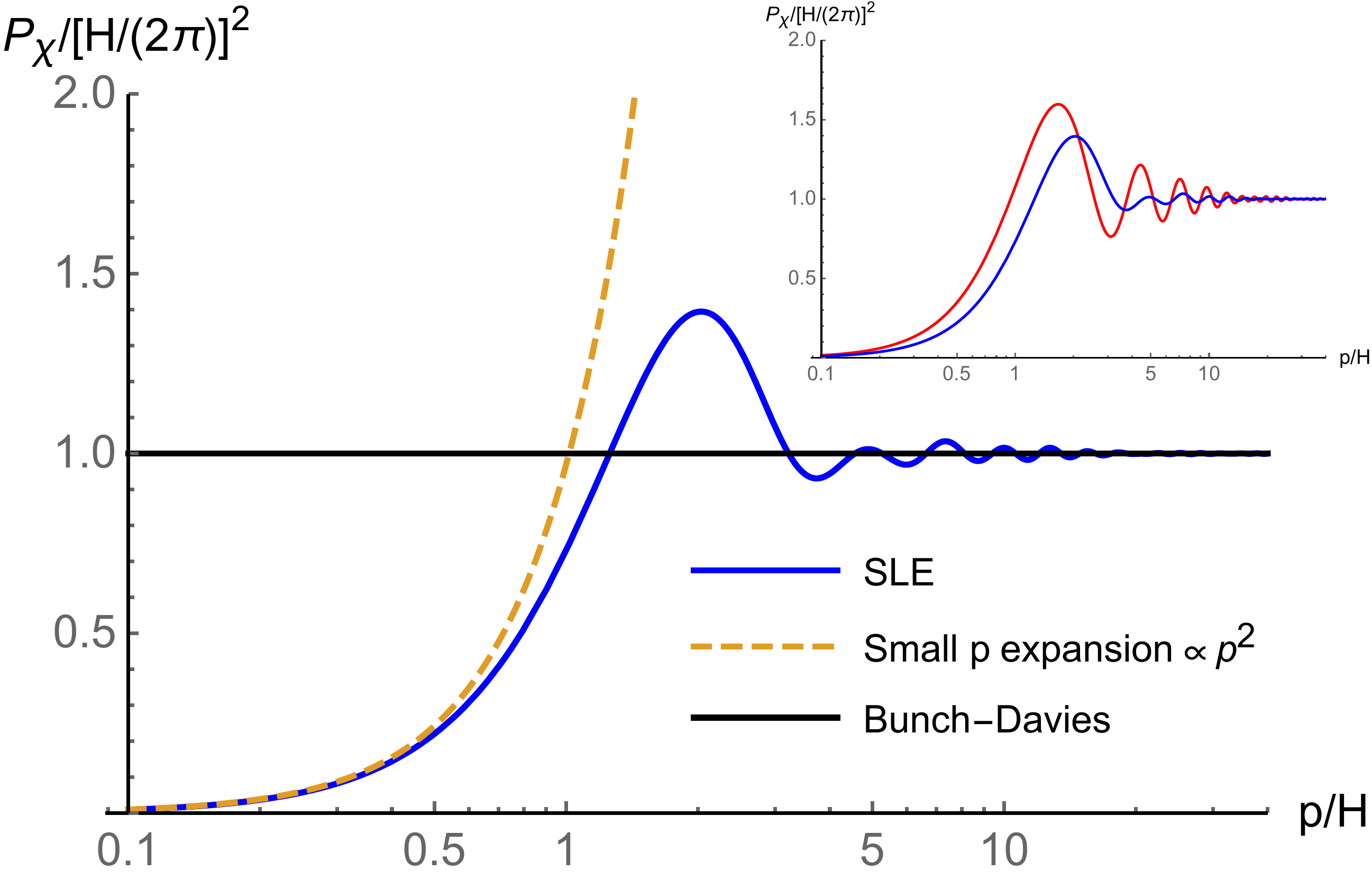}
\caption{\small Power spectrum for a primordial SLE and window function 
(\ref{pinf09}) with support in $[-0.3, 0.5]/H$. The insert shows in 
red the comparison with a situation where the window function 
has support in the pre-inflationary period $[-0.3,0]/H$ only.}
\end{figure}

\newpage
\section{Conclusions} 

The States of Low Energy (SLE) were introduced as Hadamard 
states \cite{Olbermann} on generic Friedmann-Lema\^{i}tre 
spacetimes with a physically appealing defining property.    
Here we showed that SLE have several bonus properties 
which make them mathematically and physically even more  
attractive. These bonus properties (a) -- (e) have been 
listed in the introduction and need not be repeated here. 
Instead, we comment on some extensions and future directions. 

As seen, the minimization over initial data results 
in an instructive alternative expression for the SLE 
solution solely in terms of the commutator function. 
A minimization over boundary data would likewise 
be relevant and occurs naturally when placing the basic wave 
equation into the setting of a regular Sturm-Liouville problem. 
Taking advantage of the literature on non-regular 
Sturm-Liouville problems might allow one to extend the 
SLE construction systematically to situations where the 
coefficient functions become singular within the interval 
considered. Covering the big bang singularity is of prime 
interest, but other singular points may be worthwhile 
treating as well, as the model from Section \ref{sec5} illustrates. 

The computation of the power spectrum requires 
access to the cross-over regime in the (time, momentum) 
plane. Ideally, one would be able to treat also the 
cross-over regime analytically by a suitable expansion.
Physicswise one would want to treat fully realistic 
cosmic evolutions where a pre-inflationary SLE replaces 
the positive frequency Hankel functions 
\cite{InflIRmatch1,InflIRmatch4,InflIRmatch5}
and to propagate the resulting primordial power spectrum 
to the actual CMB.

Finally, it would be desirable to have a streamlined proof 
of the Hadamard property directly for SLE and including the 
massless case. The  adiabatic vacua are a time-honored conduit 
and should be replaceable by more directly controllable WKB results
for the large momentum regime, see e.g.~\cite{Nemes}.

\noindent
{\bf Acknowledgements:} This research was supported by Pitt-PACC. R.B. also acknowledges support by the Andrew Mellon Predoctoral Fellowship from the University of Pittsburgh.



\newpage

\end{document}